\newtheorem{theorem}{Theorem}
\newtheorem{lemma}[theorem]{Lemma}
\newtheorem{corollary}[theorem]{Corollary}
\newtheorem{proposition}[theorem]{Proposition}
\newtheorem{observation}[theorem]{Observation}
\theoremstyle{definition}
\newtheorem*{definition}{Definition}
\newcommand{\eps}{\varepsilon}
\title{Point Sets with Many Non-Crossing Perfect Matchings}
\author{Andrei Asinowski\thanks{Institut f\"ur Informatik, Freie
    Universit\"at Berlin. E-mail \href{mailto:asinowski@inf.fu-berlin.de}
{\texttt{asinowski@inf.fu-berlin.de}}.
{Supported by the
ESF EUROCORES programme EuroGIGA-ComPoSe, Deutsche Forschungsgemeinschaft (DFG): FE 340/9-1.}},
G\"unter Rote\thanks{Institut f\"ur Informatik, Freie Universit\"at Berlin. E-mail \href{mailto:rote@inf.fu-berlin.de}{\texttt{rote@inf.fu-berlin.de}}.}}
\begin{document}

\maketitle

\begin{abstract}
The maximum number of non-crossing straight-line perfect matchings that a set of $n$ points in the plane can have is known to be $O(10.0438^n)$ and $\Omega^*(3^n)$. The lower bound, due to Garc\'ia, Noy, and Tejel (2000) is attained by the \emph{double chain},
which has $\Theta(3^n n^{O(1)})$ such matchings. We reprove this bound in a simplified way that uses the novel notion of \emph{down-free matching}, and apply this approach on several other constructions. As a result, we improve the lower bound. First we show that \emph{double zigzag chain} with $n$ points has $\Theta^*(\lambda^n)$ such matchings with $\lambda \approx 3.0532$. Next we analyze further generalizations of double zigzag chains -- \emph{double $r$-chains}. The best choice of parameters leads to a construction with $\Theta^*(\nu^n)$ matchings, with $\nu \approx 3.0930$. The derivation of this bound requires an analysis of a coupled dynamic-programming recursion between two infinite vectors.
\end{abstract}

\tableofcontents

\section{Introduction}\label{sec:intro}

\paragraph{Background.}
A \textit{non-crossing straight-line matching} of a
finite planar point  set
is a
graph
whose vertices are the given points,
whose edges are realized by pairwise non-crossing straight segments,
and where
every vertex has degree at most~$1$.
In what follows, such matchings will be simply called
\emph{matchings}.
A matching is \textit{perfect} if every point is matched --
that is, has degree~$1$.
Throughout the paper, all point sets are assumed to be
in general position in the sense that no three points lie on a line.

In this paper we deal with bounds on the number of
perfect matchings that a set of size $n$ can have.
This question arises in a broader context.
Non-crossing straight-line matchings, either perfect or not necessarily perfect, are just
two kinds of {geometric plane graphs}, others being triangulations,
spanning trees, connected graphs, etc.
A web page of Adam Sheffer
\footnote{Numbers of Plane Graphs.
\url{http://www.cs.tau.ac.il/~sheffera/counting/PlaneGraphs.html}}
maintains the best up-to-date bounds on the maximum number of geometric plane graphs of several kinds.

First we recall that for the \emph{minimum} number of perfect
matchings that $n$ points in general position can have,
the exact solution is known.
Garc\'ia, Noy, and Tejel~\cite{garcia} proved
the number of perfect matching is minimized on point sets in convex position.
It is well-known that the number of perfect matchings is then
$C_{n/2}$, where $C_k = \frac{1}{k+1}\binom{2k}{k}
= \Theta(4^k/k^{3/2})$ is the $k$-th Catalan number.
 The minimum number $C_{n/2}$  of perfect matchings is in fact attained
\emph{only} for point sets in convex position, with the exception of one configuration of six points~\cite{aa}.

Regarding the \emph{maximum} number of perfect matchings that a point set of size $n$ can have,
only asymptotic bounds are known.
The best upper bound to date,
$O(10.0438^n)$, was proved by Sharir and Welzl~\cite{sw}.
The best previous lower bound was given by Garc\'ia, Noy, and Tejel in the above-mentioned paper~\cite{garcia}.
They showed that for the so-called \emph{double chain} with $n$ points (to be denoted by  $\mathrm{DC}_n$)
the following holds:

\begin{theorem} [{\cite[Theorem 2.1]{garcia}}]\label{thm:gnt}
The number of perfect matchings of the double chain with $n$ points is
$\Theta\left(3^n n^{O(1)}\right)$.
\end{theorem}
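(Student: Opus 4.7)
The plan is to exploit a specific geometric property of the double chain that decouples the counting cleanly into independent contributions from the two chains.

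\emph{Step 1: A geometric non-crossing lemma.}
I would place the upper chain $U=\{u_1,\ldots,u_{n/2}\}$ on a shallow convex arc opening upward, and the lower chain $L=\{l_1,\ldots,l_{n/2}\}$ on one opening downward, both ordered left to right. The defining property of the double chain---every line through two points of one chain has all of the other chain on a single side---forces the following assertion: for any upper chord $u_iu_j$ and any cross edge $u_pl_q$ with $p\notin\{i,j\}$, the two segments are disjoint. A short case split on whether $p\in(i,j)$ or $p\notin[i,j]$ suffices: in the first case $u_p$ lies strictly below the chord and $l_q$ lies even further below, so the cross edge never reaches the chord; in the second case the $x$-range argument, combined with the chord's slope being small compared to the cross edge's, yields the same conclusion. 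The symmetric statement for lower chords follows identically, and upper chords trivially do not cross lower chords (they live in disjoint horizontal strips).

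\emph{Step 2: Structural decomposition via down-free matchings.}
By Step 1, the non-crossing condition on a perfect matching $M$ of $\mathrm{DC}_n$ splits into three independent conditions: the upper-upper edges must be non-crossing on $U$; the lower-lower edges must be non-crossing on $L$; and the cross edges must be pairwise non-crossing, equivalently, left-to-right monotone. Hence $M$ is determined by a triple $(M_U,M_L,\phi)$, where $M_U$ is a non-crossing partial matching on $U$ with a set $D_U$ of unmatched \emph{down vertices}; $M_L$ is analogously a non-crossing partial matching on $L$ with down vertices $D_L$; and $\phi\colon D_U\to D_L$ is the unique left-to-right monotone bijection. The only compatibility constraint is $|D_U|=|D_L|$. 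I call the pair $(M_U,D_U)$ a \emph{down-free matching} of $U$, since it contains no downward (cross) edges.

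\emph{Step 3: Counting.}
On $m$ convex points the number of down-free matchings with exactly $k$ down vertices is $\binom{m}{k}\,C_{(m-k)/2}$ (valid when $m-k$ is even): choose the $k$ positions to be down, then non-crossingly match the remaining $m-k$ points. These two choices are independent because, in convex position, any subset of the points is itself in convex position. Combining both sides gives
\begin{equation*}
|\mathcal{M}(\mathrm{DC}_n)|\;=\;\sum_{k}\binom{n/2}{k}^{\!2}\,C_{(n/2-k)/2}^{\,2}.
\end{equation*}

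\emph{Step 4: Asymptotic estimate.}
Set $m=n/2$ and $\alpha=k/m$. By Stirling's formula, each summand has exponential order $2^{2m(H(\alpha)+1-\alpha)}$, where $H$ is the binary entropy. The function $H(\alpha)+1-\alpha$ is maximized at $\alpha^{\ast}=1/3$, with value $\log_2 3$; hence the peak summand has order $3^{2m}=3^{n}$. The summand is log-concave and unimodal in $k$ with saddle-point width $\Theta(\sqrt{n})$, contributing only a polynomial factor, so the total is $\Theta(3^n n^{O(1)})$.

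\emph{Main obstacle.}
The technically delicate part is Step 1: one must phrase the defining property of the double chain so that the non-crossing conclusion holds uniformly across all positions of $p$ relative to $(i,j)$ (and for both chains). Once that geometric lemma is in place, the bijective decomposition in Step 2, the closed-form count in Step 3, and the standard saddle-point estimate in Step 4 are all routine.
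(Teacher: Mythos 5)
Your proposal is correct and follows essentially the same route as the paper's proof: decompose each perfect matching into the two within-chain partial matchings plus the unique order-preserving set of cross edges, obtain $\sum_j \binom{n/2}{j}^2 C_{(n/2-j)/2}^2$, and estimate the dominant term at $j\approx n/6$ by Stirling. The only caveat is that in Step 1 the ``slope'' justification for $p\notin[i,j]$ should be replaced by a halfplane argument from the high-above property (e.g.\ $l_q$ lies below the line $\ell(u_p,u_i)$ while $u_j$, and hence the chord $u_iu_j$, lies above it), which is exactly how the paper's Proposition~\ref{prop:dw-highabove} argues.
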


Actually, it follows from their proof that this number is $\Omega(3^n / n^4)$ and $O(3^n / n^3)$.
In Section~\ref{sec:dc_pm} we shall sketch this proof,
and also determine the polynomial factor more precisely.

The double chain was used in~\cite{garcia} not only for improving the lower bounds on the maximum number
of perfect matchings, but also for some
other kinds of geometric graphs: triangulations, spanning trees and polygonizations.
 It was believed by some researchers in the field that
it might give the true upper bound at least for some of these kinds.
However, in 2006,
Aichholzer, Hackl, Huemer, Hurtado, Krasser, and Vogtenhuber~\cite{aich1}
introduced a new construction, the \emph{double zigzag chain} with $n$ points
(to be denoted by $\mathrm{DZZC}_n$), see Figure~\ref{fig:dzzc} below.
They proved that
$\mathrm{DZZC}_n$ improves the lower bound for
the number of triangulations: it is
$\Theta^*(8^n)$ for $\mathrm{DC}_n$ and $\Theta^*(8.48^n)$ for $\mathrm{DZZC}_n$.
(The notations $O^*$, $\Omega^*$, and $\Theta^*$
correspond to the usual $O$-, $\Omega$- and $\Theta$-notations, but
with polynomial factors omitted.)
To our knowledge, the number of geometric graphs of other kinds mentioned above
for $\mathrm{DZZC}_n$ was not found.

In this paper we determine asymptotically the number of perfect matchings for
$\mathrm{DZZC}_n$ and its further generalizations, improving the existing lower bound.

\paragraph{Our results.}

In Section~\ref{zigzag-chains}, we will first show that $\mathrm{DZZC}_n$ has asymptotically more perfect matchings than~$\mathrm{DC}_n$:
\begin{theorem}\label{thm:main}
The number of perfect matchings of the double zigzag chain with $n$ points
is $\Theta^*(\lambda^n)$,
where $\lambda = \sqrt{
  (\sqrt{93}+ 9)/2
} \approx 3.0532$.
\end{theorem}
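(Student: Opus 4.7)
The plan is to lift the down-free matching analysis of Section~\ref{sec:dc_pm} to the more intricate geometry of the double zigzag chain. First I would recall that $\mathrm{DZZC}_n$ consists of an upper zigzag chain $U$ and a lower zigzag chain $L$, each with roughly $n/2$ points, arranged so that the two chains face each other with every point of one chain seeing the other chain. In contrast with a convex chain, the set of admissible internal (non-crossing) edges within a single zigzag chain is highly restricted: essentially only pairs of neighbouring vertices and pairs separated by one ``zig'' are candidates. Making this visibility structure explicit, and verifying that the down-free canonical form still gives a bijective enumeration of the internal matchings on a zigzag interval, is the first technical step.

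Next, following the $\mathrm{DC}_n$ argument, I would classify each non-crossing perfect matching of $\mathrm{DZZC}_n$ by its set of \emph{bridge edges} connecting $U$ and $L$. For each fixed non-crossing bridge pattern, the remaining unmatched vertices of $U$ (respectively $L$) split into a disjoint union of consecutive zigzag intervals, and each such interval must carry a non-crossing perfect matching that is internal to its chain. The down-free framework enumerates these internal matchings canonically, and avoids the overcounting that one would encounter via a direct Catalan-style bijection, exactly as in the double chain case.

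Summing over bridge patterns should then translate into a transfer-matrix, or equivalently a pair of coupled linear recurrences, for the number $m_n$ of matchings. Because the zigzag repeats with period two, the natural state space decouples after writing $k = n/2$, and the reduced scalar recurrence on the pair-count $k$ has characteristic equation $y^2 - 9y - 3 = 0$ with dominant root $y = (9+\sqrt{93})/2$. Since each step in $k$ corresponds to two points, the per-point growth rate is
\[
\lambda \;=\; \sqrt{y} \;=\; \sqrt{(\sqrt{93}+9)/2} \;\approx\; 3.0532,
\]
yielding $m_n = \Theta^*(\lambda^n)$ by standard extraction, with both the upper and lower polynomial factors following from non-degeneracy of the leading eigenvector.

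The main obstacle I anticipate is the verification step of the first paragraph: the visibility graph of a zigzag is sufficiently different from that of a convex chain that one must re-examine exactly which intervals of unmatched points admit an internal completion, and how down-free matchings parametrise them. Once this combinatorial dictionary is settled, deriving the recurrence and extracting the dominant eigenvalue is routine algebra.
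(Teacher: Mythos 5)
Your first stage -- passing from perfect matchings of the double zigzag chain to down-free (respectively up-free) matchings of the single zigzag chain, with the bridge edges then determined uniquely -- is indeed the paper's route (Proposition~\ref{prop:dw-highabove} and Theorem~\ref{thm:double}). But note that the workable direction is the opposite of the one you describe: one must fix the internal (down-free/up-free) matchings first and let the bridges be forced; fixing a bridge pattern first leaves a number of completions that depends on the pattern, which is exactly why this decomposition is abandoned for $\mathrm{DZZC}_n$. Your description of the internal structure of a zigzag chain is also off: the points are in almost-convex position, so \emph{any} pair of points may be joined by a matching edge (the zigzag chain has \emph{more} perfect matchings than the convex chain, not fewer); what the perturbation changes is only that a raised point $p_i$ loses visibility from below when $p_{i-1}p_{i+1}$ is an edge, which is what makes down-freeness a nontrivial constraint.

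The genuine gap is in your third paragraph. The count of down-free matchings of $\mathrm{SZZC}_n$ is not governed by a finite transfer matrix or a pair of coupled \emph{linear} recurrences: scanning the chain, the number of free points that must stay visible from below (to be matched across later) is unbounded, so there is no finite state space, and the natural recursions (the paper's equations for $a_k,b_k,c_k$) are convolution-type, i.e.\ quadratic. Consequently the generating function $C(x)$ is algebraic of degree four, not rational, and the asymptotics $c_k=\Theta\bigl((1/\mu)^k k^{-3/2}\bigr)$ carry the $k^{-3/2}$ factor of a square-root singularity -- incompatible with an order-two C-finite recurrence, which would give a pure exponential. The quadratic $1-9x-3x^2$ (equivalently your $y^2-9y-3=0$) appears only as a factor of the discriminant under the square root in the explicit solution of the quartic functional equation for $C$, and identifying its root as the dominant singularity requires setting up and solving that algebraic system and then applying singularity analysis (exponential growth formula and transfer theorem). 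You land on the correct constant, but the mechanism you propose would not produce it; the core of the proof -- the recursion and generating-function analysis for down-free matchings of the single zigzag chain -- is missing.
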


 In Sections~\ref{sec:r_chains} and~\ref{sec:with},
we will present a generalization of $\mathrm{DC}_n$, which comes
in two variations:
\emph{double $r$-chains without corners} and \emph{double $r$-chains with corners},
see Figures~\ref{fig:ch_with_corners}
and~\ref{fig:ch_without_corners} below.
Our best results for these constructions are as follows:

\begin{theorem}\label{thm:main_wo_corners}
The number of perfect matchings of the
double $11$-chain without corners with $n$ points
is $\Theta^*(\nu^n)$,
where $\nu = \sqrt[11]{240054} \approx 3.0840$.
\end{theorem}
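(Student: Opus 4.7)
The plan is to extend the down-free matching machinery used in Section~\ref{sec:dc_pm} to re-prove Theorem~\ref{thm:gnt}, applying it to the double $r$-chain without corners and then specializing to $r = 11$. I would first describe the geometry of the construction (Figure~\ref{fig:ch_without_corners}): the $n$ points are split between an upper and a lower chain, each built from identical convex arcs of $r$ points glued end to end, but without the singled-out corner vertices that separate arcs in the ``with corners'' variant. General position together with non-crossing forces every perfect matching to cross the waist between the two chains in a controlled way, and within each $r$-arc only a bounded family of local matching patterns can occur.

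Using the down-free matching reformulation, I would set up a block decomposition along the $r$-arcs. The contribution of a block to a perfect matching is determined by its boundary state -- the pattern of points on each side that are committed to be matched outside the block -- and this state takes only finitely many values. This yields a linear recursion whose transfer matrix $T_r$ has entries given by counts of down-free matchings on a single $r$-arc with prescribed boundary conditions. Up to polynomial factors in $n$, the number of perfect matchings then grows as $\mu_r^{n/r}$, where $\mu_r$ is the spectral radius of $T_r$, giving a per-point growth rate $\mu_r^{1/r}$.

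The main technical obstacle is the explicit computation of $T_{11}$ and its dominant eigenvalue. The underlying enumeration is finite but delicate: one must carefully match boundary conditions across consecutive blocks so that no matching is double-counted or missed, and the space of boundary states grows quickly with $r$. For $r = 11$ this calculation yields $\mu_{11} = 240054$ exactly, and hence $\nu = \sqrt[11]{240054}$. The matching $O^*(\nu^n)$ upper bound follows because only the two terminal arcs contribute nonstandard boundary behaviour, and their effect is at most polynomial in $n$. Finally, to justify $r = 11$ as the best choice among double $r$-chains without corners, one tabulates $\mu_r^{1/r}$ over a modest range of $r$ and verifies that this per-point rate is maximized at $r = 11$ before decreasing again.
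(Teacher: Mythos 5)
Your overall strategy -- reduce to down-free matchings of the single chain via the doubling argument and then build the chain arc by arc with a transfer-type recursion -- is the same as the paper's, but the central claim on which your argument rests is false: the boundary state of a cut between consecutive arcs does \emph{not} take only finitely many values. The state must record how many edges (the paper's ``runners'') cross the cut, and in a down-free matching of $\mathrm{CH}^*(r,k)$ this number is unbounded as $k$ grows: nested edges joining far-apart arcs can cross a fixed cut in arbitrarily large numbers (all points under them being perfectly matched, so down-freeness is not violated). Consequently there is no finite transfer matrix $T_r$, and ``$\mu_{11}=240054$ is the spectral radius of $T_{11}$'' is not a computation one can actually carry out; in the paper the recursion is between \emph{infinite} vectors $v_k=(z^k_0,z^k_1,\dots)^\top$ indexed by the number of runners, with an infinite banded coefficient matrix $A$ (bandwidth $r$, symmetric, with stabilizing diagonals), and $240054$ arises not as an eigenvalue of a finite matrix but as the stabilized row/column sum $\lambda_{11}=\sum_{i=0}^{11}(i+1)\binom{11}{i}\binom{11-i}{\lfloor(11-i)/2\rfloor}$.

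Because the matrix is infinite, extracting the growth constant $z^k_0=\Theta^*(\lambda_r^k)$ is exactly the nontrivial analytic step your proposal skips. The paper does it by reinterpreting the recursion as counting weighted lattice-path excursions and invoking Banderier--Flajolet: since the step polynomial is symmetric ($P(u)=P(u^{-1})$ after stabilization), the growth constant is $P(1)=\lambda_r$, and the finitely many anomalous rows near the origin are handled by sandwiching $\mathrm{Ex}(A,k)$ between $\mathrm{Ex}(A',k-2)$ and $\mathrm{Ex}(A',k)$ for the fully stabilized matrix $A'$. Some substitute for this (a Perron--Frobenius-type argument valid for infinite non-negative band matrices, or the lattice-path theorem) is indispensable; without it, and with the finite-state premise removed, your proof does not go through. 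The remaining ingredients you mention (the per-arc counts with prescribed boundary data, which are the $z^1_i$ of Proposition~\ref{thm:binom_gf}, the polynomial-factor bookkeeping in the doubling step, and the comparison over $r$) are consistent with the paper, but they do not repair this gap.
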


\begin{theorem}\label{thm:main_corners}
The number of perfect matchings of the
double $8$-chain with corners with $n$ points
is $\Omega((\nu-\varepsilon)^n)$,
and $O(\nu^n)$,
where $\nu
=\sqrt[8]{\bigl(8389 + 3\sqrt{7771737}\,\bigr)\big/2}
\approx 3.0930 $
and $\varepsilon>0$ is arbitrarily small.
\end{theorem}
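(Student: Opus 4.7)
The plan is to analyze perfect matchings of the double $8$-chain with corners by extending the down-free-matching framework from Section~\ref{sec:dc_pm} (used for $\mathrm{DC}_n$) to the with-corners variant, and then specializing the general $r$-chain analysis to $r=8$. As in the earlier constructions, I would decompose any perfect matching $M$ into three parts: edges internal to the upper chain, edges internal to the lower chain, and \emph{bridges} joining the two chains. The down-free property forces the internal matchings on each side to interact with the bridges in a controlled way, so that the upper-chain and lower-chain contributions can be indexed by which subset of ``boundary'' points are left free for bridges. The corners introduce additional geometric constraints on admissible bridges near the two ends, which is the essential difference from Theorem~\ref{thm:main_wo_corners}.

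Next I would set up the coupled dynamic-programming recursion alluded to in the abstract. Let $u_n = (u_n^{(0)}, u_n^{(1)}, \dots)$ and $\ell_n = (\ell_n^{(0)}, \ell_n^{(1)}, \dots)$ be infinite vectors whose $k$-th entry counts the down-free partial matchings on the upper / lower chain of length $n$ leaving exactly $k$ points to be matched by bridges. Each step of building the chain by $r=8$ new points updates these vectors by a linear rule
\[
 \begin{pmatrix} u_{n+8} \\ \ell_{n+8}\end{pmatrix}
 = T \begin{pmatrix} u_n \\ \ell_n \end{pmatrix},
\]
where $T$ is a sparse infinite block operator whose entries encode the 8-step transitions and the corner-induced coupling between the two chains. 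The total count of perfect matchings appears as a bilinear pairing $\sum_k c_k\, u_n^{(k)}\,\ell_n^{(k)}$ with explicit coefficients $c_k$ counting the geometric bridge configurations.

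Having set up the recursion, I would determine its spectral radius. A natural ansatz is an eigenvector of $T$ in which $u^{(k)}$ and $\ell^{(k)}$ decay geometrically in $k$; this converts the infinite eigenvalue equation into a coupled pair of algebraic equations in two unknowns (a decay rate and the eigenvalue), eliminable down to a single polynomial. For $r=8$ one expects this polynomial to admit the root $\nu^8 = (8389 + 3\sqrt{7771737})/2$, matching the statement. Standard transfer-matrix arguments, combined with a tail estimate for the infinite part of $T$, then give the upper bound $O(\nu^n)$. For the lower bound, I would truncate the infinite system at a growing level $K$, exhibit an explicit family of matchings counted by the truncation, and show that its dominant eigenvalue tends to $\nu$ from below; this yields $\Omega((\nu-\eps)^n)$ for every $\eps>0$ and explains why the lower and upper bounds differ by an arbitrarily small $\eps$.

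The hard part will be the spectral analysis of the coupled infinite-dimensional recursion. Two technical issues dominate: first, one must justify that the eigenvalue equation really has a unique dominant solution of the geometric-decay form and that no other spectrum of $T$ produces larger growth; second, one must verify that the tails neglected in the truncation contribute only subexponentially, so that the upper bound $O(\nu^n)$ is not spoiled. Everything else---setting up the bridges, specializing the generating polynomial to $r=8$, and extracting coefficient asymptotics---should follow the pattern already used for $\mathrm{DZZC}_n$ in Theorem~\ref{thm:main} and for double $r$-chains without corners in Theorem~\ref{thm:main_wo_corners}, with only bookkeeping changes.
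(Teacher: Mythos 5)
There is a genuine gap: your coupled recursion is set up between the wrong pair of objects. In the paper, the upper and lower chains of the double structure are never coupled dynamically at all. Proposition~\ref{prop:dw-highabove} shows that a down-free matching of the upper chain and an up-free matching of the lower chain with equally many free points extend to a perfect matching in exactly \emph{one} way (and in no way otherwise), so the count of perfect matchings of the double structure is simply $\sum_j \mathsf{dfm}_j(X_{n/2})^2$, and Theorem~\ref{thm:double} converts an asymptotic bound on $\mathsf{dfm}$ of the \emph{single} chain into the theorem. Your bilinear pairing with unspecified ``bridge coefficients'' $c_k$ glosses over exactly this uniqueness, which is the central down-free idea; and your two infinite vectors $u_n,\ell_n$ are reflections of each other, so the ``corner-induced coupling between the two chains'' you posit does not exist. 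The two coupled infinite vectors in the paper live entirely inside the single $8$-chain with corners: $C^k_i$ and $F^k_i$ count down-free $\rho$-matchings (matchings with runners) of the chain with $k$ arcs according to whether the last corner $V_k$ carries a runner or not, and the coupling arises because each corner belongs to two consecutive arcs. Without this state split, your transfer recursion degenerates to the corner-free case, which yields only $\sqrt[11]{240054}\approx 3.0840$ (Theorem~\ref{thm:main_wo_corners}), not $\nu\approx 3.0930$; the constant $\nu^8=(8389+3\sqrt{7771737})/2$ arises precisely as the dominant eigenvalue of the $2\times 2$ condensed matrix $\bigl(\begin{smallmatrix}2885&2619\\6022&5504\end{smallmatrix}\bigr)$ obtained from the careful case analysis of how a new arc attaches at a shared corner (the coefficients $Z_\alpha$, $I_\alpha$, $W_\alpha$, $U_\alpha$), none of which appears in your plan.

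Your spectral step is also not the paper's, and as stated it is only an ansatz: you assume a dominant eigenvector with geometric decay in the runner index and a clean truncation argument, but you give no reason why such an eigenvector exists or controls the growth. The paper avoids this by proving a bespoke result (Theorem~\ref{arbitrary-row-sums}) for coupled infinite banded recursions: under positivity of certain coefficients and a vanishing weighted drift $D=0$ (verified numerically/exactly for $r=8$), the sequence $F^k_0$ grows at most like $M^k$ and at least like $(M-\varepsilon)^k$, where $M$ is the eigenvalue of the condensed matrix. This is exactly why the theorem's lower bound carries an $\varepsilon$-loss, whereas you treat that loss as an artifact of a truncation you have not carried out. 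So the skeleton (reduce to the single chain, recursion on runner-indexed vectors, dominant-eigenvalue growth rate) matches the paper, but the essential new ingredients for the with-corners case -- the corner-state coupling within one chain, the explicit arc-attachment case analysis producing the coefficients, and the condensed-matrix theorem with the drift condition -- are missing, and your substitute arguments would not produce the stated constant.
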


We shall present proofs for all three theorems because they use different techniques.
First, in Section~\ref{sec:dzzc_pm} we introduce the notion of \emph{down-free matchings}
and show in Theorem~\ref{thm:double} how one can generally reduce the problem of
asymptotic enumeration
of perfect matchings of a ``double structure''
to that
of down-free matchings of the corresponding ``single structure''.
In the proof of Theorem~\ref{thm:main} (Section~\ref{zigzag-chains}),
we find a recursion for the number of down-free matchings of the zigzag chain,
and translate it into a functional equation satisfied by the generating function.
 We solve this equation explicitly, which allows us to find the
 asymptotic growth rate by looking at the smallest singularity of the function.
In the proof of Theorem~\ref{thm:main_wo_corners}
(Section~\ref{sec:r_chains})
we use matchings which possibly have \emph{runners} --
edges with only one endpoint assigned.
We define a sequence of infinite vectors whose entries are
the numbers of down-free matchings of the $r$-chain 
 of a certain size,
sorted by the number of runners.
These vectors can be computed recursively.
We reformulate this  recursion in term of lattice paths
and obtain the desired growth rate
with the help of a result of Banderier and Flajolet~\cite{ban}.
The proof of Theorem~\ref{thm:main_corners}
(Section~\ref{sec:with}) starts similarly,
but due to technical obstacles,
we need \emph{two} sequences of infinite vectors,
defined by a coupled recursion. 
We find that the desired growth constant is determined
by the dominant eigenvalue of certain $2 \times 2$ matrix.

\paragraph{Notation.}

We use the following notation and convention.
A \emph{construction} $X$ is a family $\{X_n\}_{n\in I}$ for some
infinite $I \subseteq \mathbb{N}$,
where, for fixed $n$, $X_n$
is a class of point sets of size $n$
with certain common properties, for example,
a certain order type (or, in some cases: one of several order types)
and certain restrictions concerning
position in the plane with respect to coordinate axes.
The double chain ($\mathrm{DC}$) mentioned above is one of such constructions.
Occasionally we will abuse notation and denote by
$X_n$ not only such a class, but also any of its representatives.
If we know that all members of $X_n$ have, for example, the same number of matchings,
we can speak unambiguously about ``the number of matchings of $X_n$'', and so on.

In what follows,
$\mathsf{pm}(X_n)$ denotes the number of perfect matchings of $X_n$;
$\mathsf{am}(X_n)$, the number of all (non-crossing straight-line, but not necessarily perfect) matchings of $X_n$;
$\mathsf{dfm}(X_n)$, the number of down-free matchings of $X_n$.
For some constructions it can happen that
not all representatives of $X_n$ have the same number of (for example) perfect matchings
and, thus, $\mathsf{pm}(X_n)$ is not well-defined,
but the common asymptotic bound still can be given, which enables us to
write expressions like $\mathsf{pm}(X_n) = \Theta^*(\mu^n)$ in such cases as well.

For two distinct points $p$ and $q$,
the straight line through $p$ and $q$ will be denoted by~$\ell(p, q)$.

A set of points is \emph{in downward position}
(respectively, \emph{in upward position}) if the points lie on
the graph of a convex (respectively, concave) function.
 In particular,
three points with different $x$-coordinates
are in downward position
(respectively, {in upward position})
if
they form a counterclockwise (respectively, clockwise)
oriented triangle when sorted by $x$-coordinate.

A point of $X$ not matched by
a
matching
will be called
a \emph{free point}.

\section{Double chains and double zigzag chains}\label{sec:chains}

In this section we recall the definitions of a double chain
and a double zigzag chain,
and recall how the bound
$\mathsf{pm}(\mathrm{DC}_n) =\Theta^*(3^n n^{O(1)})$
from Theorem~\ref{thm:gnt}
for was obtained in~\cite{garcia}.

\subsection{One set high above another and ``double constructions''}
We start with a notion of
a point set being one ``high above'' another point set.

\begin{definition}
Let $P$ and $Q$ be two point sets in the plane, each of them having distinct $x$-coordinates.
We say that $P$ is \emph{high above} $Q$ if
any point of $P$ lies above any line through two points of $Q$
and
any point of $Q$ lies below any line through two points of $P$.
\end{definition}

\begin{observation}\label{obs:high}
For any two point sets $P$ and $Q$ with distinct $x$-coordinates
it is possible to put a translate of $P$ high above a translate of $Q$.
\end{observation}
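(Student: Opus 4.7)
The plan is to keep $Q$ fixed and show that a sufficiently large vertical upward translation of $P$ already does the job. Write $P_h := P + (0,h)$. Since each of $P$ and $Q$ has points with pairwise distinct $x$-coordinates, every line determined by two points of $P$, or by two points of $Q$, is non-vertical; in particular it has finite slope. The number of such lines is finite (at most $\binom{|P|}{2}$ and $\binom{|Q|}{2}$, respectively).

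For the first condition in the definition of ``high above'', I would use that the $x$-range of $P_h$ equals the $x$-range of $P$, an interval $X_P$ independent of $h$. Each of the finitely many lines $\ell$ through two points of $Q$ is non-vertical and hence attains only bounded $y$-values on $X_P$; let
\[
M_1 := \max_{\ell} \; \sup_{x \in X_P} \ell(x),
\]
where the max is over the finitely many such lines. Then as soon as $h > M_1 - y_{\min}(P)$, every point of $P_h$ has $y$-coordinate strictly greater than $M_1$, and in particular lies strictly above every line through two points of $Q$.

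For the second condition, I would note that the lines through pairs of points of $P_h$ are precisely the vertical translates, by $(0,h)$, of the lines through pairs of points of $P$. Each such underlying line from $P$, restricted to the fixed $x$-range $X_Q$ of $Q$, attains only bounded $y$-values; letting
\[
m_2 := \min_{\ell} \; \inf_{x \in X_Q} \ell(x),
\]
the translated lines have $y$-values at least $m_2 + h$ on $X_Q$. Hence as soon as $h > y_{\max}(Q) - m_2$, every point of $Q$ lies strictly below every line through two points of $P_h$.

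Choosing $h$ to exceed both thresholds simultaneously produces the required configuration. There is no real obstacle: the only thing being exploited is that a non-vertical line has bounded $y$-values over any bounded $x$-interval, combined with the finiteness of the family of lines involved, so the argument is essentially a short compactness/finiteness check.
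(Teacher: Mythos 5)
Your proof is correct and takes essentially the same approach as the paper: translate $P$ vertically upward far enough, justified by the fact that the finitely many non-vertical lines determined by either set attain bounded $y$-values over the bounded $x$-range of the other set. The only cosmetic difference is that the paper first aligns both sets horizontally inside a common vertical strip before bounding the lines at the strip's walls, whereas you bound each family of lines directly over the other set's $x$-range, avoiding any horizontal translation.
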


\begin{proof}
Assume that $Q$ lies in the interior of a ``topless rectangle''
$R=[a,b] \times [c, +\infty)$.
Any line that passes through two points of $Q$ crosses the lines $x=a$ and $x=b$ in certain points.
Therefore there exists a number $d$ such that
all points of $R$ above $y=d$
are also above any line containing two points of $Q$.
A similar argument (with ``below'' instead ``above'') applies for $P$.
Thus, if we put translates of $P$ and $Q$ between vertical lines
$x=a$ and $x=b$
(where $a$ and $b$ are chosen to fit bot sets)
and, if needed, translate $P$ upwards,
we get eventually a translate of~$P$ high above a translate of $Q$.
\end{proof}

Let $X_n$ be a construction.
A \emph{double $X_n$} (denoted by $\mathrm{D}X_{2n}$) is
the family of sets obtained
by placing $P$, a representative of $X_n$,
high above $Q$, a
representative of $X_n$
reflected across a horizontal line.
An edge between a point of $P$ and a point of $Q$ will be called a \emph{$PQ$-edge}.

\subsection{Double chains}\label{sec:dc}
A (single) \emph{downward chain} (respectively, \emph{upward chain}) of size $n$
is a set of $n$ points in downward (respectively, downward) position.
A downward chain of size $n$ will be denoted by $\mathrm{SC}_n$.

Let $n$ be an even number.
A \emph{double chain} of size $n$ consists of
a downward chain of size $n/2$, $P=\{p_1, p_2, \dots, p_{n/2}\}$,
placed high above
an upward chain of size $n/2$, $Q=\{q_1, q_2, \dots, q_{n/2}\}$.
See Figure~\ref{fig:dc} for an example.
A double chain of size $n$
will be denoted by $\mathrm{DC}_n$.
We assume that both sets $P$ and $Q$ are sorted by $x$-coordinate.

\begin{figure}[ht]
\centering
\includegraphics[scale=1]{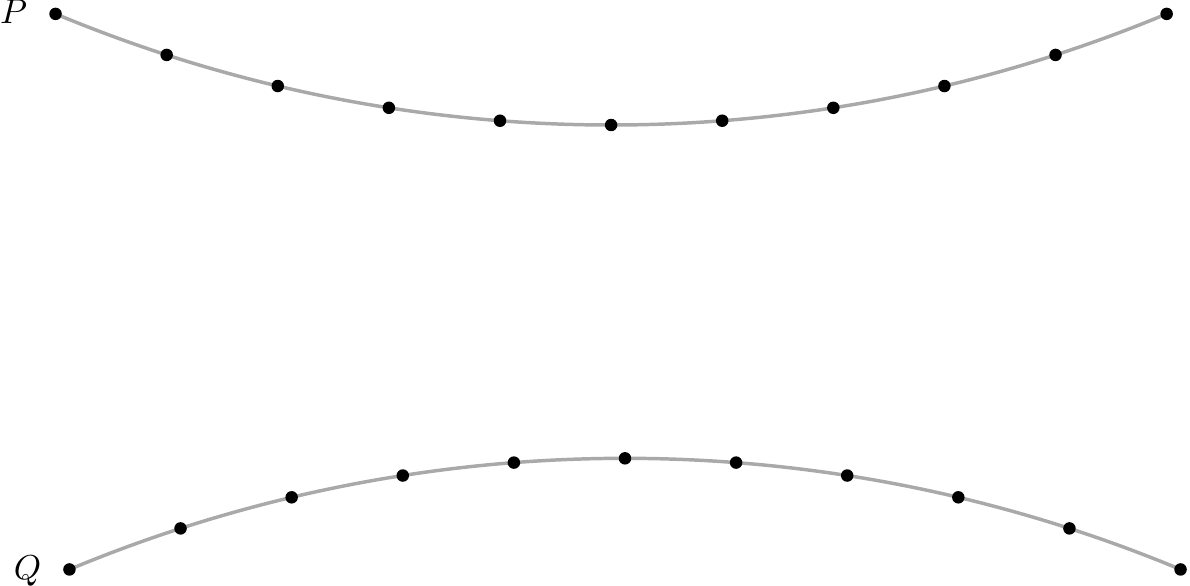}
\caption{A double chain of size $22$.}
\label{fig:dc}
\end{figure}

\subsection{Perfect matchings in the double chain}\label{sec:dc_pm}
Theorem~\ref{thm:gnt} was proved in~\cite{garcia} as follows.

Denote by $\mathsf{pm}(\mathrm{ DC}_{n, j})$ the number of perfect matchings of $\mathrm{ DC}_n$
that have exactly $j$ $PQ$-edges between the upper and the lower chain.
If $n/2-j$ is odd, then no perfect matching exists,
so we assume that $n/2-j$ is even.
One can construct a perfect matching with $j$ $PQ$-edges
in the following way.
First choose any $j$ points of $P$ and $j$ points of $Q$ and
connect them by $j$ non-intersecting $PQ$-edges.
It is easy to see that
there is a unique way to connect the chosen points
(see also Proposition~\ref{prop:dw-highabove} below).
 Then, choose any perfect matching of the
 free points in each chain.
 Alternatively, one can first choose $n/2-j$ points of $P$ and $n/2-j$ points of $Q$,
then take any matching of $P$ and any matching of $Q$ that uses the chosen points;
after that, the free points
can be matched by $PQ$-edges in a unique way.
Since $Q$ has the same order type as $P$,
it follows that
\begin{equation}\label{eq:dc1}
\mathsf{ pm}(\mathrm{ DC}_{n, j}) =
\left(  \mathsf{ am}(\mathrm{ SC}_{n/2, j})  \right)^2=
 \left( \binom{n/2}{j} \cdot C_{(n/2-j)/2} \right)^2,
 \end{equation}
 where $\mathsf{ am}(\mathrm{ SC}_{n/2, j})$ denotes the number of matchings of $P$
 (or, equivalently, of any set of $n/2$ points in convex position) with exactly $j$ free points.
Finally, the total number of perfect matchings of $\mathrm{ DC}_n$
is
\begin{equation}\label{eq:dc2}
\mathsf{ pm}(\mathrm{ DC}_{n}) =
\sum_{
\substack{
  0 \leq j \leq n/2 \\
j\equiv n/2 \pmod 2
}
}\left( \binom{n/2}{j} \cdot C_{(n/2-j)/2} \right)^2.
\end{equation}
An analysis shows that the dominant term in this sum
is the term corresponding to $j\approx n/6$,
that is
$\left(\binom{n/2}{n/6} \cdot C_{n/6} \right)^2$
($n/6$ should be rounded in one way or the other).
Using the estimates
$\binom{ak}{bk} = \Theta{\left(\left(\frac{a^a}{b^b \ (a-b)^{a-b}}\right)^k / k^{1/2}\right)}$
and $C_k = \Theta(4^k/k^{3/2})$,
which follow from Stirling's formula,
one obtains
$\mathsf{ pm}(\mathrm{DC}_{n, n/6})
=
\Theta(3^n/n^4)$,
and, therefore, $\mathsf{ pm}(\mathrm{ DC}_{n})=\Omega(3^n/n^4)$
and $O (3^n/n^3)$.
With the help of Stirling's formula, and replacing the sum \eqref{eq:dc2}
by an integral, one can obtain the more precise estimate
$\mathsf{ pm}(\mathrm{DC}_{n})=3^n/n^{7/2}\cdot182/\pi^{3/2}\cdot(1+o(1))$
(we omit the details).

\subsection{Double zigzag chains}\label{sec:dzzc}

In this section we recall the definitions of a (single) zigzag chain $\mathrm{SZZC}$
and a double zigzag chain $\mathrm{DZZC}$.

Let $P = \{p_1, p_2, \dots, p_n\}$ be a downward chain ($\mathrm{SC}_n$) sorted by $x$-coordinate.
For each even~$i$, $1<i<n$, we move the point $p_i$ vertically up,
very slightly above the segment $p_{i-1}p_{i+1}$,
so that all triples $p_{i-1}p_{i}p_{i+1}$ with even $i$ ($1<i<n$) are now in upward position,
and all other triples of points remain in downward position.
After this modification, the points $p_1, p_2, \dots, p_{n}$ are still sorted by $x$-coordinate.
A set obtained in this way will be called
an \emph{even (single) downward zigzag chain} of size $n$
and denoted by $\mathrm{eSZZC}_n$.
If instead of even $i$-s we perform this transformation for each odd $i$, $1<i<n$,
we obtain an \emph{odd (single) downward zigzag chain} ($\mathrm{oSZZC}_n$).
If $n$ is even, then $\mathrm{eSZZC}_n$ and $\mathrm{oSZZC}_n$
are reflections of each other with respect to a vertical line;
but if $n$ is odd, then $\mathrm{eSZZC}_n$ and $\mathrm{oSZZC}_n$
have different order types, and -- as one can verify on some small examples --
different numbers of (perfect or not necessarily perfect) matchings.
See Figure~\ref{fig:zz} for an example.
A zigzag chain of size $n$, denoted by $\mathrm{SZZC}_n$,
is either an $\mathrm{eSZZC}_n$ or an $\mathrm{oSZZC}_n$.
 For both types of $\mathrm{SZZC}_n$,
 we shall derive the same asymptotic bound on the number of perfect
 matchings.

\begin{figure}[ht]
\centering
\includegraphics[scale=0.814]{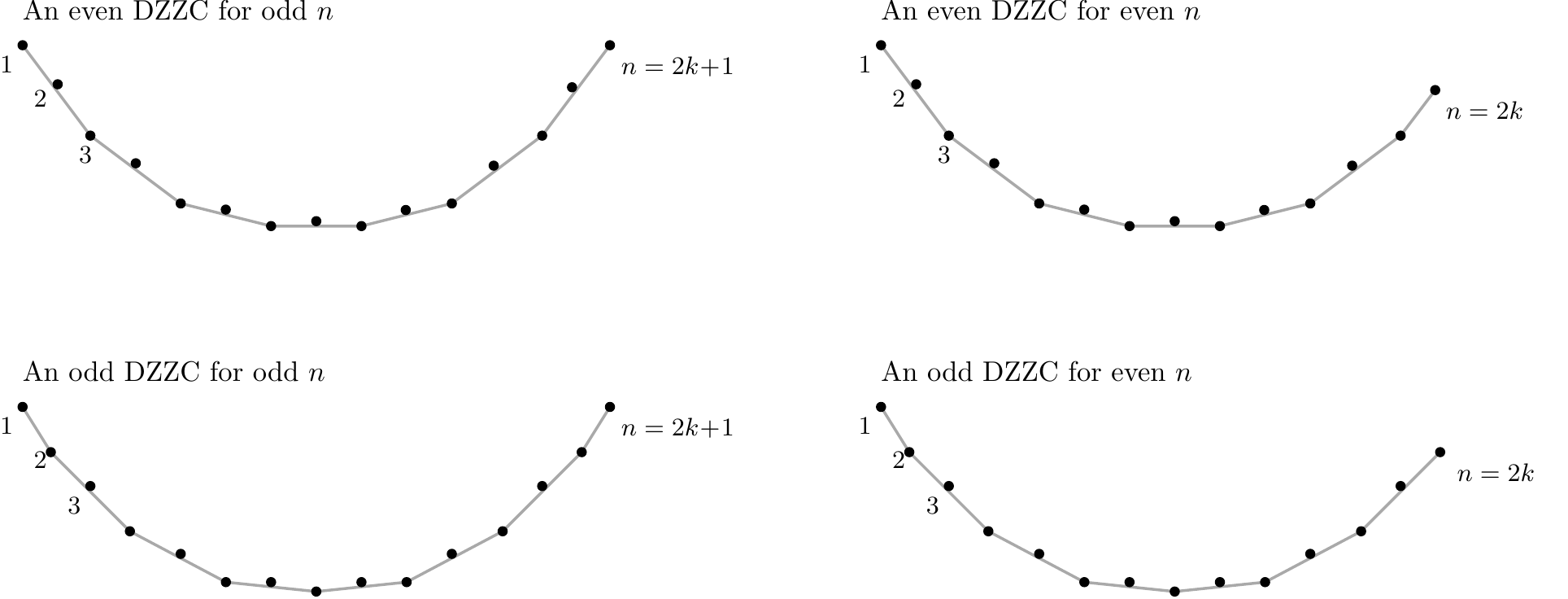}
\caption{A (single) zigzag chain -- several cases.}
\label{fig:zz}
\end{figure}


An \emph{upward zigzag chain} (of either kind)
is a downward zigzag chain reflected across a horizontal line.
The construction of
a double zigzag chain from zigzag chains is analogous to
the construction of the double chain from two single chains:
A \emph{double zigzag chain} of (even) size~$n$ ($\mathrm{ DZZC}_n$)
consists of
a downward zigzag chain $P=\{p_1, p_2, \dots, p_{n/2}\}$
high above
an upward zigzag chain $Q=\{q_1, q_2, \dots, q_{n/2}\}$.
We can combine even and odd zigzag chains in various ways,
but as mentioned above, this will make no difference for the
asymptotic number of perfect matchings.
See Figure~\ref{fig:dzzc} for an example
of double zigzag chain obtained from two even zigzag chains of odd size.

\begin{figure}[ht]
\centering
\includegraphics[scale=1]{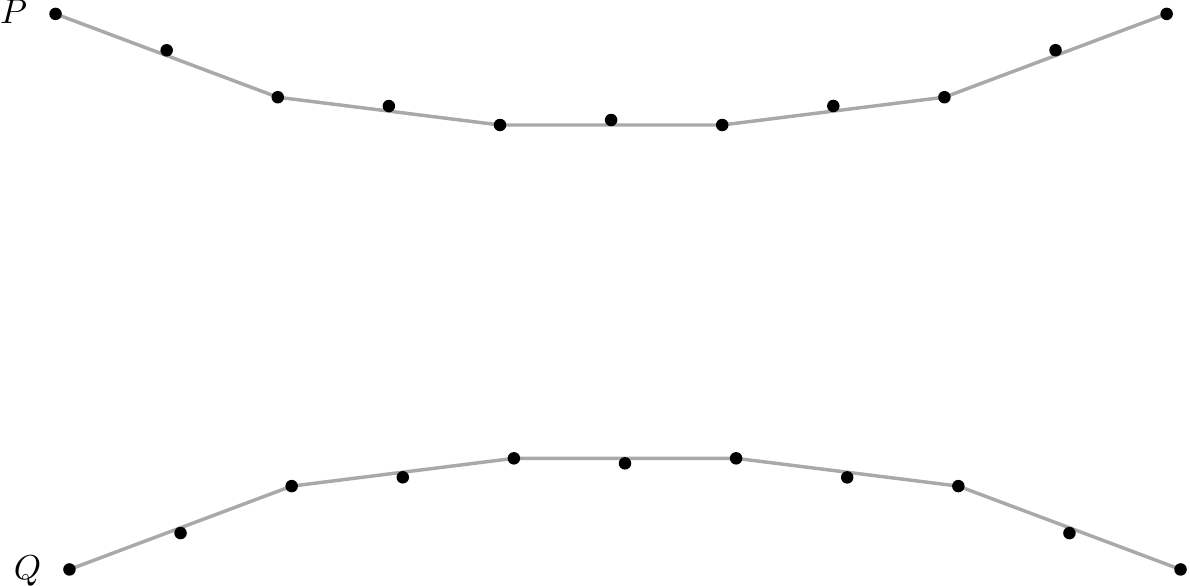}
\caption{A double zigzag chain of size $n=22$.}
\label{fig:dzzc}
\end{figure}

\section{Down-free matchings and perfect matchings}\label{sec:dzzc_pm}
\subsection{Down-free matchings}
Suppose that we want to adapt
the argument that was used for estimating
$\mathsf{ pm}(\mathrm{ DC}_{n})$
for the case of $\mathsf{ pm}(\mathrm{DZZC}_{n})$ (of any kind).
That is: for fixed $j$ (such that $n/2-j$ is even) we want to choose $j$ $PQ$-edges,
and to complete this matching to a perfect matching
by choosing edges that connect free points of the same chain in all possible ways.
One can hope for improvement
since
the number of perfect matchings in $\mathrm{SZZC}_n$ (of any kind)
is $\Theta^*(\nu^n)$
with $\nu = \sqrt{2+2\sqrt{2}} \approx 2.1974$, in contrast to $\Theta^*(2^n)$ for $\mathrm{SC}_n$.
(This bound for $\mathrm{SZZC}_n$
was proven in~\cite{aich1} for a slightly different construction, the so called \emph{double circle}.
The order type of $\mathrm{SZZC}$ is different from that of a double circle only in one triple of points;
it is easy to show that they have the same asymptotic number of perfect matchings.)
However, in comparison with the case of $\mathrm{DC}_n$,
here we have less freedom and no uniformity in constructing the matchings inside $P$ and $Q$,
once $PQ$-edges are chosen.
Indeed, the $j$ chosen $PQ$-edges may block visibility between certain pairs of
free points from $P$ or from $Q$.
Moreover,
for different choices of $j$ $PQ$-edges,
we have in general different numbers of ways
to complete them to a perfect matching of  $\mathrm{DZZC}_n$.
This follows from the fact that sets of points that remain free after choosing $j$ $PQ$-edges have in general
various order types, and, so, it seems hopeless to enumerate them in this way.
On the other hand, if we first choose
$(n-2j) / 4$
 edges between two points of $P$
and $(n-2j) / 4$
 edges between two points of~$Q$,
then -- as we prove below in Proposition~\ref{prop:dw-highabove} --
there is \emph{at most} one way to complete such a matching
to a perfect matching of  $\mathrm{DZZC}_n$.
More precisely, if the free points of $P$ ``see'' all free points of $Q$,
there is exactly one way of complete a matching to a perfect one,
otherwise it is impossible.
Next we define a property of matchings
which -- for two sets being one high above another --
ensures the desired visibility of free points.

\begin{definition}
Let $P$ be a set of points with distinct $x$-coordinates.
Let $M$ be a matching of~$P$.
$M$ is a \emph{down-free matching} if
for each unmatched point $p \in P$,
no edge of $M$ has a point directly below $p$.
In other words: for each free point $p \in P$, the vertical ray going down from $p$,
does not cross any edge of $M$.
Similarly, one defines an \emph{up-free matching}.
\end{definition}

\begin{proposition}\label{prop:dw-highabove}
Let $P$ and $Q$ be two point sets in general position with distinct $x$-coordinates
such that
$P$ is high above $Q$.
Let $M_P$ be a
matching of $P$ and $M_Q$ a
matching of $Q$,
such that $M_P$ and $M_Q$ have the same number of free points.
\begin{enumerate}
\item If $M_P$ is a down-free matching and $M_Q$ is an up-free matching,
then $M_P \cup M_Q$ can be completed to a perfect matching of $P \cup Q$ in a unique way.
\item If $M_P$ is not down-free or $M_Q$ is not up-free,
then it is impossible to complete $M_P \cup M_Q$ to a perfect matching of $P \cup Q$.
\end{enumerate}

\end{proposition}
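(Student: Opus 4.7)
The plan is to attack both parts through a single picture: a candidate $PQ$-edge $pq$ together with an edge $p_a p_c \in M_P$ (with $x_a < x_c$); by up--down symmetry the analysis of an edge of $M_Q$ needs no separate argument. First, for uniqueness in Part~1, since $P$ is high above $Q$, any two $PQ$-edges whose endpoints appear in opposite $x$-orders in $P$ and in $Q$ form the two diagonals of a convex quadrilateral (two $P$-points on top, two $Q$-points on the bottom) and so cross. Hence the only non-crossing $PQ$-matching of the free points of $P$ and of $Q$ pairs them in their common $x$-order, and at most one candidate completion exists.

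The rest of the proof is driven by a single algebraic identity. Let $\hat y(x)$ denote the $y$-coordinate of the line $\ell(p_a,p_c)$ at abscissa~$x$, and set $A = y_p - \hat y(x_p)$ and $B = \hat y(x_q) - y_q$; the ``high above'' hypothesis forces $B > 0$. The segment $pq$ meets $\ell(p_a,p_c)$ iff $A$ and $B$ share a sign, in which case the intersection has $x$-coordinate $x^\ast = (B x_p + A x_q)/(A+B)$. Translating $p_a$ to the origin gives the key identity
\begin{equation*}
A x_q + B x_p \;=\; y_p x_q - x_p y_q,
\end{equation*}
whose right-hand side is, up to the sign of $x_p$, exactly the signed quantity that records whether $q$ lies below the line $\ell(p, p_a)$---and ``high above'' forces $q$ to lie below this line. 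A symmetric identity obtained by translating $p_c$ to the origin handles the interaction with $\ell(p, p_c)$ and gives the analogous control on the sign of $x^\ast - x_c$.

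All cases then fall out of sign-chasing. For Part~1, if $x_p \in (x_a, x_c)$ then down-freeness of $M_P$ gives $\hat y(x_p) > y_p$, so $A < 0 < B$ and $pq$ never meets the line at all; if instead $x_p \notin [x_a, x_c]$, say $x_p < x_a$, the identity combined with the inequality ``$q$ below $\ell(p, p_a)$'' forces $x^\ast < x_a$, so any intersection with the line occurs outside the segment $p_a p_c$ (the case $x_p > x_c$ is symmetric). Combined with the analogous argument for edges of $M_Q$ using up-freeness, the order-preserving matching extends $M_P \cup M_Q$ to a perfect matching. For Part~2, failure of down-freeness produces a free $p$ with $x_p \in (x_a, x_c)$ lying strictly \emph{above} $\ell(p_a, p_c)$, so now $A > 0$ and $B > 0$; applying the two identities in tandem forces $x^\ast \in (x_a, x_c)$, which makes \emph{every} edge $pq$ with $q \in Q$ cross the segment $p_a p_c$, and no $PQ$-completion can exist.

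The main obstacle is the careful sign bookkeeping in the two translated identities: the same identity must yield $x^\ast < x_a$ when $x_p < x_a$ (Part~1, Case~B) and $x^\ast > x_a$ when $x_p > x_a$ (Part~2), which it does precisely because the ``$q$ below $\ell(p, p_a)$'' inequality flips direction when multiplied through by an $x_p$ of opposite sign. Once this is set up cleanly, both halves of the proposition are immediate.
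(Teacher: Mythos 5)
Your proof is correct, and its skeleton is the same as the paper's: the convex-position fact (two $P$-points on top, two $Q$-points on the bottom of a convex quadrilateral) forces the order-preserving pairing of the free points and gives uniqueness, and the remaining work is to show these $PQ$-edges are compatible with $M_P$ and $M_Q$ (Part 1) and that a blocked free point excludes any $PQ$-completion (Part 2). Where you genuinely differ is in the execution of these last two steps: the paper argues synthetically --- for Part 1 it considers the angular sector at the free point bounded by the downward ray and the ray towards its partner in $Q$, and derives a contradiction with ``high above''; for Part 2 it just observes that $Q$ lies below the lines $\ell(p_\alpha,p_\beta)$, $\ell(p_\alpha,p_\gamma)$, $\ell(p_\beta,p_\gamma)$, so $p_\beta$ cannot reach $Q$ without crossing $p_\alpha p_\gamma$ --- whereas you set up coordinates, write the intersection abscissa $x^\ast=(Bx_p+Ax_q)/(A+B)$ and chase signs through the identity $Ax_q+Bx_p=y_px_q-x_py_q$, splitting on whether $x_p$ lies inside or outside $[x_a,x_c]$; I checked the sign bookkeeping and it is sound (in particular $B>0$ from ``high above'', $A<0$ from down-freeness in the inside case, and the flipped inequality for $x_p$ of opposite sign in the outside case and in Part 2). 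Your computation gives a uniform, mechanically verifiable certificate in which the same identity serves both halves, at the cost of case bookkeeping; the paper's synthetic version is shorter and coordinate-free. Two small presentational points: state explicitly that the order-preserving $PQ$-edges are pairwise non-crossing (this follows from the same convex-quadrilateral picture you invoke --- they are sides rather than diagonals), and note that, exactly as in the paper, ``complete'' must be read as ``complete by adding $PQ$-edges only'' (equivalently, $M_P$ and $M_Q$ are the full intra-$P$ and intra-$Q$ parts of the completion), which is the interpretation needed for the counting argument in Theorem~\ref{thm:double}.
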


\begin{proof}
We assume again that the points $P=\{p_1, p_2, p_3, \dots\}$ and $Q=\{q_1, q_2, q_3, \dots\}$ are sorted by $x$-coordinate.

First we observe that
for any $p_\alpha, p_\beta \in P$,
$q_\gamma, q_\delta \in Q$,
the points $p_\alpha, p_\beta ,q_\gamma, q_\delta $ are in convex position.
Indeed, if, for example, $q_\delta  \in \mathrm{conv}(p_\alpha, p_\beta ,q_\gamma)$,
then the points $p_\alpha$ and $p_\beta$ would lie on different sides of the line $\ell(q_\gamma, q_\delta)$,
and this contradicts $P$ being high above $Q$.

Now we show that for $\alpha <\beta$ and $\gamma < \delta$
the points $p_\alpha, p_\beta ,q_\delta, q_\gamma $ lie on the
boundary of their convex hull in this
clockwise order,
see Figure~\ref{fig:highabove}(a) for an illustration.  Since $P$ lies
high above $Q$, the points of $Q$ lie below $\ell(p_\alpha, p_\beta)$
and thus the points $p_\alpha$ and $p_\beta$ lie on the convex hull
consecutively and in this clockwise order.
Similarly $q_\delta$ and $q_\gamma$ lie on the convex hull
consecutively and in this clockwise order.
 This implies the claim.

\begin{figure}[htb]
\centering
\includegraphics[scale=1]{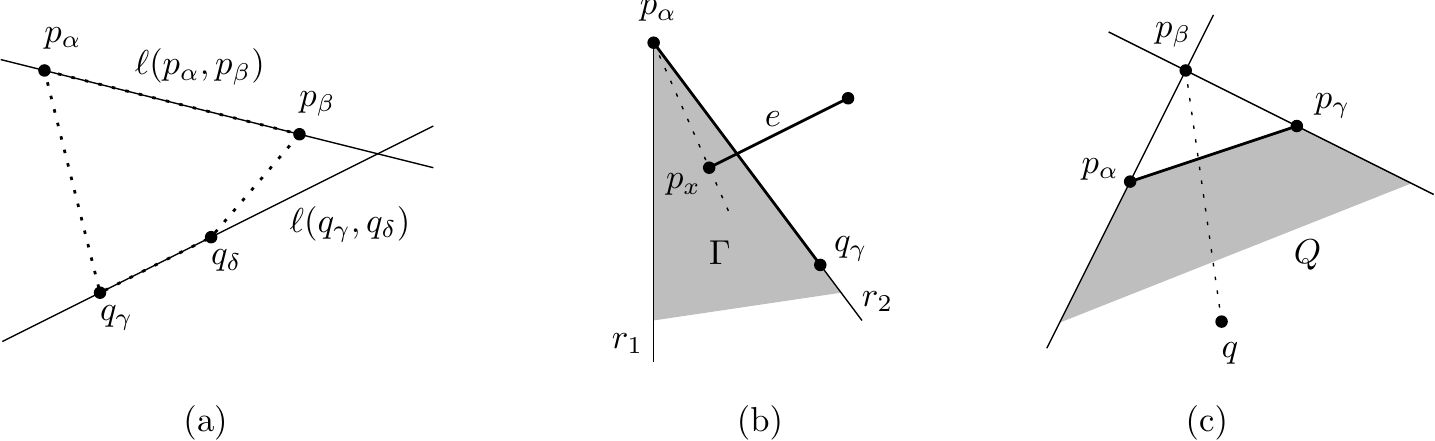}
\caption{Illustrations to the proof of Proposition~\ref{prop:dw-highabove}.}
\label{fig:highabove}
\end{figure}

\begin{enumerate}
\item Assume that $M_P$ is down-free and $M_Q$ is up-free.

Let $p_{\alpha_1}, p_{\alpha_2}, \dots, p_{\alpha_m}$ be the free points of $P$
and let $q_{\gamma_1}, q_{\gamma_2}, \dots, q_{\gamma_m}$ be the free points of $Q$,
sorted from left to right. 
We complete $M_P \cup M_Q$ to a perfect matching of $P \cup Q$
by connecting $p_{\alpha_i}$ with $q_{\gamma_i}$ for $i=1, 2, \dots, m$.
By the just-proven claim about the cyclic order of $p_\alpha, p_\beta ,q_\delta, q_\gamma $, these new $PQ$-edges do not cross each other.
Moreover, they do not cross the edges of $M_P$ and of $M_Q$.
Indeed, assume that an edge $p_{\alpha} q_{\gamma} =
p_{\alpha_i} q_{\gamma_i}$
crosses an edge $e \in M_P$.
Consider the angular sector $\Gamma$
bounded by the downward vertical ray $r_1$ with the origin
$p_{\alpha}$
and
the ray $r_2$ from
$p_{\alpha}$
through~
$q_{\gamma}$,
see Figure~\ref{fig:highabove}(b).
The edge $e$
crosses the ray~$r_2$ by assumption
and does not cross the ray $r_1$, because the matching $M_P$ is down-free.
Therefore, one of the endpoints of $e$, say $p_x$, lies in the interior of $\Gamma$.
However, this is impossible because in such a case
$q_{\gamma}$
lies above the line
$\ell(p_{\alpha}, p_x)$.

Finally, we need to show that this is the only way to complete $M_P \cup M_Q$
to a perfect matching of $P \cup Q$.
Indeed, for any other possibility to match the free points we
would have a pair of edges
$p_\alpha q_\delta$ and
$p_\beta q_\gamma$
with $\alpha < \beta$, $ \gamma < \delta $.
However, it follows from the claim
about the cyclic order of $p_\alpha, p_\beta ,q_\delta, q_\gamma $
that such edges necessarily cross.

\item Assume without loss of generality that $M_P$ is not down-free.
Then there is
 a free point $p_\beta$ in $M_P$
so that the vertical downward ray from $p_\beta$ crosses
an edge
 $p_\alpha p_\gamma$,
with 
$\alpha < \beta < \gamma$.
See Figure~\ref{fig:highabove}(c) for an illustration.
The set $Q$ must lie below
 $\ell(p_\alpha, p_\beta)$,
 $\ell(p_\alpha, p_\gamma)$, and
 $\ell(p_\beta, p_\gamma)$. There is no way to connect
 $p_\beta$ to a point $q\in Q$ without crossing the edge~$p_\alpha p_\gamma$.
\qedhere
\end{enumerate}
\end{proof}

\subsection{Down-free matchings of $X$ and perfect matchings of
  double~$X$}
\label{double-X}
In the following theorem we show how
asymptotic bounds on $\mathsf{dfm}$ for a structure $X$
imply those on $\mathsf{pm}$ for the corresponding double structure $\mathrm{D}X$.

\begin{theorem}\label{thm:double}
Let $X$ be a 
construction
so that $\mathsf{dfm}(X_n) = \Theta^*(\lambda^n)$.
Then for the double structure $\mathrm{D}X$ we have
$\mathsf{pm}(\mathrm{D}X_n) = \Theta^*(\lambda^n)$.

More precisely:
If $\mathsf{dfm}(X_n) = \Theta(\lambda^n / n^\alpha)$,
then
$\mathsf{pm}(\mathrm{D}X_n) = \Omega(\lambda^n / n^{2\alpha+1})$ and $O(\lambda^n / n^{2 \alpha})$.
\end{theorem}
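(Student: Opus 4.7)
The plan is to reduce the counting of perfect matchings of $\mathrm{D}X_n$ to a sum of squares of down-free-matching counts of $X_{n/2}$, stratified by the number of free points, and then sandwich the resulting sum by elementary inequalities. Set $m = n/2$ and let $d_{m,k}$ denote the number of down-free matchings of $X_m$ with exactly $k$ free points, so that by assumption $\sum_k d_{m,k} = \mathsf{dfm}(X_m) = \Theta(\lambda^m / m^{\alpha})$.

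The first step is to apply Proposition~\ref{prop:dw-highabove}. Write $\mathrm{D}X_n = P \cup Q$ with $P$ a copy of $X_m$ placed high above $Q$, a horizontally reflected copy of $X_m$. Any perfect matching of $\mathrm{D}X_n$ decomposes uniquely as $M_P \cup M_Q \cup E$, with $M_P$ a matching of $P$, $M_Q$ a matching of $Q$, and $E$ a set of $PQ$-edges. By the Proposition, $M_P$ must be down-free, $M_Q$ must be up-free, both must have the same number $k$ of free points, and then $E$ is uniquely determined by $(M_P, M_Q)$. Since $Q$ is a reflected copy of $X_m$, up-free matchings of $Q$ with $k$ free points correspond bijectively to down-free matchings of $X_m$ with $k$ free points, and thus
\begin{equation*}
\mathsf{pm}(\mathrm{D}X_n) \;=\; \sum_{k \geq 0} d_{m,k}^{\,2}.
\end{equation*}

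The upper bound is immediate:
\begin{equation*}
\sum_{k} d_{m,k}^{\,2} \;\leq\; \Bigl(\sum_{k} d_{m,k}\Bigr)^{2} \;=\; \mathsf{dfm}(X_m)^{2} \;=\; O\bigl(\lambda^{2m}/m^{2\alpha}\bigr) \;=\; O\bigl(\lambda^n/n^{2\alpha}\bigr).
\end{equation*}
For the lower bound I would invoke the Cauchy--Schwarz inequality. A down-free matching of $X_m$ with $k$ free points uses $(m-k)/2$ edges, so $k$ has the same parity as $m$; in any case $k$ ranges over at most $m+1$ values. Cauchy--Schwarz therefore yields
\begin{equation*}
\mathsf{dfm}(X_m)^{2} \;=\; \Bigl(\sum_k d_{m,k}\Bigr)^{2} \;\leq\; (m+1)\sum_k d_{m,k}^{\,2},
\end{equation*}
giving $\mathsf{pm}(\mathrm{D}X_n) \geq \mathsf{dfm}(X_m)^{2}/(m+1) = \Omega\bigl(\lambda^n/n^{2\alpha+1}\bigr)$, where the passage from $m$ to $n = 2m$ absorbs only an absolute constant into the asymptotic notation.

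The whole content of the argument is carried by Proposition~\ref{prop:dw-highabove}, which is already proved, so there is no serious obstacle. The one point worth checking is that the number of $k$-strata is linear in $n$ rather than exponential, which is exactly what allows one to lose just a single factor of $n$ when going from the sum $\sum_k d_{m,k}$ to the sum of squares $\sum_k d_{m,k}^{\,2}$, matching the stated polynomial factor $n^{2\alpha+1}$ in the lower bound.
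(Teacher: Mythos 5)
Your proposal is correct and follows essentially the same route as the paper: decompose each perfect matching of $\mathrm{D}X_n$ via Proposition~\ref{prop:dw-highabove} to get $\mathsf{pm}(\mathrm{D}X_n)=\sum_k \mathsf{dfm}_k(X_{n/2})^2$, then bound the sum of squares above by the square of the sum and below by Cauchy--Schwarz over the at most $n/2+1$ strata. The paper phrases the sandwich in terms of the normalized proportions $p_j$ with $\frac{1}{n/2+1}\le\sum_j p_j^2\le 1$, which is the same estimate you use.
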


\begin{proof}
Denote by $\mathsf{dfm}_j(X_{n/2})$
the number of down-free matchings of $X_{n/2}$ with exactly $j$ free points,
for $0 \leq j \leq n/2$,
and let
$p_j = \mathsf{dfm}_j(X_{n/2}) / \mathsf{dfm}(X_{n/2})$.
Then we have $\sum_{0 \leq j \leq n/2} p_j = 1$,
which implies $\frac{1}{n/2+1} \leq \sum_{0 \leq j \leq n/2} p^2_j \leq 1$.
Now
\[\mathsf{pm}(\mathrm{D}X_{n})
=
\sum_{0 \leq j \leq n/2} \mathsf{pm}_j(\mathrm{D}X_{n})
=
\sum_{0 \leq j \leq n/2} 
\mathsf{dfm}_j(X_{n/2}) 
^2
=
\mathsf{dfm}(X_{n/2})
^2 \cdot \sum_{0 \leq j \leq n/2} p_j^2,
 \]
 which implies the claim immediately.
\end{proof}

As the first application of Theorem~\ref{thm:double}, we show how one can reprove Theorem~\ref{thm:gnt}
without need to determine the dominant term in Equation~\eqref{eq:dc2}.
We use the following well-known fact.
\begin{proposition}[\cite{oeis} A001006]\label{fact:motzkin}
The number of all matchings in a set of $n$ points in convex position is the
$n$th Motzkin number $M_n$. Asymptotically, $M_n = \Theta ( 3^n/n^{3/2})$.
\end{proposition}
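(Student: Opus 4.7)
The plan is to handle the combinatorial identity and the asymptotic estimate separately. For the first statement, I would label the $n$ points in convex position as $1, 2, \ldots, n$ in cyclic order and condition on the status of point $1$. Either $1$ is unmatched, in which case the remaining $n-1$ points lie in convex position and contribute $M_{n-1}$ non-crossing matchings; or $1$ is matched to some point $j$ with $2 \le j \le n$. Because the point set is in convex position, the chord $\{1,j\}$ cannot be crossed, so it splits the remaining points into two convex subchains of sizes $j-2$ and $n-j$, and a non-crossing matching of the whole set decomposes uniquely into a non-crossing matching on each side. Setting $k = j-2$, this yields the classical Motzkin recurrence
\[
M_n \;=\; M_{n-1} + \sum_{k=0}^{n-2} M_k\, M_{n-k-2},
\]
with $M_0 = M_1 = 1$ (the empty matching on $0$ or $1$ points). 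Since this recurrence together with these initial conditions characterizes the Motzkin numbers (see OEIS A001006), the count is identified.

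For the asymptotic bound I would pass to the ordinary generating function $M(x) = \sum_{n \ge 0} M_n x^n$. Multiplying the recurrence by $x^n$ and summing, the three terms correspond respectively to the constant $1$, to $xM(x)$ (point $1$ unmatched), and to $x^2 M(x)^2$ (point $1$ matched, contributing one edge plus two independent subproblems). Thus $M(x) = 1 + x M(x) + x^2 M(x)^2$, and solving the quadratic (choosing the branch analytic at $0$) gives
\[
M(x) \;=\; \frac{1 - x - \sqrt{\,1 - 2x - 3x^2\,}}{2x^2} \;=\; \frac{1 - x - \sqrt{(1-3x)(1+x)}}{2x^2}.
\]
The dominant singularity on the circle of convergence is at $x = 1/3$, of square-root type, while $x = -1$ lies strictly farther from the origin and is also of square-root type but contributes only a lower-order oscillation.

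To extract $M_n$, I would apply the standard transfer theorem for algebraic singularities (Flajolet--Sedgewick, or a direct Darboux argument): the local expansion of $\sqrt{1-3x}$ at $x = 1/3$ contributes a term of the form $c \cdot 3^n n^{-3/2}$ to the coefficient, with $c = \tfrac{3\sqrt{3}}{2\sqrt{\pi}}$ after evaluating the prefactor $(1+x)^{1/2}/(2x^2)$ at $x = 1/3$. This yields $M_n = \Theta(3^n / n^{3/2})$, as claimed. The main (and only substantive) obstacle is the singularity analysis: the exponential factor $3^n$ is immediate from the location of the singularity, while the polynomial correction $n^{-3/2}$ requires the transfer theorem, though its application here is completely routine.
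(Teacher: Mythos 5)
Your proof is correct: the decomposition by the status of point $1$ gives the Motzkin recurrence $M_n = M_{n-1} + \sum_{k=0}^{n-2} M_k M_{n-2-k}$, the generating function $M(x) = \bigl(1-x-\sqrt{(1-3x)(1+x)}\,\bigr)/(2x^2)$ follows, and the square-root singularity at $x=1/3$ yields $M_n \sim \frac{3\sqrt{3}}{2\sqrt{\pi}}\, 3^n n^{-3/2}$, which is what the proposition asserts. Note that the paper does not prove this statement at all --- it is quoted as a well-known fact with a citation to OEIS A001006 --- so there is no in-paper argument to compare against; your derivation is the standard one, and it uses exactly the toolkit (exponential growth formula plus transfer theorem for algebraic singularities, as in Flajolet--Sedgewick) that the paper itself invokes later for the zigzag-chain generating function in Section~\ref{zigzag-chains}.
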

Moreover, every matching of a downward chain is obviously down-free.
Therefore, Theorem~\ref{thm:double}, with $\lambda=3$ and $\alpha=3/2$ gives immediately
$\mathsf{pm}(\mathrm{DC}_n) = \Omega(3^n / n^{4})$ and $O(3^n / n^{3})$.

\medskip

In the next sections we use Theorem~\ref{thm:double} for estimating $\mathsf{pm}$
for other constructions.

\section{Zigzag chains}
\label{zigzag-chains}
\label{SZZC}

By Theorem~\ref{thm:double},
asymptotic bounds on $\mathsf{dfm}(\mathrm{SZZC}_n)$
imply those on $\mathsf{pm}(\mathrm{DZZC}_n)$.
Thus, we analyze the number of down-free matchings of $\mathrm{SZZC}_n$.
We defined above two kinds of double chains: even and odd.
We introduce three generating functions depending on the kind of chain and on the parity of $n$:
\begin{enumerate}
\item $A(x)=\sum_{k\geq 0}a_kx^k$, where
$a_k = \mathsf{dfm}(\mathrm{eSZZC}_{2k+1})$;
\item $B(x)=\sum_{k\geq 0}b_kx^k$, where
$b_k = \mathsf{dfm}(\mathrm{oSZZC}_{2k+1})$;
\item $C(x)=\sum_{k\geq 0}c_kx^k$, where
$c_k = \mathsf{dfm}(\mathrm{eSZZC}_{2k})= \mathsf{dfm}(\mathrm{oSZZC}_{2k})$.
\end{enumerate}
We find recursive relationships between the coefficients of these functions.

\paragraph{Recursion for $a_k$.}
For every $k\geq 0$ we have the following cases (Figure~\ref{fig:a_cases}).
\begin{enumerate}
\item $p_1$ is not matched. This contributes $c_k$ matchings.
\item $p_1$ is matched to $p_{2i+1}$ with $2 \leq i \leq k$.
This contributes $\sum_{2 \leq i \leq k } b_{i-1} c_{k-i}$ matchings.
\item $p_1$ is matched to $p_{2i}$ with $1 \leq i \leq k$, $p_{2i-1}$ and $p_{2i+1}$ are not matched to each other.
This contributes $\sum_{1 \leq i \leq k } c_{i-1} a_{k-i}$ matchings.
\item $p_1$ is matched to $p_{2i}$ with $2 \leq i \leq k$, $p_{2i-1}$ and $p_{2i+1}$ are matched to each other.
This contributes $\sum_{2 \leq i \leq k } b_{i-2} c_{k-i}$ matchings.
\item $p_1$ is matched to $p_3$. Then $p_2$ must be matched to some point $p_{2i+1}$ with $2 \leq i \leq k$.
This contributes $\sum_{2 \leq i \leq k } b_{i-2} c_{k-i}$ matchings.
\item $p_1$ is matched to $p_3$, $p_2$ is matched to $p_{2i}$ with $2 \leq i \leq k$,
and $p_{2i-1}$ and $p_{2i+1}$ are not matched to each other.
This contributes $\sum_{2 \leq i \leq k } c_{i-2} a_{k-i}$ matchings.
\item $p_1$ is matched to $p_3$, $p_2$ is matched to $p_{2i}$ with $3 \leq i \leq k$,
and $p_{2i-1}$ and $p_{2i+1}$ are matched to each other.
This contributes $\sum_{3 \leq i \leq k } b_{i-3} c_{k-i}$ matchings.
\end{enumerate}
\begin{figure}[htb]
\centering
\includegraphics[scale=0.95]{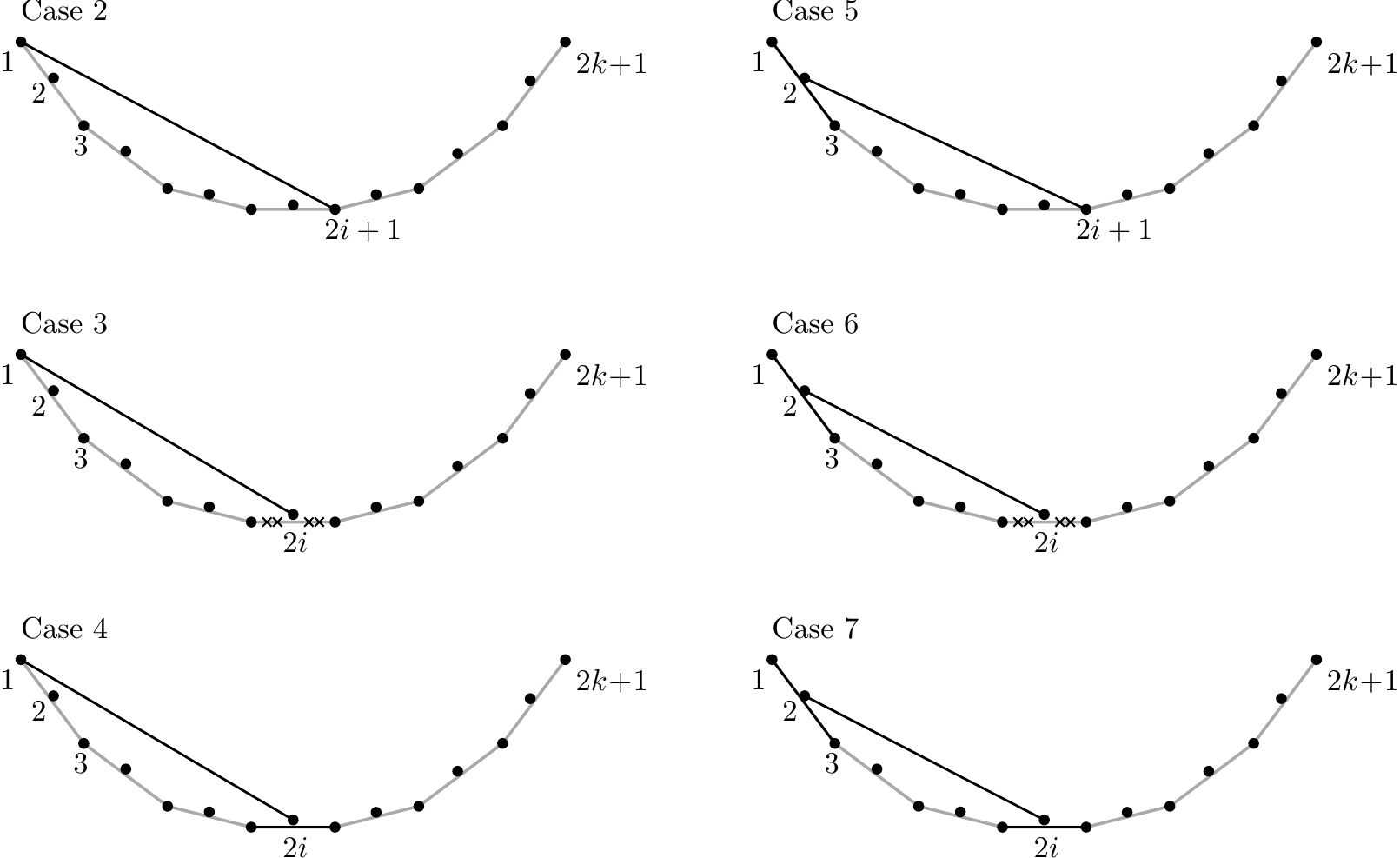}
\caption{The cases in the recursion for $a_k$.}
\label{fig:a_cases}
\end{figure}
Thus we obtain
\begin{equation}\label{eq:a}
a_k = c_k + \sum_{2 \leq i \leq k } b_{i-1} c_{k-i} + \sum_{1 \leq i \leq k } c_{i-1} a_{k-i} + 2\sum_{2 \leq i \leq k } b_{i-2} c_{k-i}+
\sum_{2 \leq i \leq k } c_{i-2} a_{k-i} + \sum_{3 \leq i \leq k } b_{i-3} c_{k-i}.
\end{equation}

\paragraph{Recursion for $b_k$.}
For every $k\geq 0$ we have the following cases, see Figure~\ref{fig:bc_cases}, left side.
\begin{enumerate}
\item $p_1$ is not matched. This contributes $c_k$ matchings.
\item $p_1$ is matched to $p_{2i}$ with $1 \leq i \leq k$.
This contributes $\sum_{1 \leq i \leq k } c_{i-1} b_{k-i}$ matchings.
\item $p_1$ is matched to $p_{2i+1}$ with $1 \leq i \leq k$, $p_{2i}$ and $p_{2i+2}$ are not matched to each other.
This contributes $\sum_{1 \leq i \leq k } a_{i-1} c_{k-i}$ matchings.
\item $p_1$ is matched to $p_{2i+1}$ with $1 \leq i \leq k-1$, $p_{2i}$ and $p_{2i+2}$ are matched to each other.
This contributes $\sum_{1 \leq i \leq k-1 } c_{i-1} b_{k-i-1}$ matchings.
\end{enumerate}
\begin{figure}[htb]
\centering
\includegraphics[scale=0.95]{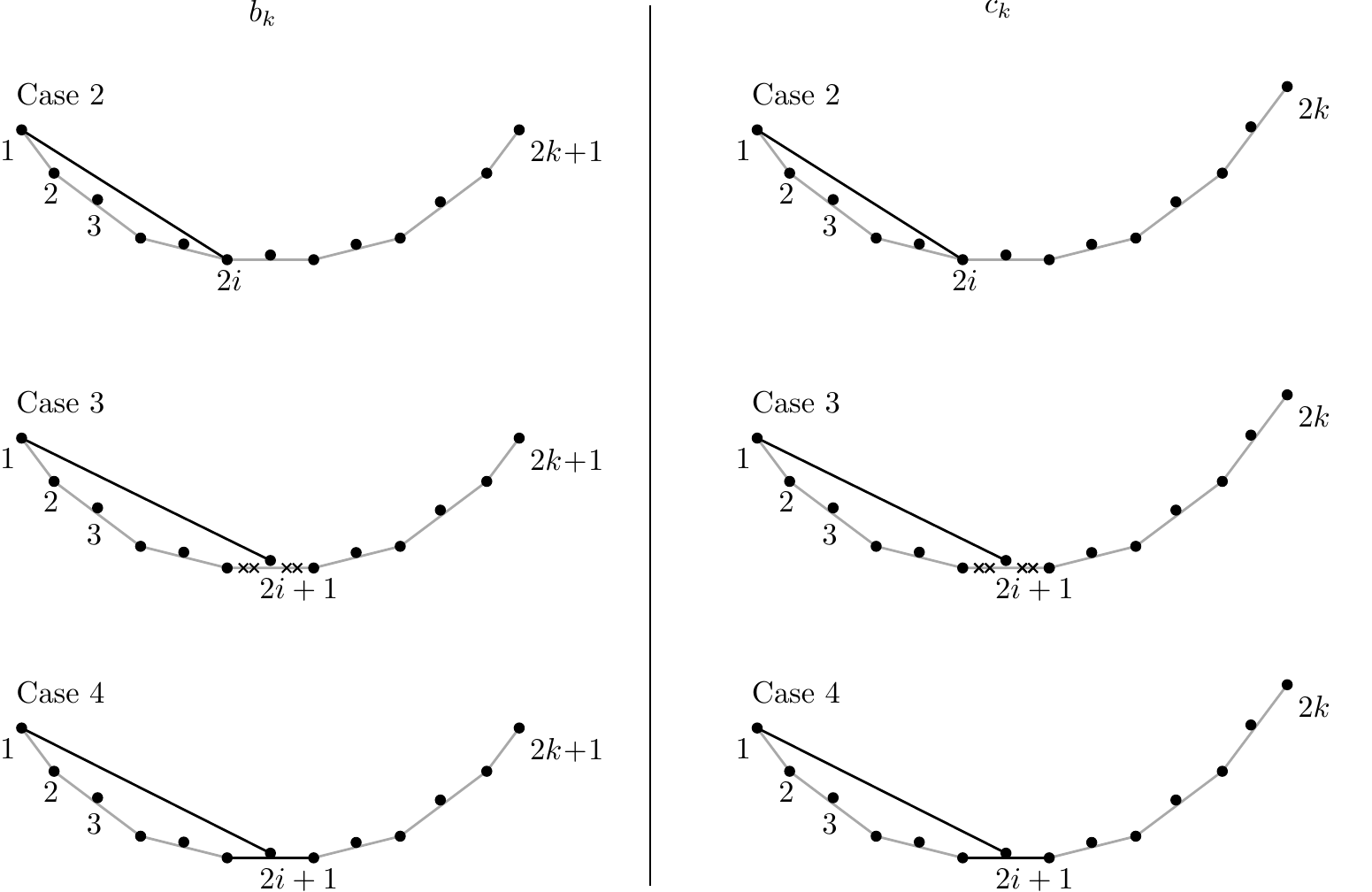}
\caption{The cases in the recursions for $b_k$ and $c_k$.}
\label{fig:bc_cases}
\end{figure}
This yields
\begin{equation}\label{eq:b}
b_k = c_k + \sum_{1 \leq i \leq k } c_{i-1} b_{k-i} + \sum_{1 \leq i \leq k } a_{i-1} c_{k-i} + \sum_{1 \leq i \leq k-1 } c_{i-1} b_{k-i-1}.
\end{equation}

\paragraph{Recursion for $c_k$.}
Clearly, $c_0=1$. For $k\geq 1$ we have the following cases, see Figure~\ref{fig:bc_cases}, right side.
\begin{enumerate}
\item $p_1$ is not matched. This contributes $a_{k-1}$ matchings.
\item $p_1$ is matched to $p_{2i}$ with $1 \leq i \leq k$.
This contributes $\sum_{1 \leq i \leq k } c_{i-1} c_{k-i}$ matchings.
\item $p_1$ is matched to $p_{2i+1}$ with $1 \leq i \leq k-1$, $p_{2i}$ and $p_{2i+2}$ are not matched to each other.
This contributes $\sum_{1 \leq i \leq k-1 } a_{i-1} a_{k-i-1}$ matchings.
\item $p_1$ is matched to $p_{2i+1}$ with $1 \leq i \leq k-1$, $p_{2i}$ and $p_{2i+2}$ are matched to each other.
This contributes $\sum_{1 \leq i \leq k-1 } c_{i-1} c_{k-i-1}$ matchings.
\end{enumerate}
This gives
\begin{equation}\label{eq:c}
c_k = a_{k-1} + \sum_{1 \leq i \leq k } c_{i-1} c_{k-i} + \sum_{1 \leq i \leq k-1 } a_{i-1} a_{k-i-1} + \sum_{1 \leq i \leq k-1 } c_{i-1} c_{k-i-1} .
\end{equation}
After simplifying equations \thetag{\ref{eq:a}--\ref{eq:c}}, we obtain:
\begin{gather*}
a_k = c_k - c_{k-1} +
\sum_{i=0 }^{ k-1 } b_{i} c_{k-1-i} +
\sum_{i=0 }^{ k-1 } c_{i} a_{k-1-i} +
2\sum_{i=0 }^{ k-2 } b_{i} c_{k-2-i} +
\sum_{i=0 }^{ k-2 } c_{i} a_{k-2-i} +
\sum_{i=0 }^{ k-3 } b_{i} c_{k-3-i}\\
b_k = c_k +
\sum_{i=0 }^{k-1 } c_{i} b_{k-1-i} +
\sum_{i=0 }^{k-1 } a_{i} c_{k-1-i} +
\sum_{i=0 }^{k-2 } c_{i} b_{k-2-i}
\\c_k = a_{k-1} +
\sum_{i=0}^{ k-1 } c_{i} c_{k-1-i} +
\sum_{i=0}^{ k-2 } a_{i} a_{k-2-i} +
\sum_{i=0}^{ k-2 } c_{i} c_{k-2-i}
\end{gather*}
We translate these equations into generating functions and obtain the
following system, where
we write $A, B, C$ for $A(x), B(x), C(x)$:
\begin{gather*}
A = C ((1-x) + x(1+x)A + x(1+x)^2B )
\\
 B = C ( 1 + xA + x(1+x) B)
\\
C = 1 + xA + x^2A^2 + x(1+x)C^2
\end{gather*}
%
%
%
%
We eliminate $A$ and $B$ from this system and find that $C$ satisfies the equation
  \begin{multline}\label{eq:eqc}
1
-(1+3x+5x^2)C
+x(5+8x+8x^2+9x^3)C^2
-8x^2(1+x)(1+x+x^3)C^3
+{}\\{}
+4x^3(1+x+x^3)(1+x)^2C^4
=0,
  \end{multline}
and that $A$ and $B$ are related to $C$ as follows:
\begin{gather*}
A = \frac{C(1 - x + 2x^2C + 2x^3C)}{1-2xC-2x^2C}
\\
B = \frac{C(1 - 2x^2C )}{1-2xC-2x^2C}
\end{gather*}
Equation~\eqref{eq:eqc} has four solutions.
Only one of them can be written as a formal power series:
\[
\resizebox{\hsize}{!}
{
$C = 
\frac{2(1+x+x^3)-
\sqrt{
2(1+x+x^3)
\left(
1-2x-8x^2-3x^3+
(1+x)\sqrt{(1-x-3x^2)(1-9x-3x^2)}
\right)
}
}
{4x(1+x)(1+x+x^3)}.$
}
\]
The other three solutions have different combinations of signs before
the two square roots. For those combinations, the numerator has a
non-zero
constant term, and this cannot balance the absence of a constant term
in the denominator.
For the series $C(x)$ given above,
the singularity
 closest to $0$ occurs in $\mu=\frac{\sqrt{93}}{6}-\frac{3}{2}$,
one of the roots of $1-9x-3x^2$.
It is a square root singularity,
and there is no other singularity with the same absolute value;
thus,
by the exponential growth formula
~\cite[Thm. IV.7]{flajolet}
and a transfer theorem \cite[Thm. VI.1]{flajolet},
the asymptotics of the sequence is $c_k 
=\Theta((1/\mu)^k \, k^{- 3/2})$
with
$1/\mu =
  (\sqrt{93}+ 9)/2
 \approx 9.3218 $.

Since $c_k$ counts matchings of $2k$ points, it follows that
the number of down-free matchings of $\mathrm{SZZC}_n$
of this kind is $\Theta(\lambda^n / n^{3/2}) $,
where $\lambda = \sqrt{1/ \mu} = \sqrt{
  (\sqrt{93}+ 9)/2
} \approx 3.0532$.
It is easy to see that the same bound holds for all kinds of zigzag chains:
for the proof, note that
a zigzag chain of kind C with $2k$ points
includes
a zigzag chain of kind A with $2k-1$ points
and is included in
a zigzag chain of kind A with $2k+1$ points;
similarly for kind B.

Finally, it follows from Theorem~\ref{thm:double} that
the number of perfect matchings of $\mathrm{DZZC}_n$ (of either kind)
is $\Omega(\lambda^n / n^{4})$
and $O(\lambda^n / n^{3})$.
This proves Theorem~\ref{thm:main}.

\section{$r$-chains without corners}\label{sec:r_chains}
\subsection{Definition of $r$-chains with and without corners}\label{sec:r_chains_def}

In the following two sections we deal with further generalizations of the double chain.
An upward single chain will be called an \emph{arc}.
As usual, the size of an arc is the number of its points.
Recall that three points with  distinct $x$-coordinates
are in \emph{upward position} if they form
a clockwise oriented triangle when sorted by $x$-coordinate.

We define an \emph{$r$-chain (with corners) with $k$ arcs},
to be denoted by $\mathrm{CH}(r,k)$,
see Figure~\ref{fig:ch_with_corners}(a) for an example.
It consists of $k$ arcs of size $r+1$,
the rightmost point of the $i$th arc
($1 \leq i \leq k-1$)
coinciding with
the leftmost point of the $(i+1)$st arc,
so that
any three points are in upward position
if and only if they belong to the same arc.
An $r$-chain $\mathrm{CH}(r,k)$ has $rk+1$ points.
As a special case, a simple (downward) chain is a $1$-chain,
and an even zigzag chain of odd size is a $2$-chain.

One can construct an $r$-chain $\mathrm{CH}(r,k)$ with $k$ arcs
as follows:
\begin{itemize}
\item
Take $k+1$ points $V_{0}, V_{1}, V_{2}, \dots, V_{k}$,
sorted by $x$-coordinate, in downward position.
These points will be called the \emph{corners}.
\item For each $i = 1, 2, \dots, k$,
add $r-1$ points
on the segment $V_{i-1} V_{i}$.
\item Replace each segment $V_{i-1} V_{i}$
by a very flat upward circular arc through $V_{i-1}$ and $V_{i}$.
Move the $r-1$ points from the segment
vertically upwards so that they lie on this circular arc.
The radius of the circular arc must be sufficiently big so that
the orientation of triples of points that do not lie on the same segment
is not changed.
\end{itemize}
We shall often use a compact schematic
drawing of $r$-chains as in Figure~\ref{fig:ch_with_corners}(b).
 In such drawings we have to
draw some matching  edges
as curved lines rather than as straight-line segments.
\begin{figure}[htb]
\centering
\includegraphics[scale=0.9]{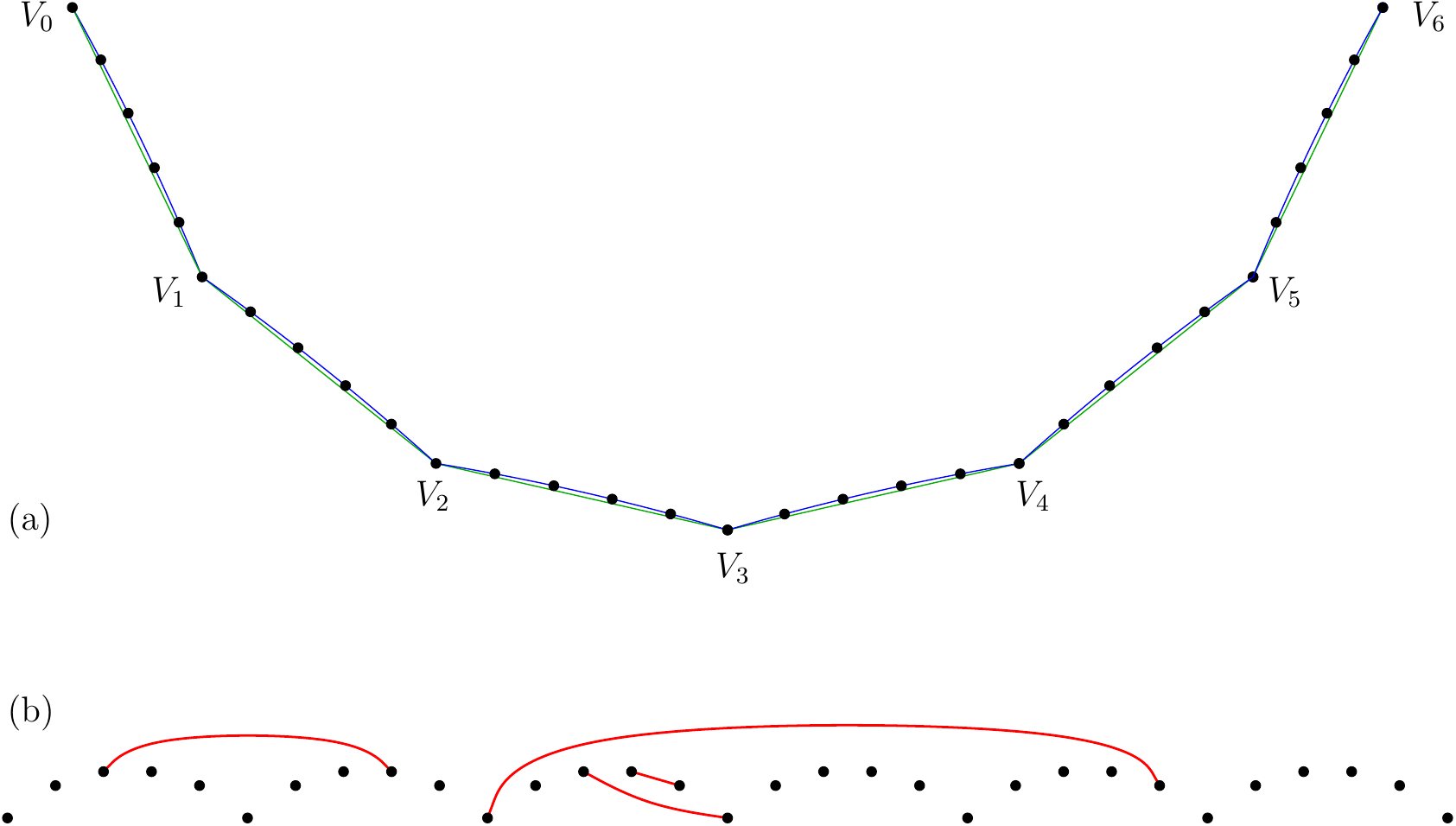}
\caption{A $5$-chain (with corners) with six arcs: (a) a precise drawing; (b) a schematic drawing.}
\label{fig:ch_with_corners}
\end{figure}

The class of
(double)
$r$-chains was 
 earlier used for finding lower bounds
on the maximal number of \emph{triangulations} ($\mathsf{tr}$)
of point sets in the plane.
Garc\'ia, Noy, and Tejel~\cite{garcia} showed that
$\mathsf{tr}(\mathrm{DC}_n)=\Theta^*(8^n)$.
Aichholzer,
Hackl, Huemer, Hurtado, Krasser, and Vogtenhuber~\cite{aich1}
 improved this bound by showing that
$\mathsf{tr}(\mathrm{DZZC}_n)=\Theta^*(8.48^n)$.
This result was further improved by Dumitrescu,
Schulz, Sheffer, and T\'oth~\cite{dum},
who showed that a double $4$-chain of size $n$,
denoted in their work by $D(n, 3^{n/8})$,
has $\Omega(8.65^n)$ triangulations.

Now we define a variation of this structure whose analysis is easier.
An \emph{$r$-chain without corners with $k$ arcs},
denoted by
$\mathrm{CH}^*(r,k)$,
is a set obtained from
$\mathrm{CH}(r+1,k)$
by deleting the corners.
It consists of $rk$ points.
See Figure~\ref{fig:ch_without_corners} for an example.
In this section, we will analyze
$r$-chains without corners, and we will find precise asymptotic
estimates for the number of down-free matchings.
In the next section, we will turn to
$r$-chains with corners. They give even stronger lower bounds, but the analysis will not be so precise.

\begin{figure}[htb]
\centering
\includegraphics[scale=0.9]{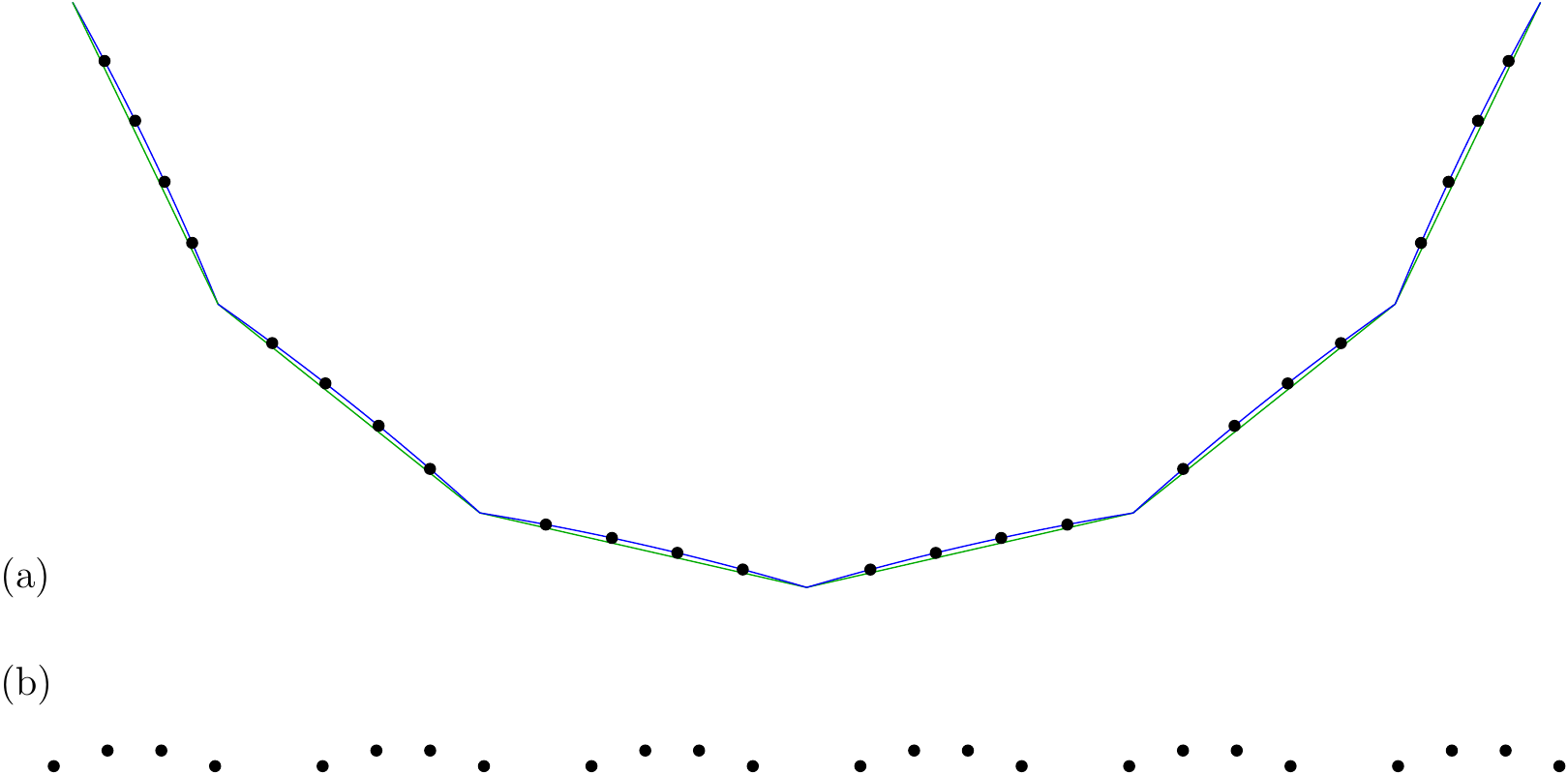}
\caption{A $4$-chain without corners with six arcs: (a) precise drawing; (b) schematic drawing.}
\label{fig:ch_without_corners}
\end{figure}

\subsection{Matchings with runners}
\label{sec:unf_runners}

Consider a matching of $X=\mathrm{CH}^*(r,k)$.
We want to build down-free matchings incrementally from left to right by adding one arc at a time.
If we cut such a matching between two arcs, then we possibly have some edges
cut into two ``half-edges'', which we call \emph{runners}.
(In botany, runners are shoots that connect individual plants.)
A runner
can be formally defined
as a \emph{marked vertex}.
Such a vertex must not be matched by ``proper'' edges and must be visible from above.
 These requirements ensure that, in the course of the incremental construction,
 two runners  can be joined into one edge.
Runners are visualized as half-edges that have one endpoint in $X$
and the other end dangling, see Figure~\ref{fig:runners}(a).
Note that it is not assigned in advance whether a runner will be matched to the left or to the right.

A matching which possibly has runners will be called a
\emph{$\rho$-matching}.
Extending our previous definition of free points, we call
a point
 \emph{free} in a $\rho$-matching
if it is neither matched by a ``proper'' edge
nor marked as an endpoint of a runner.
A $\rho$-matching is \emph{down-free} if all free vertices
are visible from below.

\begin{figure}[htb]
\centering
\includegraphics[scale=1]{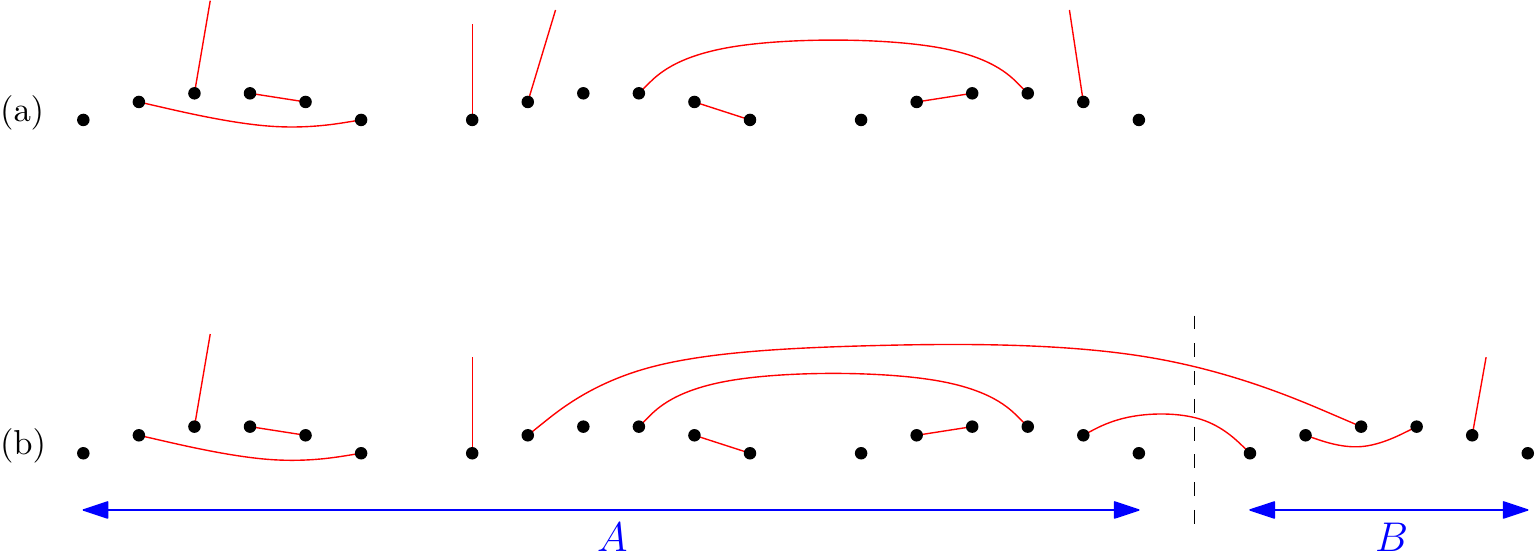}
\caption{(a) A down-free $\rho$-matching $M_A$ with four runners of $A=\mathrm{CH}^*(6,3)$.
(b) Combining  $M_A$ with a down-free $\rho$-matching with three runners of $B=\mathrm{CH}^*(6,1)$.}
\label{fig:runners}
\end{figure}

In the course of the recursive construction
of down-free $\rho$-matchings,
runners from different arcs can be matched
 as follows.
Let $A$ and $B$ be two $r$-chains without corners,
and let $M_A$ and~$M_B$ be down-free $\rho$-matchings of these sets.
We place $B$ to the right of $A$.
If $M_A$ has $j$ runners and $M_B$ has $\beta$ runners,
then for each $\ell$  in the range $0 \leq \ell \leq \min\{j, \beta\}$
we can match, in a unique way,
the rightmost $\ell$ runners of $M_A$ to the leftmost $\ell$ runners of $M_B$.
The obtained $\rho$-matching $M$ is also down-free;
the runners which were not matched in this procedure remain runners in~$M$;
the number of such runners is $j+\beta - 2\ell$.
Conversely,
each down-free $\rho$-matching of $A \cup B$ can be obtained by this procedure
from two uniquely determined down-free $\rho$-matchings of $A$ and $B$.
Figure~\ref{fig:runners}(b) shows an example with
$j=4$, $\beta=3$, $\ell=2$.

We summarize these observations for the special case that we will use in the recursive construction
of $r$-chains: adding one new arc to the right of a given $r$-chain,
 see Figure~\ref{fig:shoots}.

\begin{figure}[htb]
\centering
\includegraphics[scale=1]{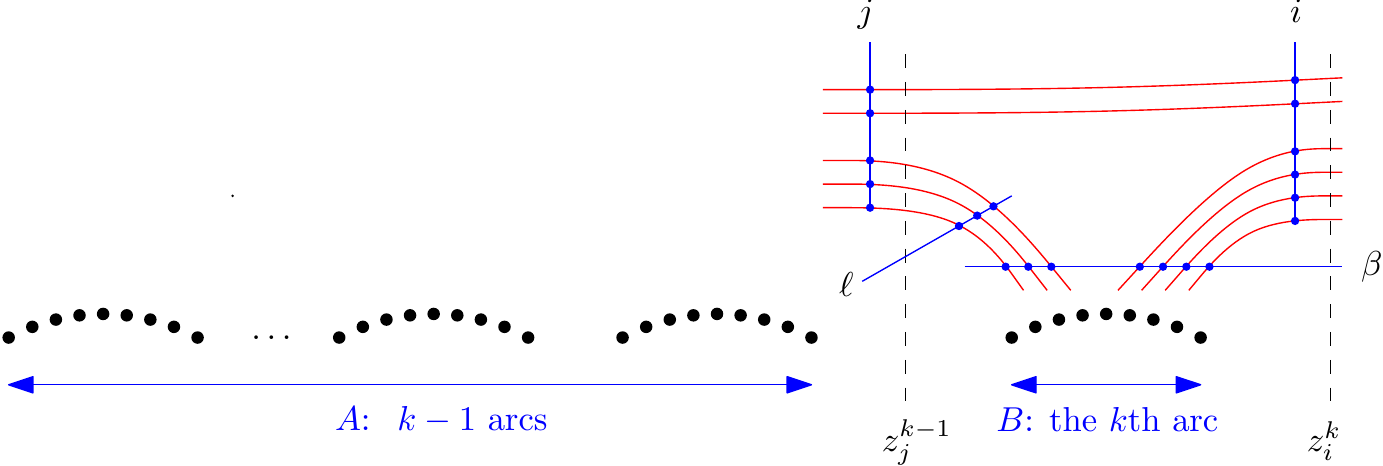}
\caption{Runners in a recursive construction of a $\rho$-matching of an $r$-chain without corners.}
\label{fig:shoots}
\end{figure}

\begin{observation}\label{obs:runners}
Let $X=\mathrm{CH}^*(r,k)$ be an $r$-chain without corners with $k\geq 1$ arcs.
  Let $B$ be the rightmost arc of $X$, and let $A = X \setminus B$.
 Let $M_A$ be a down-free $\rho$-matching of $A$ with $j$ runners,
 and let $M_B$ be a down-free $\rho$-matching of $B$ with $\beta$ runners.
 For each $0 \leq \ell \leq \min\{j, \beta\}$
 there exists a unique down-free $\rho$-matching $M_{X,\ell}$ of $X$
 obtained by matching the rightmost $\ell$ runners of $M$
 with the leftmost $\ell$ runners of~$N$.
 The number of runners in $M_{X,\ell}$ is $i = j+\beta - 2\ell$.

Conversely, each down-free $\rho$-matching $M$ of $X$
can be obtained in this way from uniquely determined matchings $M_A$ and $M_B$ (of $A$ and $B$) as above.
If $M$ has $i$ runners,
$M_A$ has $j$ runners,
and $M_B$ has $\beta$ runners,
then the number of edges obtained by matching of pairs of runners
is $\ell = (j+\beta-i)/2$.
\end{observation}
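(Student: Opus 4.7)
The plan is to prove both directions of the claimed bijective correspondence by exploiting the geometry of the $r$-chain without corners.

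For the forward construction, given $M_A$ with $j$ runners $a_1 < a_2 < \dots < a_j$ (sorted by $x$-coordinate) and $M_B$ with $\beta$ runners $b_1 < b_2 < \dots < b_\beta$, I would match, for each fixed $\ell \le \min\{j,\beta\}$, the rightmost $\ell$ runners of $M_A$ with the leftmost $\ell$ runners of $M_B$ in the nested way $a_{j-i+1} \leftrightarrow b_i$ for $i = 1, \dots, \ell$. Three things must be verified: (a) the $\ell$ new straight edges are pairwise non-crossing; (b) none of them crosses an existing edge of $M_A \cup M_B$; (c) the resulting $M_{X,\ell}$ is again a down-free $\rho$-matching, i.e.\ its $j+\beta - 2\ell$ remaining runners stay visible from above while its free vertices (the free vertices of $M_A$ together with those of $M_B$) stay visible from below. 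Point (a) reduces to the standard nested non-crossing pattern on a convex quadrilateral, which is forced by the left-to-right ordering of runners. Point (c) should follow because the $\ell$ new edges lie entirely in the $x$-range between $a_{j-\ell+1}$ and $b_\ell$, and hence neither obstruct the vertical downward rays from free points of $M_A \cup M_B$ nor the vertical upward rays from the unmatched runners, all of which lie outside this strip.

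For the inverse, I would start with any down-free $\rho$-matching $M$ of $X$ and partition its edges into three types: $A$-edges (both endpoints in $A$), $B$-edges (both endpoints in $B$), and crossing edges (one endpoint in each of $A$ and $B$). I would then define $M_A$ to consist of the $A$-edges together with a fresh runner at each $A$-endpoint of a crossing edge and with each runner of $M$ sitting at a point of $A$; define $M_B$ analogously. Since this decomposition is forced by $M$, the pair $(M_A, M_B)$ is uniquely determined. Down-freeness of $M_A$ and $M_B$ is inherited from $M$: a downward ray from a free point of $A$ cannot meet any edge supported in $B$ (since $B$ lies strictly to the right of $A$) nor any crossing edge (since crossing edges leave $A$ going upward and to the right), and the newly introduced runners of $M_A$ have upward visibility because the crossing edge to which each belonged went upward. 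The counting identity $\ell = (j+\beta - i)/2$ then follows since the crossing edges of $M$ are in bijection with pairs of runners of $M_A$ and $M_B$ that were matched off in the join.

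The main obstacle I anticipate is point (b) above: verifying that a straight segment from a runner of $M_A$ to a runner of $M_B$ cannot clip an existing edge of either sub-matching. This is where the specific shape of $\mathrm{CH}^*(r,k)$ enters: each arc is only slightly curved upward, and upward visibility of both endpoints of the new segment, combined with the fact that edges of $M_A$ (resp.\ $M_B$) all sit over points of $A$ (resp.\ $B$) and that points across different arcs sit in downward position, should force the new segment to pass above any potentially obstructing edge. I plan to make this rigorous by a case analysis distinguishing edges within the same arc from edges spanning several arcs and invoking the downward position of the arc endpoints to bound the slopes involved.
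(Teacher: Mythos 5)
The paper gives no written proof of Observation~\ref{obs:runners} -- it is stated as a summary of the informal discussion of runners, with the geometric content treated as evident -- so your decomposition (split $M$ at the $A$/$B$ boundary, replace each bridge edge by a pair of runners, glue back in the nested way) is the natural and intended route; the issue is completeness. The decisive geometric step, which you yourself flag as the main obstacle and only promise to handle ``by a case analysis\dots bounding the slopes'', is that a segment joining a runner $a$ of $M_A$ to a runner $b$ of $M_B$ cannot cross an existing edge. No slopes or arc-by-arc cases are needed: since $a\in A$ and $b\in B$ lie in different arcs, every point of $X$ strictly between them in $x$-order forms with $a$ and $b$ a triple in downward position and hence lies strictly below the chord $ab$; combined with the visibility of $a$ (resp.\ $b$) from above, the wedge argument from the proof of Proposition~\ref{prop:dw-highabove}(1) applies verbatim -- an edge crossing $ab$ but not the upward vertical ray from $a$ must have an endpoint that is above the line $\ell(a,b)$ and lies between $a$ and $b$ in $x$-order, which is impossible. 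Until this is written out, the forward direction is a plan rather than a proof. A further slip: it is \emph{not} true that all free points of $M_A\cup M_B$ lie outside the vertical strip spanned by the new edges (a free point of $M_A$ may lie to the right of its rightmost runner); downward visibility survives for the different reason that, by the below-chord property, the new edges pass strictly above every point of $X$ in their $x$-range, so they cannot block a downward ray.

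In the converse direction there is a genuine missing step. Forcedness of the decomposition and the count $\ell=(j+\beta-i)/2$ do not yet show that $M$ ``is obtained in this way'', i.e.\ by matching the \emph{rightmost} $\ell$ runners of $M_A$ to the \emph{leftmost} $\ell$ runners of $M_B$ in the unique non-crossing (nested) pattern. You must also argue that no runner of $M$ lying in $A$ is to the right of the $A$-endpoint of a bridge edge (and symmetrically in $B$) -- otherwise that runner would lie below the bridge edge, so its upward ray would cross an edge of $M$ -- and that the bridge edges of $M$ are nested, which follows from non-crossing via the convex-quadrilateral argument you already invoke. Finally, your reason that the new runner endpoints of $M_A$ are visible from above, ``because the crossing edge went upward'', is not an argument (the bridge edge need not go upward); the correct reason is that an edge of $M_A$ crossing the upward ray from such an endpoint $a$ would, by the below-chord property, have to cross the bridge edge $ab$ itself, contradicting that $M$ is non-crossing. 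With these repairs your outline becomes a correct proof of the observation that the paper leaves implicit.
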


For $k=1$, this observation holds trivially:
 $A$ is empty, and the only possibility is $j=\ell=0$, $\beta=i$.
 From the above relations between the parameters $i,j,\beta,\ell$,
one can work out the constraints on the possible values of $\beta$
for given $i$ and $j$:
 The equation $i = j+\beta - 2\ell$
together with $0 \leq \ell \leq \min\{j, \beta\}$
implies that $\beta$ must satisfy
$|i-j| \leq \beta \leq i+j$ and $\beta \equiv i-j \pmod{2}$.

\subsection{Recursion for matchings with runners in $r$-chains without corners}
\label{sec:r_chains_wo_analysis}

Denote the number of down-free $\rho$-matchings of $\mathrm{CH}^*(r,k)$
with $i$  runners by $z^k_i(r)$
or simply by $z^k_i$,
since we will regard $r$ as fixed.
Obviously, the down-free matchings of $X=\mathrm{CH}^*(r,k)$
are the down-free $\rho$-matchings without runners.
Since the size of $X$ is $rk$, the growth rate for the number of its down-free matchings
is $\sqrt[r]{z^k_0(r)}$.

For $k=0$ we have $z^0_0=1$ and $z^0_i=0$ for $i>0$.
The numbers $z^1_i$ for a single arc 
will serve as a basis of the recursion. They
are determined in the following proposition.

\begin{proposition} \label{thm:binom_gf}
\begin{enumerate}
  \item
\label{part1}
 The number of down-free matchings (without runners) of a single arc of size $r$ is
  \[ z^1_0= z^1_0(r)=
\binom{r}{\lfloor r/2 \rfloor}.\]
  \item
\label{thm:binom_gf_general}
 The number of down-free $\rho$-matchings of a single arc of size $r$
 that have $i$ runners
is \[ z^1_i= z^1_i(r)=
\binom{r}{i}
\binom{r-i}{\lfloor (r-i)/2 \rfloor}=
\binom{r}{i, \lfloor (r-i)/2 \rfloor, \lceil (r-i)/2 \rceil}.\]
\end{enumerate}
\end{proposition}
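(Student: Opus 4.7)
The plan is to translate the geometric down-free condition on a single arc into a purely combinatorial condition on words, enumerate these via a generating function, recognize the result as the central binomial coefficients, and then reduce Part~2 to Part~1 by placing runners freely on the arc.

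I would start with the geometric-to-combinatorial translation. Since the arc's points $p_1,\dots,p_r$ (sorted by $x$-coordinate) lie on a strictly concave curve, every chord $p_ap_b$ passes strictly \emph{below} each $p_k$ with $a<k<b$, and no chord ever passes above an arc point. Two consequences drive the whole argument: (a)~the downward vertical ray from $p_k$ meets an edge of a matching if and only if some edge straddles $k$ in $x$-order; (b)~the upward vertical ray from every arc point is unobstructed, so that the ``visible from above'' condition on runners is automatic and any non-matched point may be marked as a runner.

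For Part~1, encode each point by the letter $L$ (left endpoint of an edge), $R$ (right endpoint of an edge), or $F$ (free). By~(a), down-freeness is equivalent to the subword of $L$'s and $R$'s being a balanced parenthesization in which every $F$ occurs at depth~$0$. A depth-$0$ atom of such a word is therefore either a single $F$ or a block $L\cdots R$ enclosing an arbitrary perfect, $F$-free balanced word of some length $2m$, counted by the Catalan number $C_m$; hence the atom generating function is $\Phi(x)=x+x^2C(x^2)$, where $C(y)=(1-\sqrt{1-4y})/(2y)$. A valid word is a sequence of atoms, giving $A(x):=\sum_{r\ge 0}a_rx^r=1/(1-\Phi(x))$. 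A short simplification yields $A(x)=\tfrac{1}{2x}\bigl(\sqrt{(1+2x)/(1-2x)}-1\bigr)$, which one recognizes as $\sum_{r\ge 0}\binom{r}{\lfloor r/2\rfloor}x^r$ (for instance by splitting into even and odd parts and using $\sum_k\binom{2k}{k}y^{k}=1/\sqrt{1-4y}$). This proves Part~1.

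For Part~2, observation~(b) shows that a down-free $\rho$-matching with $i$ runners on an $r$-arc is uniquely described by a choice of $i$ positions to be marked as runners (contributing the factor $\binom{r}{i}$) together with a down-free matching without runners on the remaining $r-i$ arc points, which themselves form an arc; applying Part~1 to that smaller arc produces the factor $\binom{r-i}{\lfloor(r-i)/2\rfloor}$, and the multinomial rewriting is trivial. The main obstacle is really just the geometric input (a)--(b): one must invoke the strict concavity and the general-position assumption carefully to get exactly the straddling characterization of down-freeness and the fact that no chord can rise above an arc point, so runners indeed cost nothing. Everything after this is standard; an alternative to the generating-function calculation would be to verify the recurrence $a_r=a_{r-1}+\sum_{m\ge 0}C_m\,a_{r-2-2m}$ on both sides and match it with the elementary recurrence for central binomial coefficients.
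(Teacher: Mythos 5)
Your proposal is correct and follows essentially the same route as the paper: the depth-zero atom decomposition yields exactly the paper's functional equation $f(x)=1/\bigl(1-x-x^2c(x^2)\bigr)$, your \texttt{L}/\texttt{R}/\texttt{F} word encoding mirrors the paper's bijective argument, and Part~2 is reduced to Part~1 in the same way (runners are automatically visible from above on a concave arc, so one just chooses their $i$ positions). The only cosmetic difference is that you verify the central-binomial closed form by an explicit even/odd split of the generating function, whereas the paper cites OEIS A001405.
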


\begin{proof}
1. For the first equation, 
let $f(x) = \sum_{r=0}^\infty z^1_0(r)x^r$
 be the generating function for the number of such matchings
in terms of the size $r$ of an arc.
We will show that $f(x)$ satisfies the equation
\begin{equation}
  \label{generating}
f(x)=\frac{1}{1-x}\left(
x^2\cdot c(x^2) \cdot f(x) + 1
 \right),
\end{equation}
where $c(x)=(1-\sqrt{1-4x})/2x$ is
the generating function of the Catalan numbers.
Therefore, we have
 \[f(x) =
 \frac{1}{1-x-x^2c(x^2)},
 \]
 and this is known to be the generating function for $\binom{r}{\lfloor r/2 \rfloor}$ \cite[A001405]{oeis}.

To see why \eqref{generating} holds,
consider the leftmost matched point $p$ (if there is any).
Suppose that $p$ is matched with $q$,
see Figure~\ref{fig:binom_gf} for illustration.
Then all points to the left of $p$ are free,
which contributes $1/(1-x)$ to the generating function.
The points between $p$ and $q$ are
not visible from below
and, therefore, they are
matched by a \emph{perfect} matching; this contributes
$c(x^2)$.
Finally, the points to the right of $q$ are matched by a down-free
matching, whose generating function is {again} $f(x)$.
The factor $x^2$ accounts for the two points $p$ and $q$, and the
additive term $+1$ accounts for the case that $p$ does not exist.

\begin{figure}[htb]
\centering
\includegraphics[scale=0.8]{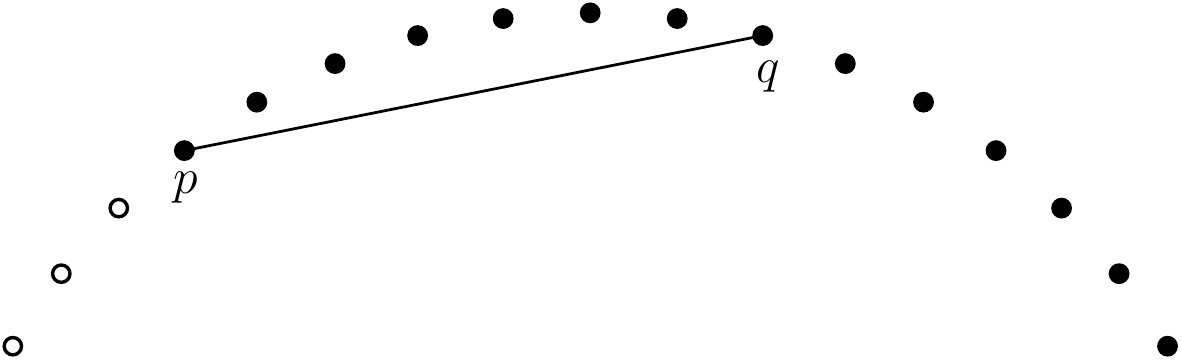}
\caption{The leftmost edge $pq$ in the proof of Proposition~\ref{thm:binom_gf}.\ref{part1}.}
\label{fig:binom_gf}
\end{figure}

We give another proof -- a bijective one.
For a given matching, we mark the left and right endpoints of each
edge by \texttt{L} and \texttt{R}, respectively.  We leave the free
points unmarked for the moment.  Then
the non-crossing matching can be reconstructed from the labels: We
traverse the points from left to right, and we match each \texttt{R}
that we meet with the closest previous unmatched \texttt{L}.
Moreover, since the matching is down-free, free vertices can only
appear when there are no previous unmatched \texttt{L}-vertices.  Now
we label the free points: If there are $\gamma$ free points, we label
the first $\lfloor \gamma \rfloor$ free points by \texttt{R} and the
last $\lceil \gamma \rceil$ free points by \texttt{L}, see
Figure~\ref{fig:01} for illustration.  The free points marked
\texttt{R} can be recovered in a left-to-right sweep as those
\texttt{R}-vertices for which we find no previous matching
\texttt{L}-vertex in the above procedure.  The free points marked
\texttt{L} can be recovered similarly in a right-to-left sweep, and
finally, the matching among the non-free points can be found as
described above.  Thus we have established a bijection with sequences
of length $r$ over the alphabet $\{ \texttt{L}, \texttt{R}\}$ that
contain $\lfloor r/2 \rfloor$ \texttt{R}'s.
\begin{figure}[htb]
\centering
\includegraphics{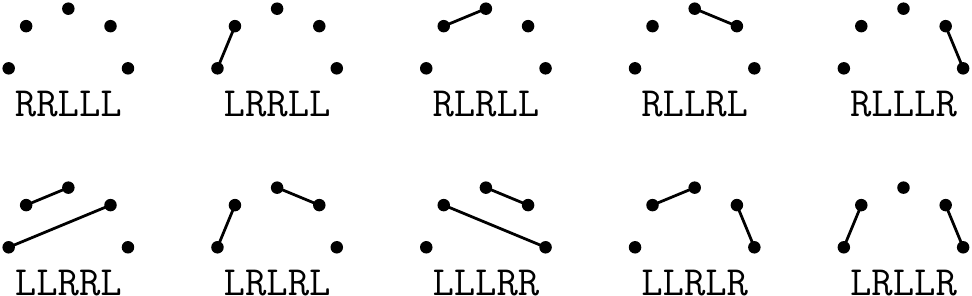}
\caption{The coding of down-free matchings in the second proof of Proposition~\ref{thm:binom_gf}.\ref{part1}.}
\label{fig:01}
\end{figure}

2. Let us turn to the second equation. 
 Once we choose $i$ endpoints of runners,
the whole matching is down-free
if and only if
its restriction on the remaining $r-i$ points is down-free.
Therefore, the result follows directly from the first part. \qedhere
\end{proof}

Now we find a recursion for $z^k_i$, $k \geq 1$.

\begin{proposition}\label{thm:runners_rec}
For fixed $r$, we have the recursion
\begin{equation}\label{eq:rec_wo}
z^k_i
=
\sum_{j \geq 0}
a_{ij}
z^{k-1}_j,
\end{equation}
with coefficients
\begin{equation}\label{eq:sum}
  a_{ij}
=
\sum_{
\substack{
0 \leq \beta \leq r,\\
|i-j| \leq \beta \leq  i+j,\\
\beta  \equiv i-j \pmod{2}
}}
z^1_{\beta}
=
z^1_{ |i-j|}
+
z^1_{ |i-j|+2}
+ \dots +
z^1_{ \min\{r^*, i+j\}},
\end{equation}
where $r^*$ is $r$ or $r-1$ and has the same parity as $i-j$.
\end{proposition}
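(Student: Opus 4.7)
The plan is to derive the recursion as a direct application of Observation~\ref{obs:runners}. Fix $r$ and $k \ge 1$, and let $X = \mathrm{CH}^*(r,k)$. Decompose $X = A \cup B$, where $B$ is the rightmost arc (so $|B| = r$) and $A = X \setminus B = \mathrm{CH}^*(r, k-1)$. By Observation~\ref{obs:runners}, every down-free $\rho$-matching $M$ of $X$ arises in a unique way from a pair $(M_A, M_B)$ of down-free $\rho$-matchings of $A$ and~$B$, and if $M_A$ has $j$ runners and $M_B$ has $\beta$ runners, then the number $i$ of runners of $M$ equals $j + \beta - 2\ell$, where $\ell = (j+\beta-i)/2$ is the number of runner pairs that got joined across the cut. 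Conversely, for any $M_A, M_B$ and any $\ell$ with $0 \leq \ell \leq \min\{j,\beta\}$, the resulting $M$ is uniquely determined.

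Summing over all such pairs, and using the fact that the number of down-free $\rho$-matchings of $A$ with $j$ runners is $z^{k-1}_j$ and that of $B$ with $\beta$ runners is $z^1_\beta$, I obtain
\[
  z^k_i \;=\; \sum_{j \ge 0}\; z^{k-1}_j \sum_{\beta} z^1_\beta,
\]
where the inner sum ranges over all $\beta \ge 0$ compatible with $j$ and~$i$ in the sense of Observation~\ref{obs:runners}. The compatibility condition $0 \le \ell \le \min\{j,\beta\}$ with $\ell = (j+\beta-i)/2$ translates into $|i-j| \le \beta \le i+j$ together with $\beta \equiv i-j \pmod 2$, which matches the first form of $a_{ij}$ in~\eqref{eq:sum}.

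For the second form of~\eqref{eq:sum}, I would simply list the values of $\beta$ that actually contribute. They form an arithmetic progression with common difference~$2$ starting at $\beta = |i-j|$. On one side the progression is cut off by $\beta \le i+j$; on the other side by $\beta \le r$, since $B$ has only $r$ points and hence $z^1_\beta = 0$ for $\beta > r$ by Proposition~\ref{thm:binom_gf}. Defining $r^*$ to be $r$ or $r-1$ according as $r$ has the correct parity $i-j \pmod 2$, the effective upper bound becomes $\min\{r^*, i+j\}$, which yields the explicit enumeration on the right-hand side of~\eqref{eq:sum}.

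There is no real obstacle: once Observation~\ref{obs:runners} is in place, the whole argument is just bookkeeping of the valid triples $(j,\beta,\ell)$. The only point that needs a moment of care is the parity/range bookkeeping that produces $r^*$; but since $z^1_\beta$ already vanishes outside $\{0,1,\dots,r\}$, keeping $\beta \le r$ is automatic and the auxiliary symbol $r^*$ is only a convenient way to absorb both the parity constraint and the trivial upper bound into a single summation range.
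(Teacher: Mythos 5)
Your proposal is correct and follows essentially the same route as the paper: split $\mathrm{CH}^*(r,k)$ into the first $k-1$ arcs and the rightmost arc, invoke the bijective correspondence of Observation~\ref{obs:runners} to get the contribution $z^{k-1}_j z^1_\beta$ for each compatible pair, and translate the condition $0\le\ell\le\min\{j,\beta\}$ with $\ell=(j+\beta-i)/2$ into the range and parity constraints on $\beta$, with the cutoff at $r$ (absorbed into $r^*$) coming from $z^1_\beta=0$ for $\beta>r$.
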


\begin{proof}
For $k=1$, \eqref{eq:rec_wo} can be verified directly.
Assume now $X=\mathrm{CH}^*(r, k)$ with $k>1$,
let $B$ be the rightmost arc of $X$,
and let $A = X \setminus B$.
For each $j \geq 0$ and each possible $\beta$
we will find the number of $\rho$-matchings of $X$ with $i$ runners
whose restriction to $A$ has $j$ runners
and restriction to $B$ has $\beta$ runners.
By Observation~\ref{obs:runners},
$\rho$-matchings of $A$ and $B$ and the values of
$i$, $j$ and $\beta$ determine uniquely an $\rho$-matching of $X$.
Therefore $\rho$-matchings of $A$ and $B$ with (respectively) $j$ and $\beta$ runners
contribute
$ z^{k-1}_j \cdot z^{1}_\beta$
$\rho$-matchings of $X$ with $i$ runners.

For given $i$ and $j$,
the bounds $|i-j| \leq \beta \leq  i+j$
and the restriction $\beta  \equiv i-j \pmod{2}$
given in~\eqref{eq:sum}
are explained in the remark after Observation~\ref{obs:runners}.
\end{proof}

\subsection{Analysis of the recursion}
\label{sec:recursion-without}

For each $k \geq 0$, denote
$v_k = (z^k_0, z^k_1, z^k_2, z^k_3, \dots)^\top$.
In particular we have $v_0 = (1, 0, 0, 0, \dots)^\top$.
Consider the infinite coefficient matrix $A=(a_{ij})_{i, j \in \mathbb{N}_0}$
with
$a_{ij}$ given by~\eqref{eq:sum}.
By Proposition~\ref{thm:runners_rec},
we have $A v_{k-1} = v_k$ for each $k\geq 1$.
One easily verifies that the matrix $A$ has the following properties:
\begin{itemize}
  \item $A$ is symmetric.
  \item $A$ is a 
band matrix of bandwidth $r$: for 
 $|i-j|>r$ we have $a_{i j}=0$.
  \item
The entries of the first row and column are
 $a_{i0}=a_{0i}=z^1_i = \binom{r}{i}
\binom{r-i}{\lfloor (r-i)/2 \rfloor}$.
  \item For $i+j\geq r^*$ we have  $a_{i+1, j+1} = a_{i j}$.
That is, the diagonals -- sets of entries with fixed $q:=j-i$, $|q| \leq r$ -- \emph{stabilize}
starting from the entry
$a_{(r^*-q)/2,  (r^*+q)/2}$.
    For these entries we have:
\begin{equation}\label{eq:multi}
 a_{ij}=
 a_{i,i+q}=
\sum_{\substack{
|q| \leq \beta \leq r \\
\beta  \equiv q \pmod{2}}
}z^1_\beta.
\end{equation}
    In particular,
starting from the $r$th row (resp.\ column),
  the rows (resp.\ columns) are shifts of each other, and therefore, have the same sum of elements.
  \item The elements in the upper-left corner
  ($i+ j < r^* $) are
positive and
smaller than the elements in the same diagonal after stabilization --
since in this case we have a partial sum of~\eqref{eq:multi}.
\end{itemize}
For example, for $r=5$, the matrix is
\begin{equation}
  \label{eq:Matrix-A}
\setcounter{MaxMatrixCols}{20}
A =
\begin{pmatrix}
10&30&30&20&5&1&0&0&0&0&0&\cdots \\
30&40&50&35&21&5&1&0&0&0&0&\cdots \\
30&50&45&51&35&21&5&1&0&0&0&\cdots \\
20&35&51&45&51&35&21&5&1&0&0&\cdots \\
5&21&35&51&45&51&35&21&5&1&0&\cdots \\
1&5&21&35&51&45&51&35&21&5&1&\cdots \\
0&1&5&21&35&51&45&51&35&21&5&\cdots \\
0&0&1&5&21&35&51&45&51&35&21&\cdots \\
0&0&0&1&5&21&35&51&45&51&35&\cdots \\
0&0&0&0&1&5&21&35&51&45&51&\cdots \\
0&0&0&0&0&1&5&21&35&51&45&\cdots \\
\vdots&\vdots&\vdots&\vdots&
\vdots&\vdots&\vdots&\vdots&
\vdots&\vdots&\vdots&
\ddots
\end{pmatrix}
\end{equation}

The column sum $\lambda_r$ after stabilization of the columns,
that is, starting from the $(r+1)$st column, is as follows:
\begin{equation}
  \label{eq:lambda-r}
\lambda_r =
\sum_{i=0}^r (i+1)z^1_i
 =\sum_{i=0}^r (i+1)  \binom{r}{i}\binom{r-i}{\lfloor (r-i)/2 \rfloor}
=\sum_{i=0}^r (i+1)  \binom{r}{i, \lfloor (r-i)/2 \rfloor, \lceil (r-i)/2 \rceil}
\end{equation}

In the spirit of the Perron-Frobenius theorem for
  non-negative stochastic matrices, one can expect that
$\lambda_r$ is the growth rate for $(z_0^k)_{k \geq 0}$.
We will prove the this is indeed the case
by using a result by Banderier and Flajolet~\cite{ban}
about enumeration of certain kinds of colored lattice paths.

\begin{proposition}\label{thm:rate_wocorners}
For fixed $r$, we have
$z^k_0 = \Theta^*((\lambda_r)^k)$.
\end{proposition}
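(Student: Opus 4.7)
The plan is to interpret $z^k_0 = (A^k e_0)_0$ as the weighted count of walks of length $k$ on $\mathbb{N}_0$ from $0$ to $0$, where a step from $i$ to $j$ contributes weight $a_{ij}$. Let $w_q$ denote the stabilized diagonal value from~\eqref{eq:multi} and set $P(u) := \sum_{q=-r}^{r} w_q u^q$. The derivation of~\eqref{eq:lambda-r} shows $P(1) = \sum_{\beta=0}^{r} (\beta+1) z^1_\beta = \lambda_r$. Symmetry of $A$ gives $w_q = w_{-q}$, hence $P(u) = P(1/u)$ and $P'(1) = \sum_{q=-r}^r q w_q = 0$; by the log-convexity of $x \mapsto \log P(e^x)$ (the log of a sum of exponentials with positive weights), the point $\tau = 1$ is the unique positive minimum of $P$, with $P(\tau) = \lambda_r$.

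For the upper bound, I compare with the fully bulk matrix $\tilde A$ whose entries are $\tilde a_{ij} = w_{j-i}$ for $|i-j|\leq r$ (and $0$ otherwise). The bullet list following~\eqref{eq:Matrix-A} gives $a_{ij} \leq \tilde a_{ij}$ termwise, so $z^k_0 \leq \tilde z^k_0 := (\tilde A^k e_0)_0$. Every column-sum of $\tilde A$ on $\mathbb{N}_0$ is at most $P(1) = \lambda_r$, and a one-line induction yields $\tilde z^k_0 \leq \lambda_r^k$, whence $z^k_0 = O^*(\lambda_r^k)$.

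For the lower bound, I invoke the Banderier--Flajolet asymptotic theorem~\cite{ban} for weighted excursions on $\mathbb{N}_0$ with a translation-invariant step-polynomial $P(u)$: the number of excursions of length $n$ from $0$ to $0$ is $\Theta(P(\tau)^n/n^{3/2})$, where $\tau$ is the positive minimum of $P$. This gives $\tilde z^n_0 = \Theta(\lambda_r^n/n^{3/2})$. I transfer this to $A$-walks by restricting to the subfamily of $A$-excursions of the form: (i) step $0 \to r$ of weight $a_{0,r} = z^1_r > 0$; (ii) any $A$-walk from $r$ to $r$ of length $k-2$ that remains at height $\geq r$; (iii) step $r \to 0$ of weight $z^1_r$. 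Every middle step $i \to j$ satisfies $i + j \geq 2r \geq r^*$ and thus falls in the bulk regime, where $a_{ij} = \tilde a_{ij}$; the shift $i \mapsto i - r$ is then a weight-preserving bijection between these middle walks and $\tilde A$-excursions from $0$ to $0$ of length $k-2$. Consequently $z^k_0 \geq (z^1_r)^2 \cdot \tilde z^{k-2}_0 = \Omega^*(\lambda_r^k)$.

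The main obstacle will be verifying the hypotheses of Banderier--Flajolet cleanly, in particular aperiodicity, needed for the $\Theta(\cdot / n^{3/2})$ asymptotic: this holds because $w_q > 0$ for every $|q| \leq r$ (the term $z^1_{|q|}$ always appears in the defining sum of~\eqref{eq:multi}), so the step-sizes have gcd~$1$. Locating $\tau = 1$ is then immediate from the symmetry of $P$ and the log-convexity argument above.
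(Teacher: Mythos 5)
Your proposal is correct and follows essentially the same route as the paper: interpret $z^k_0$ as weighted excursions for the matrix $A$, compare elementwise with the stabilized translation-invariant matrix (your $\tilde A$ is exactly the paper's $A'$), apply Banderier--Flajolet with the symmetric step polynomial so that $\tau=1$ and $C=P(1)=\lambda_r$, and obtain the lower bound by restricting to excursions that jump up to level $r$, stay there, and jump back down. The only difference is cosmetic: for the upper bound you use a column-sum induction giving $\tilde z^k_0\le\lambda_r^k$, whereas the paper simply cites the $\Theta(\lambda_r^k/k^{3/2})$ asymptotics of $\mathrm{Ex}(A',k)$ directly.
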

Note that the superscript $k$ in the left-hand side denotes an index,
whereas in the right-hand side it is a power.

\begin{proof}
We begin with some notion for lattice paths.
Families of lattice paths are usually defined by indicating a starting point -- normally $(0,0)$ -- and a set of possible moves of the form $(1,\beta)$. For many familiar families it is additionally required that the
paths
never go below the $x$-axis and/or end at the $x$-axis.
The paths that
start at $(0,0)$ and
satisfy both these restrictions are called \emph{excursions}.
For example, Motzkin paths~\cite[A001006]{oeis}
are excursions that
use the moves $(1,1)$, $(1,0)$, $(1,-1)$.

In a more general setting,
a set of possible moves may depend on the point reached by a path.
Moreover, 
each move $(1, \beta_i)$ starting in certain point can have
a non-negative integer 
 \emph{multiplicity} $m_i$. 
This is sometimes
expressed by saying that these are copies of the same move
that are distinguished by $m_i$ different ``colors''.

In summary, to each lattice point $(a,b)$ we assign
a \emph{rule} --
a set of moves that can be used for the next step once a path reached this point,
together with multiplicities.
It is assumed that for each lattice point the number of moves with non-zero multiplicity is finite.
Note that one can express the condition of non-crossing the $x$-axis
in terms of such rules:
one has to require that
for each point $(a,b)$ there are only moves $(1, \beta)$ with $\beta \geq -b$.

Consider now the case that
all points that lie on the same horizontal line have the same rule.
Namely,
for $y=j$ and $i\geq 0$
we denote by $d_{ij}$ the multiplicity of the move $(1, i-j)$ at (any)
point $(a, j)$.
We collect these data in the infinite matrix $D=(d_{ij})_{i, j \in
  \mathbb{N}_0}$.
Let $u=(1,0,0,\dots)^\top$.
Then the number of paths that start at $(0,0)$,
do not cross the $x$-axis, and end at a point $(a,b)$
is {equal to} the $b$th component of $D^au$ -- this follows directly from matrix multiplication.
In particular, the upper-left entry of $D^k$ is the number of
 excursions of length $k$, which we will denote by
 $\mathrm{Ex}(D,k)$.
The quantity in which we are interested, the number
$z^k_0$ of down-free matchings, is then given by
$z^k_0=\mathrm{Ex}(A,k)$, where $A$ is the coefficient matrix given above~\eqref{eq:multi}.

Suppose now that we have an even more restricted case:
all points have the same rules;
yet still we want to consider only paths that remain weakly above the $x$-axis,
so we exclude the moves that violate this requirement.
For such families, a result of~\cite[Theorem 3]{ban} can be applied.
It states that
the number of excursions of length $k$
with moves $\{(1, b_1), (1, b_2), \dots, (1, b_m)\}$
and associated multiplicities $w_1,\dots,w_m$,
is of the form $\Theta(C^k/k^{3/2})$,
where the base $C$ of the exponential growth is determined
as follows:
For the Laurent polynomial $P(u) = \sum_ {j=1}^{m} w_j u^{b_j}$, let
 $\tau$ be the unique positive number such that $P'(\tau)=0$;
then $C = P(\tau)$.
The situation is particularly easy for families with a symmetric set of moves,
that is, if $(1, b)$ is a move
then $(1, -b)$ is also a move
with the same multiplicity, or
equivalently, $P(u)=P(u^{-1})$.
In this case, $\tau=1$, and consequently, $C=P(\tau) = \sum_ {j=1}^{m} w_j$.

The situation for our matrix $A$ is very similar to this case, except
 that the first $r-1$ horizontal lines of the lattice
follow different rules, in accordance with the fact that the first $r-1$
rows of $A$ are different from the others.
However, this does not affect the asymptotic growth rate.
Indeed, let us look at the matrix $A'$
in which the first $r$ rows and columns of $A$ have been removed.
It coincides with $A$ for $i+j\geq
r$,
but the rule $a_{i+1, j+1}=a_{i j}$ holds for all entries --
also in the upper-left corner.
Since $A\le A'$ element-wise,
we clearly have
$\mathrm{Ex}(A,k)
\le
\mathrm{Ex}(A',k) = \Theta(\lambda_r^k/k^{3/2})$.
To see that we have a lower bound of the same asymptotic form,
consider only those excursions that start with
the move $(1,+r)$, end with the move $(1,-r)$, and never go below
level $r$. The intermediate part of the excursion is governed by the
matrix $A$ from which the first $r$ rows and columns have been
removed, which coincides with the matrix~$A'$. Thus
 $\mathrm{Ex}(A,k)
\ge
 \mathrm{Ex}(A',k-2) = \Theta(\lambda_r^k/k^{3/2})$.
\end{proof}

\newcommand{\rrr}{}
\newcount\rrr
\rrr=9
\newcommand{\nn}{\global\advance\rrr by 1 {\edef\bbb{\the\rrr}%
\expandafter\nnn \bbb\emd}}
\newcommand{\nnn}[1]{}
\def\nnn#1\emd{%
{\toks0={\def\extract ##1r = #1 |##2 free: ##3 ##4 corner: ##5 ##6 ##7\endxxx
{$\bigl(\begin{smallmatrix}\hfill ##6&\hfill ##5 \\ ##4&
    ##3\end{smallmatrix}\bigr)$}}%
\expandafter\the\toks0
\extract
r = 1 |  F    C, eigenvalue =  3.0         3.00   3.0000
  free: 2 2
corner: 1 1
r = 2 |  F    C, eigenvalue =  3.05316645149         9.32   3.0532
  free: 6 7
corner: 3 3
r = 3 |  F    C, eigenvalue =  3.07109716649        28.97   3.0711
  free: 19 21
corner: 9 10
r = 4 |  F    C, eigenvalue =  3.08187070603        90.21   3.0819
  free: 59 66
corner: 28 31
r = 5 |  F    C, eigenvalue =  3.08767123684       280.64   3.0877
  free: 184 204
corner: 87 97
r = 6 |  F    C, eigenvalue =  3.09088902334       871.97   3.0909
  free: 572 632
corner: 271 301
r = 7 |  F    C, eigenvalue =  3.09245986982      2704.76   3.0925
  free: 1776 1952
corner: 843 933
r = 8 |  F    C, eigenvalue =  3.09300569523      8376.18   3.0930
  free: 5504 6022
corner: 2619 2885
r = 9 |  F    C, eigenvalue =  3.09288169059     25898.21   3.0929
  free: 17030 18550
corner: 8123 8907
r = 10 |  F    C, eigenvalue =  3.09231848233     79954.36   3.0923
  free: 52610 57071
corner: 25153 27457
r = 11 |  F    C, eigenvalue =  3.09146474505    246494.53   3.0915
  free: 162291 175381
corner: 77763 84528
r = 12 |  F    C, eigenvalue =  3.09042030587    758945.49   3.0904
  free: 499963 538386
corner: 240054 259909
r = 13 |  F    C, eigenvalue =  3.08925301802   2333969.82   3.0893
  free: 1538312 1651140
corner: 740017 798295
r = 14 |  F    C, eigenvalue =  3.08800982806   7169707.43   3.0880
  free: 4727764 5059251
corner: 2278329 2449435
r = 15 |  F    C, eigenvalue =  3.08672353634  22002194.62   3.0867
  free: 14514779 15489221
corner: 7006093 7508686
r = 16 |  F    C, eigenvalue =  3.08541724165  67456287.10   3.0854
  free: 44518779 47384904
corner: 21520872 22997907
r = 17 |  F    C, eigenvalue =  3.08410726822 206633661.79   3.0841
  free: 136422462 144857454
corner: 66039651 70382811
r = 18 |  F    C, eigenvalue =  3.08280515072 632454616.15   3.0828
  free: 417702378 442540653
corner: 202462113 215240265
r = 19 |  F    C, eigenvalue =  3.08151899842 1934337016.61   3.0815
  free: 1277945409 1351126551
corner: 620164491 657780918
r = 20 |  F    C, eigenvalue =  3.08025444868 5911965208.22   3.0803
  free: 3907017369 4122747150
corner: 1898109900 2008907469
\endxxx
}}

\subsection{Asymptotic growth constants}
Since $A=\mathrm{CH}^*(r,k)$ has $n=rk$ points,
it follows from Proposition~\ref{thm:rate_wocorners} that
the growth rate for the number of down-free matchings
of the $r$-chain without corners of size $n$
is $\sqrt[r]{\lambda_r}$.
In order to estimate $\lambda_r$, we note that
the expression~\eqref{eq:lambda-r},
when the factor $(i+1)$ is  ignored, counts partitions of $r$ elements
into three subsets (the latter two being of almost equal size).
The total number of such partitions is $3^r$.
Hence, $\lambda_r \leq (r+1)3^r$, and $\sqrt[r]{\lambda_r}$ converges to 3.
Computations suggest that the maximum of $\sqrt[r]{\lambda_r}$ is obtained for $r=11$:
 $\sqrt[11]{\lambda_{11}}=\sqrt[11]{240054} \approx 3.0840$;
 after that it apparently decreases monotonically to $3$, see
 the left part of
 Table~\ref{tab:lambda} for the first few values.
To prove that $r=11$ gives indeed the maximum, one estimates that
 $\sqrt[r]{\lambda_r} \le
3\sqrt[r]{r+1}<3.0838$
for $r\ge191$,
 and the finitely many values up to
 $r=190$ can be checked individually.
%
This completes the proof of Theorem~\ref{thm:main_wo_corners}.

In order to find a more precise estimate for $\lambda_r$,
we notice that the middle expression
in~\eqref{eq:lambda-r}
expresses $\lambda_r$ as the binomial convolution of the sequence of natural numbers
and the sequence $\binom{m}{\lfloor m/2 \rfloor}
$.
It follows that the exponential generating function for $(\lambda_r)_{r\geq 0}$ is
\[
(1+x) \, e^x \, (I_0(2x)+I_1(2x)),
\]
where $I_0(x)$ and $I_1(x)$
are the modified Bessel functions of the first kind.
From this we can conclude that
the sequence $(\lambda_r)_{r\geq 0}$ is the sum of the sequence A005773 and
a shifted copy of A132894 in~\cite{oeis}.
The ordinary generating function
for this sequence
is then
\[
\frac{1}{2x}
\left(
\frac{1-2x-x^2}
{(1+x)^{1/2}  (1-3x)^{3/2}}
-1 \right),\]
and
it follows from the exponential growth formula
that $\lambda_r = \Theta(3^r r^{1/2})$.
By Theorem~\ref{thm:double}
this number is also the growth rate of
the number of perfect matchings for the corresponding double structure.

\begin{table}[bth]
\renewcommand{\arraystretch}{1.15}
\begin{center}
\begin{tabular}{|r||r|l||c|l|}
\hline
&
\multicolumn2{c||}{without corners}&
\multicolumn2{c|}{with corners}\\
\hline
  $r$& $\lambda_r$& $\sqrt[r]{\lambda_r}$& $M_r$& $T_r$\\
\hline
1&3&3&$\bigl(\begin{smallmatrix}1&1 \\ 2&2\end{smallmatrix}\bigr)$& 3 \\
2&9&3&$\bigl(\begin{smallmatrix}3&3 \\ 7&6\end{smallmatrix}\bigr)$& 3.0532 \\
3&28&3.0366& $\bigl(\begin{smallmatrix}10&\hfill 9 \\ 21&19\end{smallmatrix}\bigr)$& 3.0711 \\
4&87&3.0541& $\bigl(\begin{smallmatrix}31&28 \\ 66&59\end{smallmatrix}\bigr)$& 3.0819 \\
5&271&3.0662&$\bigl(\begin{smallmatrix}\hfill97&\hfill87 \\ 204&184\end{smallmatrix}\bigr)$& 3.0877 \\
6&843&3.0735&$\bigl(\begin{smallmatrix}301&271 \\ 632&572\end{smallmatrix}\bigr)$& 3.0909 \\
7&2619&3.0783&$\bigl(\begin{smallmatrix}\hfill933&\hfill843 \\ 1952&1776\end{smallmatrix}\bigr)$& 3.0925 \\
8&8123&3.0812&
$\bigl(\begin{smallmatrix}2885&2619 \\ 6022&5504\end{smallmatrix}\bigr)$
&\textbf{3.0930} \\
9&25153&3.0829&
$\bigl(\begin{smallmatrix}\hfill 8907&\hfill8123 \\ 18550&17040\end{smallmatrix}\bigr)$& 3.0929 \\
10&77763&3.0837&
\nn
& 3.0923 \\
11&240054&\textbf{3.0840}&
\nn
& 3.0915 \\
12&740017&3.0839&
\nn
& 3.0904 \\
13&2278329&3.0835&
\nn
& 3.0893 \\
14&7006093&3.0829&
\nn
&3.0880 \\
15&21520872&3.0822&
\nn
& 3.0867 \\
16&66039651&3.0813&
\nn
& 3.0854 \\
17&202462113&3.0804&
\nn
& 3.0841 \\
18&620164491&3.0794&
\nn
& 3.0828 \\
19&1898109900&3.0785&
\nn
& 3.0815 \\
20&5805127269&3.0774&
\nn
& 3.0803 \\
 \hline
\end{tabular}
\end{center}
\caption{Summary of results for $r$-chains without and with corners, for $1 \leq r \leq 20$.
For $r$-chains without corners,
$\lambda_r$ is the row sum of the matrix $A$ (Section~\ref{sec:recursion-without}),
and $\sqrt[r]{\lambda_r}$ is the growth rate for $\mathsf{pm}$.
For $r$-chains with corners,
the
\emph{condensed coefficient matrix} $M_r$ is derived from the recursion (Section~\ref{result}),
and $T_r$, the $r$-th root of its dominant eigenvalue,
is the growth rate for $\mathsf{pm}$.
In both cases,
the values for $r=1$ and $r=2$ reproduce the known bounds.
Indeed, a $1$- and a $2$-chain without corners,
as well as
a $1$-chain with corners,
is just a downward chain,
and thus the growth rate of $3$ agrees with Theorem~\ref{thm:gnt}.
A $2$-chain with corners is
a zigzag chain,
and thus $T_2\approx 3.0532$ agrees with Theorem~\ref{thm:main}.
}
\label{tab:lambda}
\end{table}

\section{$r$-chains with corners}\label{sec:with}

\subsection{Recursion}\label{sec:corners_recursion}

In this section, we will treat $r$-chains \emph{with corners},
but we will simply refer to them as $r$-chains.
The analysis of these $r$-chains is more complicated
due to the fact that the corners belong to two arcs.
As before, we will incrementally build the $r$-chain
and estimate the number of matchings of the $r$-chain with $k$ arcs,
which possibly have runners.
We extend the notions of runners, free points, and $\rho$-matchings
to $r$-chains with corners in the obvious way.

We cut a down-free $\rho$-matching $M$ of $\mathrm{CH}(r,k)$
to the \emph{right} of $V_{k-1}$ --
the rightmost point of the $(k-1)$st arc --
and obtain two down-free $\rho$-matchings:
the first, $M_A$, of $A$ --
the set consisting of the first $k-1$ arcs of $\mathrm{CH}(r,k)$;
and the second, $M_B$, of $B$ --
the rightmost arc of $\mathrm{CH}(r,k)$ without the point $V_{k-1}$.
See examples in Figures~\ref{fig:c1}--\ref{fig:f3}.
Note that in the case of $r$-chains with corners
a runner incident to $V_{k-1}$, upon adding $B$ on the right,
can be also connected to a point of $B$:
in such a case we say that it is \textit{matched internally}.

We distinguish whether $M$ has a runner incident to $V_{k}$ or not.
Let $C^k_i$ be the number of down-free $\rho$-matchings of $\mathrm{CH}(r,k)$,
where $V_{k}$ has a runner and \textit{in addition} there are $i$ runners.
Let $F^k_i$ be the number of down-free $\rho$-matchings of $\mathrm{CH}(r,k)$,
where $V_{k}$ has no runner and there are $i$ runners.
($C$ stands for ``corner'', $F$ for ``free''.)
For $k=0$, there is a single vertex, and we have $C^0_0=F^0_0=1$ and
$C^0_i=F^0_i=0$ for all $i>0$.
The number that we are interested in, the number of matchings in
 $\mathrm{CH}(r,k)$, is $F^k_0$. 
Next we find recursive expressions for $C^k_i$ and for $F^k_i$.

\paragraph{Recursion for $C^k_i$.}
For $C^k_i$, the new corner $V_{k}$ has a runner
and is not available for receiving edges from the left.
Thus for the formulae below,
it can be treated as if it were not present
in the $k$th arc.
We have the following three cases:
\begin{enumerate}
\item (Figure~\ref{fig:c1}.) The previous corner $V_{k-1}$ has a runner
which is not matched internally in the $k$th arc.
Suppose there are $\alpha$ runners originating in the $k$-th arc,
in addition to that originating in $V_{k-1}$.
These runners can be only matched to the right.
 The contribution to $C^k_i$ is
\begin{equation}\label{eq:c1}
\sum_{
0 \leq \alpha \leq \min\{r-1,i-1\}
} Z_\alpha
C^{k-1}_{i-1-\alpha},
\end{equation}
where
$$Z_\alpha =
\binom{r-1}\alpha  \binom {r-1-\alpha}{\lfloor(r-1-\alpha)/2\rfloor}.
$$
The expression
for
$Z_\alpha$ is
similar to $z^1_\alpha$
from Proposition~\ref{thm:binom_gf}.\ref{thm:binom_gf_general},
but here we have only $r-1$ points:
all the points of the $k$th arc, excluding the corners.
\begin{figure}[htb]
  \centering
\includegraphics[scale=1]{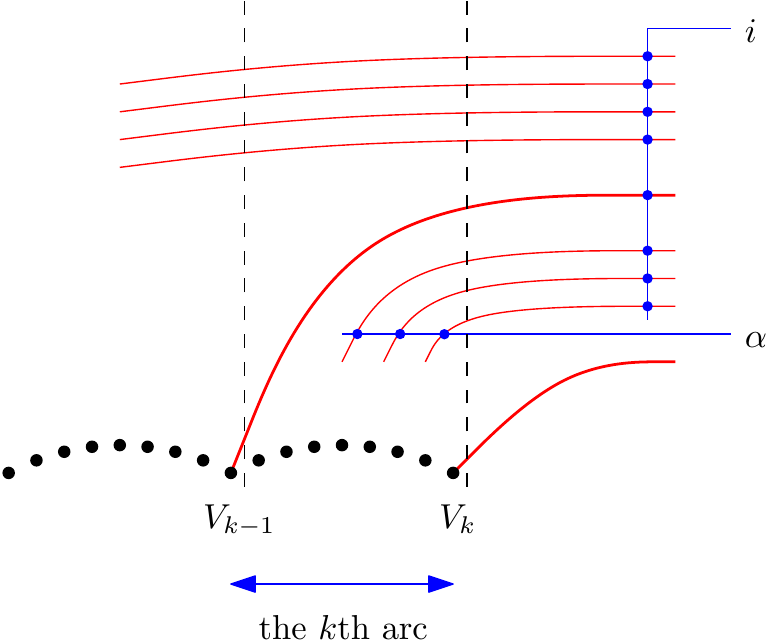}
  \caption{Case 1 in the recursion for $C^k_i$:
  $V_{k-1}$ has a runner not matched internally in the $k$th arc.}
  \label{fig:c1}
\end{figure}

\item (Figure~\ref{fig:c2}.)
$V_{k-1}$ has no runner.
This possibility contributes
\begin{equation}\label{eq:c2}
\sum_{j\ge 0}\, \sum_{
\substack{
|i-j| \leq \alpha \leq  i+j\\
\alpha \equiv i-j \pmod{2}\\
0 \leq \alpha \leq r-1
}} Z_\alpha
F^{k-1}_j.
\end{equation}
This formula (as well as some of the formulae in the following cases)
has the same pattern as \eqref{eq:rec_wo}, with appropriate changes.

\begin{figure}[htb]
  \centering
\includegraphics[scale=1]{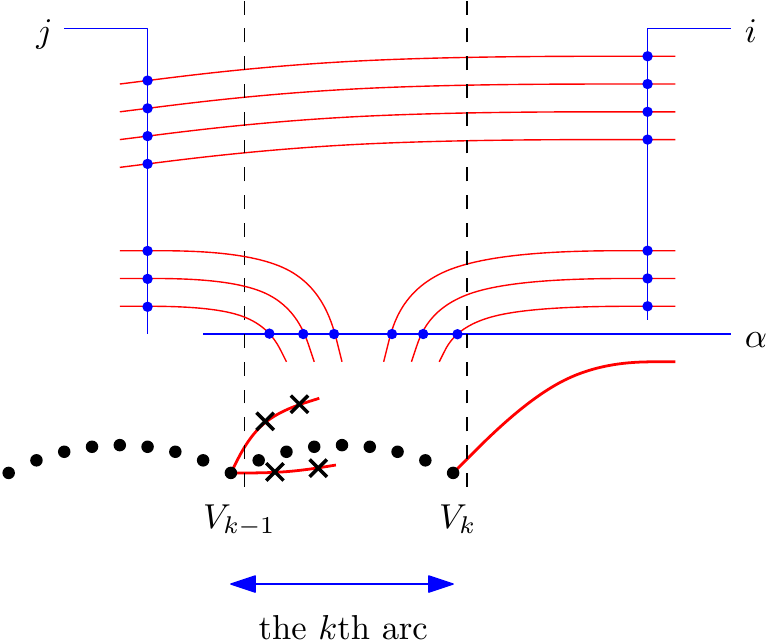}
  \caption{Case 2 in the recursion for $C^k_i$:
  $V_{k-1}$ has no runner.}
  \label{fig:c2}
\end{figure}

\item (Figure~\ref{fig:c3}.) $V_{k-1}$ has a runner matched internally in the $k$-th arc.
The contribution to  $C^k_i$ is
\begin{equation}\label{eq:c3}
\sum_{j\ge 0}\, \sum_{
\substack{
|i-j| \leq \alpha \leq  i+j\\
\alpha \equiv i-j \pmod{2}\\
0 \leq \alpha \leq r-1
}} I_\alpha
C^{k-1}_j,
\end{equation}
where
\[
I_\alpha  =
\binom{r-1}\alpha
         \left[ \binom {r-\alpha}{\lfloor(r-\alpha)/2\rfloor}
         - \binom {r-1-\alpha}{\lfloor(r-1-\alpha)/2\rfloor} \right]
         =
 \binom{r-1}\alpha
         \binom {r-1-\alpha}{\lfloor(r-2-\alpha)/2\rfloor}.
\]
In the expression for $I_{\alpha}$,
the first factor counts the choices of $\alpha$ runners from the $r-1$ points.
In the second factor, we subtract from all down-free $\rho$-matchings on
the remaining
$r-\alpha$ points (including $V_{k-1}$)
those in which $V_{k-1}$ is unmatched, which is the same as
down-free $\rho$ matchings on $r-1-\alpha$ points.

\begin{figure}[htb]
  \centering
\includegraphics[scale=1]{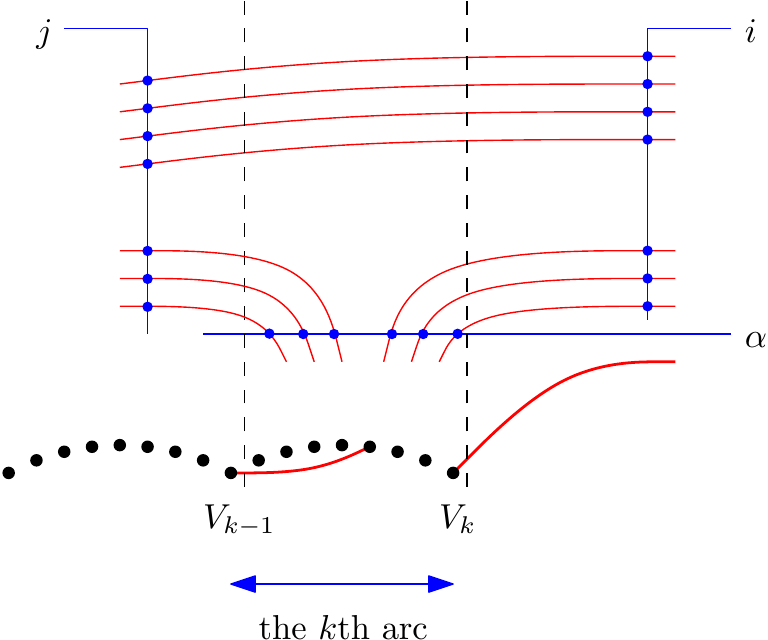}
  \caption{Case 3 in the recursion for $C^k_i$:
$V_{k-1}$ has a runner matched internally in the $k$th arc.}
  \label{fig:c3}
\end{figure}
\end{enumerate}

$C^k_i$ is the sum of the
three expressions \thetag{\ref{eq:c1}--\ref{eq:c3}}.

\paragraph{Recursion for $F^k_i$.}
For $F^k_i$, we have again three cases:
\begin{enumerate}
\item (Figure~\ref{fig:f1}.)
$V_{k-1}$ has a runner not matched internally in the $k$th arc.
In this case, all the additional $\alpha$ runners originating in the
interior of the $k$-th arc must be
matched to the right.
 $V_{k}$ is either free or matched internally to the left.
 The contribution to
 $F^k_i$ is
\begin{equation}\label{eq:f1}
\sum_{
0 \leq \alpha \leq \min\{r-1,i-1\}
} W_\alpha
C^{k-1}_{i-1-\alpha},
\end{equation}
where
$$W_\alpha =
\binom{r-1}\alpha  \binom {r-\alpha}{\lfloor(r-\alpha)/2\rfloor}
$$
is again similar to $z^1_\alpha$ from 
Proposition~\ref{thm:binom_gf}.\ref{thm:binom_gf_general},
but
here we have $r-1$ in the first factor because
no runner originates from $V_{k}$.

\begin{figure}[htb]
  \centering
\includegraphics[scale=1]{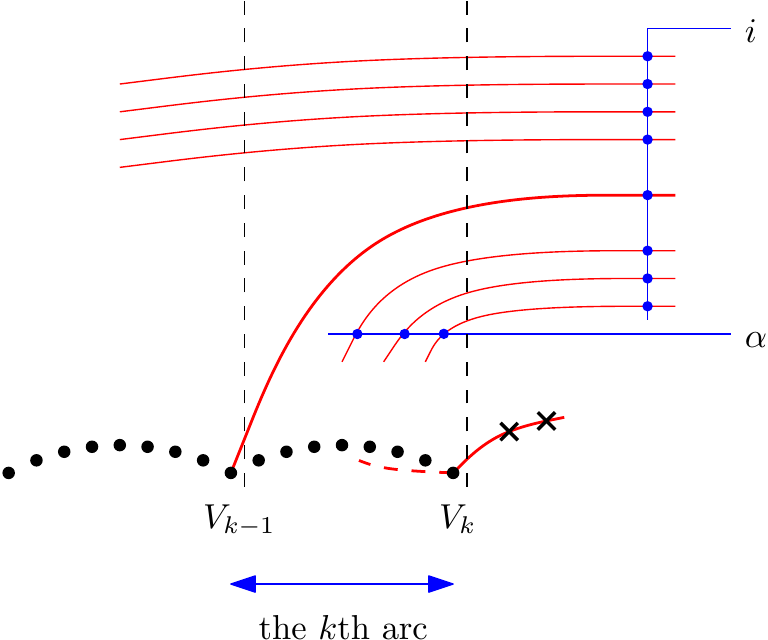}
  \caption{Case 1 in the recursion for $F^k_i$: $V_{k-1}$ has a runner not matched internally in the $k$th arc.}
  \label{fig:f1}
\end{figure}

\item (Figure~\ref{fig:f2}.) $V_{k}$ has a runner 
connected to a point of $A\setminus\{V_{k-1}\}$.
In this case, all $\alpha$ runners originating in the $k$-th arc must be
matched to the left.
The contribution to $F^k_i$ is
\begin{equation}\label{eq:f2}
\sum_{
0 \leq \alpha \leq r-1} (I_\alpha
C^{k-1}_{i+1+\alpha}
+ Z_\alpha
F^{k-1}_{i+1+\alpha}).
\end{equation}

The two terms -- with $C^{k-1}$ and with $F^{k-1}$ --
correspond to the subcases
where $V_{k-1}$ is internally matched or, respectively, not matched
to a point of the $k$th arc.
\begin{figure}[htb]
  \centering
\includegraphics[scale=1]{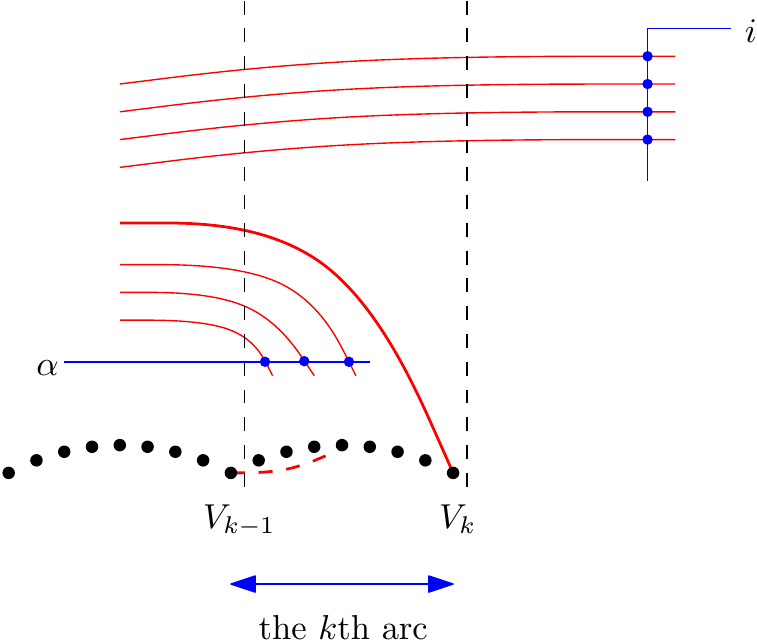}
  \caption{Case 2 in the recursion for $F^k_i$:
  $V_{k}$ 
is connected to a point to the left of $V_{k-1}$.}
  \label{fig:f2}
\end{figure}

\item (Figure~\ref{fig:f3}.) $V_{k-1}$ has no runner, 
and $V_{k}$ 
has no runner matched to a point of $A \setminus \{V_{k-1}\}$.
 The contribution to
 $F^k_i$ is
\begin{equation}\label{eq:f3}
\sum_{j\ge 0} \, \sum_{
\substack{
|i-j| \leq \alpha \leq  i+j\\
\alpha \equiv j-i \pmod{2}\\
0 \leq \alpha \leq r-1
}} (
U_\alpha
C^{k-1}_j+
W_\alpha
F^{k-1}_j),
\end{equation}
where
$$U_\alpha =
\binom{r-1}\alpha
         \left[ \binom {r+1-\alpha}{\lfloor(r+1-\alpha)/2\rfloor}
         - \binom {r-\alpha}{\lfloor(r-\alpha)/2\rfloor} \right]
=
\binom{r-1}\alpha
         \binom {r-\alpha}{\lfloor(r-1-\alpha)/2\rfloor}
$$
The two terms correspond to the same possibilities as in the previous case.
The factor $U_\alpha$
is similar to $I_\alpha$
in the third case for $C_i^k$,
but here we count the down free $\rho$-matchings
of the whole $k$th arc with both its corners,
hence we have $r+1$ instead of $r$ in the second factor.
\begin{figure}[htb]
  \centering
\includegraphics[scale=1]{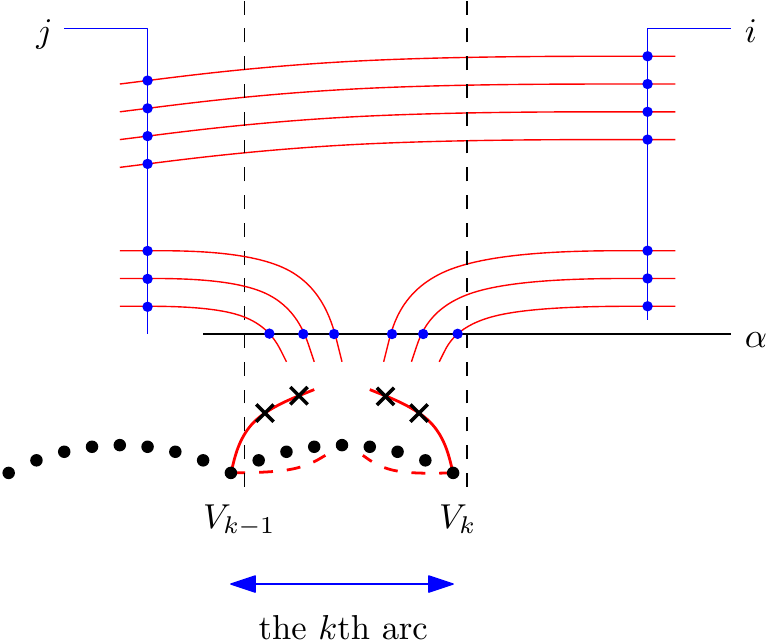}
  \caption{Case 3 in the recursion for $F^k_i$: $V_{k-1}$ has no runner
and $V_{k}$
is not connected to a point to the left of $V_{k-1}$.}
  \label{fig:f3}
\end{figure}
\end{enumerate}
$F^k_i$ is the sum of the
three expressions \thetag{\ref{eq:f1}--\ref{eq:f3}}.

\subsection{Analysis of the recursion}
The expressions above imply 
a coupled mutual recurrence between two sequences of vectors
$C^k = (C^k_0,C^k_1,C^k_2,\ldots)^\top$
and
$F^k =(F^k_0,F^k_1,F^k_2,\ldots)^\top$.
The initial values are
$C^0 = F^0=(1,0,0,\ldots)^\top$.
$C^k$ and $F^k$ are expressed in terms of $C^{k-1}$ and $F^{k-1}$ as
follows.
For $i\ge r$, we have:
\begin{equation}
  \label{eq:coupled}
\begin{aligned}
  C^k_i &=
  \sum_{\beta =-r}^r a^{CC}_{\beta } C^{k-1}_{i+\beta }
+
  \sum_{\beta =-r}^r a^{CF}_{\beta } F^{k-1}_{i+\beta }
\\
  F^k_i &=
  \sum_{\beta =-r}^r a^{FC}_{\beta } C^{k-1}_{i+\beta }
+
  \sum_{\beta =-r}^r a^{FF}_{\beta } F^{k-1}_{i+\beta },
\end{aligned}
\end{equation}
where the numbers $a^{CC}$, $a^{CF}$, $a^{FC}$, $a^{FF}$
are to be read out from the expressions in
Section~\ref{sec:corners_recursion}.
For the small indices $i<r$, we have irregularities, like
for $r$-chains without corners:
The coefficients
in \eqref{eq:coupled}
 must be replaced by smaller coefficients which
depend also on $i$.
In matrix notation, the recursion is written as
\begin{equation}
  \label{eq:coupled_matrices}
\begin{aligned}
  C^k &=
  A^{CC} C^{k-1}
+
  A^{CF} F^{k-1}
\\
  F^k &=
  A^{FC} C^{k-1}
+
  A^{FF} F^{k-1},
\end{aligned}
\end{equation}
where there are four band matrices $A^{CC}$, $A^{CF}$, $A^{FC}$,
$A^{FF}$ 
of bandwidth $r$, similar to the matrix $A$ from~\eqref{eq:Matrix-A}.

\begin{figure}[htb]
  \centering
  \includegraphics{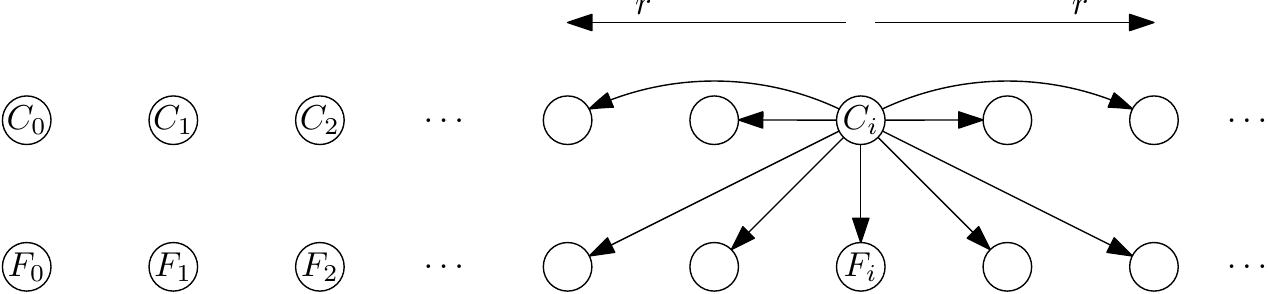}
  \caption{The recursion \eqref{eq:coupled}
 gives the number of paths on this network. The neighborhood of a
 typical vertex $C_i$ is shown in a schematic way.}
  \label{two-layer}
\end{figure}

This system can be interpreted as a set of lattice paths on a
two-layer lattice, see Figure~\ref{two-layer}.
We have a row of nodes
$C_0,C_1,C_2,\ldots$
and another row of nodes
$F_0,F_1,F_2,\ldots$ immediately below it.
The possible jumps and their multiplicity depend only on the row,
with irregularities close to the left edge.
%
(In this representation, the lattice paths considered in the proof of
Proposition~\ref{thm:rate_wocorners} in
Section~\ref{sec:r_chains_wo_analysis} correspond to walks on a ray
$0,1,2,\ldots$.  The $x$-coordinate of the two-dimensional lattice in
Section~\ref{sec:r_chains_wo_analysis} is now represented as time.

We are not able to provide as precise estimates for the growth
rate as for chains without corners, where we had a single recursion.
One would expect a similar behaviour. However, we can still pin
down the base of the exponential growth as an eigenvalue of an
associated $2\times2$ matrix.

First, we can get rid of the irregularities but cutting of the first
$r$ 
 rows and columns of the coefficient matrices. As for the
case of a single matrix, this does not affect the asymptotic growth.
We can now assume that the diagonals are constant, and
the recursion
 \eqref{eq:coupled} holds for all $i$, with the convention that
$ C^{k-1}_j$ and
$ F^{k-1}_j$ in the right-hand side
are taken as $0$ for $j<0$.

For better readability,
we will now replace the vectors $C^k$ and $F^k$ by more generic names
$x^k$ and $y^k$:
\begin{equation}
  \label{eq:coupled-xy}
\begin{aligned}
  x^k_i &=
  \sum_{\beta =-r}^r a^{XX}_{\beta } x^{k-1}_{i+\beta }
+
  \sum_{\beta =-r}^r a^{XY}_{\beta } y^{k-1}_{i+\beta }
\\
  y^k_i &=
  \sum_{\beta =-r}^r a^{YX}_{\beta } x^{k-1}_{i+\beta }
+
  \sum_{\beta =-r}^r a^{YY}_{\beta } y^{k-1}_{i+\beta },
\end{aligned}
\end{equation}
for all $i$, with the understanding that quantities 
$x^{k-1}_j$
and $y^{k-1}_j$
 with negative
subscripts $j$ on the right-hand side are regarded as zero.

Our analysis below is valid under the condition that the coefficients of
this recursion don't exhibit a tendency to favor larger or smaller
indices, or in other words, that the Markov chain associated to the
system does not systematically drift to the left or to the right.
(In the one-vector recursion analyzed in the proof of
Proposition~\ref{thm:rate_wocorners}, this no-drift condition was not
an issue because the set of moves was symmetric.)
To formulate this 
 condition precisely, we have to set up some
notation and establish
some terms.

Let us denote the coefficient sums in the terms of the recursion~\eqref{eq:coupled-xy} as follows:
$$
\bar A^{XX} =
  \sum_{\beta =-r}^r a^{XX}_{\beta }
,\ \bar A^{XY} =
  \sum_{\beta =-r}^r a^{XY}_{\beta }
,\ \bar A^{YX} =
  \sum_{\beta =-r}^r a^{YX}_{\beta }
,\ \bar A^{YY} =
  \sum_{\beta =-r}^r a^{YY}_{\beta }.
$$
These numbers are the 
 column sums of the coefficient matrices after stabilization.
 These sums form the \emph{condensed coefficient matrix}
\begin{equation}
  \label{transition}
  \begin{pmatrix}
\bar A^{XX}
&
 \bar A^{XY} \\
 \bar A^{YX} &
 \bar A^{YY}
  \end{pmatrix}.
\end{equation}
Let $M$ denote its dominant eigenvalue.
Let $(\rho_X,\rho_Y)$ be the corresponding left eigenvector 
and $(\pi_X,\pi_Y)^\top$ be the corresponding right eigenvector, 
with the normalization $\rho_X+\rho_Y=\pi_X+\pi_Y=1$.
Since the matrix is positive, these two vectors are positive.


We define the
\emph{total group-to-group jump sizes} of the system:
\begin{equation*}
D^{XX} =
  \sum_{\beta =-r}^r a^{XX}_{\beta } \beta
,\
D^{XY} =
  \sum_{\beta =-r}^r a^{XY}_{\beta } \beta
,\
D^{YX} =
  \sum_{\beta =-r}^r a^{YX}_{\beta } \beta
,\
D^{YY} =
  \sum_{\beta =-r}^r a^{YY}_{\beta } \beta.
\end{equation*}
The
\emph{weighted total jump size} $D$ of the system is then defined
as follows:
\begin{align}
  \label{total-jump}
D &
=
\rho_X
\pi_X
D^{XX}
+
\rho_X
\pi_Y
D^{XY}
+\rho_Y
\pi_X
D^{YX}
+\rho_Y
\pi_Y
D^{YY}
\\\nonumber
&
=
\begin{pmatrix}
  \rho_X & \rho_Y
\end{pmatrix}
  \begin{pmatrix}
 D^{XX}
&
 D^{XY} \\
 D^{YX} &
 D^{YY}
  \end{pmatrix}
\begin{pmatrix}
  \pi_X \\\pi_Y
\end{pmatrix}
\end{align}

Now we can state the main result of the analysis.
\begin{theorem} \label{arbitrary-row-sums}
Suppose the system
\eqref{eq:coupled-xy}
has non-negative coefficients,
and the {weighted total jump size} $D$
 is zero. 
Assume that the coefficients
$a^{XX}_\beta,a^{XY}_\beta,a^{YX}_\beta,a^{YY}_\beta$ are positive for
$\beta=-1,0,1$. 
Let $M$ be the dominant eigenvalue of the
{condensed coefficient matrix}~\eqref{transition}.
Then
$$x^k_0 = O(M^k),\
y^k_0 = O(M^k),$$
and
$$x^k_0 = \Omega((M-\eps)^k),\
y^k_0 = \Omega((M-\eps)^k)
$$
for every $\eps>0$.
\end{theorem}
Since the proof is quite substantial, we devote a separate section to it.

\subsection{Proof of the theorem
about mutually coupled recursions}

  We will transform the problem to a recursion in which the left
  eigenvector is $(\rho_X,\rho_Y)=
(1,1)$, 
and thus
the column sums of the coefficient matrix (after stabilization) are constant.
We achieve this by rescaling the vectors $x$ and $y$ to
 $\tilde
x^k_i=\rho_X
x^k_i$
and $\tilde
y^k_i=\rho_Y
y^k_i$.
Clearly, the asymptotic growth of $x$ and $y$ is unaffected by this
multiplication with a constant.
For these new vectors, the coefficients of the recursion
change to
$\tilde a^{XY}_{\beta } = \rho_X/\rho_Y \cdot
a^{XY}_{\beta }$
and
$\tilde a^{YX}_{\beta } = \rho_Y/\rho_X \cdot
a^{YX}_{\beta }$,
while
$\tilde a^{XX}_{\beta } = a^{XX}_{\beta }$,
$\tilde a^{YY}_{\beta } = a^{YY}_{\beta }$ are unchanged.
Consequently, the first column sum of the
{condensed coefficient matrix} \eqref{transition} becomes
$
\bar A^{XX}
+\rho_Y/\rho_X \cdot
 \bar A^{XY}
=
(\rho_X \cdot
\bar A^{XX}
+\rho_Y \cdot
 \bar A^{XY})
/\rho_X
=
(M\rho_X)
/\rho_X=
M$, and similarly for the second column.
Theorem~\ref{arbitrary-row-sums} 
follows therefore from the following theorem, 
which is a special case of Theorem~\ref{arbitrary-row-sums} with the
additional assumption that the
matrix \eqref{transition} has constant column sums.

\begin{theorem} \label{equal-row-sums}
Suppose the system
\eqref{eq:coupled-xy}
has non-negative coefficients
and constant column sums
\begin{equation}
  \label{row-sums}
M=\bar A^{XX}
+
 \bar A^{YX} =
 \bar A^{XY}
+
 \bar A^{YY}.
\end{equation}
%
Suppose that
\begin{equation}
  \label{total-jump-tilde}
\pi_X
 \left(
D^{XX}
+D^{YX}
\right)
+
\pi_Y \left(
D^{XY}
+D^{YY}
\right)
= 0,
\end{equation}
%
where $(\pi_X,\pi_Y)$ is a right eigenvector of
the matrix~\eqref{transition}
with eigenvalue $M$.
Suppose further that the coefficients
$a^{XX}_\beta,a^{XY}_\beta,a^{YX}_\beta,a^{YY}_\beta$ are positive for
$\beta=-1,0,1$. 
Then
$$x^k_0 = O(M^k),\
y^k_0 = O(M^k),$$
and
$$x^k_0 = \Omega((M-\eps)^k),\
y^k_0 = \Omega((M-\eps)^k)
$$
for every $\eps>0$.
\end{theorem}
Theorem~\ref{equal-row-sums}
is formulated in terms of the original recursion
\eqref{eq:coupled-xy}, but it must be applied to
 $\tilde x$
and $\tilde y$ instead of $x$ and $y$ in order to prove
 Theorem~\ref{arbitrary-row-sums}.
Our rescaling modifies group-to-group jump sizes in the same way as the
coefficients:
$\tilde D^{XY} = \rho_X/\rho_Y \cdot
D^{XY}$, etc.; the eigenvectors of the modified condensed coefficient
matrix
are
 $(\tilde \pi_X,\tilde \pi_Y)=(\rho_X\pi_X,\rho_Y\pi_Y)$ and
 $(\tilde \rho_X,\tilde \rho_Y)=(1,1)$ (without normalization), and
 with these substitutions, the condition that $D$ from
\eqref{total-jump} is zero translates into
\eqref{total-jump-tilde}, after erasing the tildes.
This concludes the proof of Theorem~\ref{arbitrary-row-sums}.
\qed

\begin{proof}[Proof of Theorem~\ref{equal-row-sums}]
The upper bound is easy: by summing all equations of
\eqref{eq:coupled-xy}, one sees
that $\sum_{i\ge0} x_i^k + \sum_{i\ge0} y_i^k$ can grow at most by the factor $M$ in
each iteration, since the column sums of the coefficient matrix are
bounded by $M$. It follows that
$x_0^k,y_0^k\le\sum_{i
} x_i^k + \sum_{i
} y_i^k
\le M^k(\sum_{i
} x_i^0 + \sum_{i
} y_i^0)
=2 M^k$.

Let us now turn to the lower bound:
To have a compact notation for the linear operator expressing in the
recursion~\eqref{eq:coupled-xy}, we denote it by $\phi$:
\[(x^k,y^k)= \phi(x^{k-1},y^{k-1})\]
As an intermediate lemma, we will show that any ``sub-eigenvector'' with
eigenvalue $\lambda$ is enough for a lower bound on the growth.
\begin{lemma}\label{sub-eigenvector}
  Suppose there is a pair of non-negative non-zero vectors $\bar x$ and
  $\bar y$ with finitely many non-zero elements
such that the inequality
\begin{equation}
  \label{super}
 \phi(\bar x,\bar y) \ge \lambda \cdot(\bar x,\bar y)
\end{equation}
holds componentwise for some $\lambda>0$.
Then there is a constant $K>0$ such that
$x^n_0,y^n_0 \ge K\lambda^n$ for all $n \in \mathbb{N}$.
\end{lemma}
\begin{proof}
Since $\phi$ is a monotone operator,
the inequality \eqref{super} remains fulfilled if we apply $\phi$ on the vectors arbitrary many times:
\begin{equation*}
 \phi^{k+1}(\bar x,\bar y) \ge \lambda \cdot\phi^k(\bar x,\bar y)
\end{equation*}
Applying $\phi$ on $(\bar x,\bar y)$ sufficiently many times,
we eventually obtain a vector whose 
components $\bar x_0$ and $\bar y_0$
 are positive, since 
 the coefficients
$a^{XX}_1,a^{XY}_1,a^{YX}_1,a^{YY}_1$ are positive by assumption.
%
Moreover, by scaling we can obtain a vector in which these components are bigger than $1$.
Thus, we can assume that 
 $\bar x_0\ge1$ and $\bar y_0\ge1$.

Now, we find $n_1$ and $K$ such that the following inequality holds componentwise
for $n=n_1$:
\begin{equation}
  \label{induction}
(x^n,y^n) \ge K\lambda^n\cdot (\bar x,\bar y)
\end{equation}
To see that this is possible, we use the assumption that
$a^{XX}_\beta,a^{XY}_\beta,a^{YX}_\beta,a^{YY}_\beta$ are positive for
$\beta=0$ and $\beta=-1$.
Thus,
by making $n_1$ big enough, we can ensure that
$(x^{n_1},y^{n_1})$ has positive components wherever $(\bar x,\bar y)$ has
 positive components. We can then fulfill
\eqref{induction} by choosing $K$ small enough.

The inequality \eqref{induction} carries over to all larger
$n$ by induction, using
monotonicity of the operator $\phi$ and the assumption~\eqref{super}.
Since $\bar x_0\ge1$ and $\bar y_0\ge1$,
 the desired inequalities
follow from \eqref{induction}
 for all $n\ge n_1$.
Finally, for the finitely many values $n<n_1$, we can fulfill the inequalities
$x^n_0,y^n_0 \ge K\lambda^n$ by decreasing $K$ if necessary.
\end{proof}

Let us explain the idea for getting ``sub-eigenvectors'' $\bar x$ and
$\bar y$ for Lemma~\ref{sub-eigenvector}.  If we wish to fulfill
\eqref{super} for $\lambda=M$, vectors $\bar x$ and $\bar y$ with
constant entries will do the job. However, they have infinitely many
non-zero entries. Thus, we aim for a smaller $\lambda=M-\eps$, and we
make an \emph{ansatz} where the entries are determined by a concave
quadratic function. This has to be adjusted later because the vectors
have to be non-negative, and because the recursion
\eqref{eq:coupled-xy} has some irregularities for the small values
$i<r$.  Moreover, the two coupled sequences $\bar x$ and $\bar y$
depend on each other in a non-symmetric way, and therefore we cannot
use the same quadratic function for both sequences. They have to be
scaled differently, and shifted horizontally relative to each
other. 

We define the shift constant
\begin{equation*}
  \delta =
\frac
{\pi_X D^{XX}+\pi_Y D^{XY}}
{-\pi_Y A^{XY}}
=
\frac
{-(\pi_X D^{YX}+\pi_Y D^{YY})}
{\pi_Y (A^{YY}-M)}
.
\end{equation*}
In this definition, equality of the numerators follows from
the assumption
\eqref{total-jump-tilde}, which expresses
 that
the {weighted total jump size} 
is zero. 
The denominators are equal because the column sums are $M$~\eqref{row-sums}.


We take some real parameters $p$ and $s$, to be determined later, and
define 
the quadratic functions $h_X$ and $h_Y$ 
and two auxiliary vectors  $\hat x$ and $\hat y$ as follows:
  \begin{align}
\label{functions-h}
h_X(i) &= \pi_X(p-i^2) \\
\label{functions-h2}
h_Y(i) &= \pi_Y(p-(i+\delta)^2)\\
\nonumber
\hat x_i&=h_X(i-s)\\
\nonumber
\hat y_i&=h_Y(i-s)
  \end{align}
for all $i\in \mathbb{Z}$. The two quadratic functions have their
peaks at $i=0$ and $i=-\delta$, with respective values $p\pi_X$ and
$p\pi_Y$. These function are shifted to the right by $s$
before they are used as entries of $\hat x$ and $\hat y$.  The
following lemma expresses the difference of these vectors from being
an eigenvector with eigenvalue $M$.

\begin{lemma}\label{constant}
Each of the two expressions
\begin{align}
\label{c}
Q_X&=M \cdot\hat  x_i -\left(
  \sum_{\beta =-r}^r a^{XX}_{\beta } \hat  x_{i+\beta }
+
  \sum_{\beta =-r}^r a^{XY}_{\beta } \hat  y_{i+\beta }
\right),
\\
\label{f}
Q_Y  &=M \cdot\hat  y_i -\left(
  \sum_{\beta =-r}^r a^{YX}_{\beta } \hat  x_{i+\beta }
+
  \sum_{\beta =-r}^r a^{YY}_{\beta } \hat  y_{i+\beta }
\right)
\end{align}
has a constant value, independent of $i$, $p$, and $s$.
\end{lemma}

\begin{proof}
First, we replace the quadratic function in each of the summation
terms
by a Taylor series around the weighted
average point.
The linear terms will then cancel, and the quadratic terms have a
constant value.
 We carry this out by way of example for the
sum of the $a^{XX}$ terms. The parameters $i$ and $s$ always occur
together in the combined term $i-s$, and thus we express our terms
in terms of the parameter $t := i-s$.
$$
  \sum_{\beta =-r}^r a^{XX}_{\beta } \hat  x_{i+\beta }
=
  \sum_{\beta =-r}^r a^{XX}_{\beta } h_X(i+\beta-s)
=
  \sum_{\beta =-r}^r a^{XX}_{\beta } h_X(t+\beta)
$$
Let
\begin{equation*}\nonumber
\bar  D^{XX} =\frac
 { \sum\limits_{\beta =-r}^r a^{XX}_{\beta } \beta}
 { \sum\limits_{\beta =-r}^r a^{XX}_{\beta }}
=\frac
{ D^{XX}}
{ A^{XX}}
\end{equation*}
denote the \emph{average jump size} from group $X$ to group $X$.
Then we rewrite $h_X$ as a Taylor series in the point 
$t+\bar  D^{XX}$.
\begin{equation*}
  h_X(t+x) = h_X(t+\bar  D^{XX})+
h'_X(t+\bar  D^{XX})(x-\bar  D^{XX}) -\pi_X(x-\bar  D^{XX})^2
\end{equation*}
We get
\begin{align*}
 & \sum_{\beta =-r}^r a^{XX}_\beta h_X(t+\beta)   \\ 
& =
h_X(t+\bar  D^{XX})\sum_{\beta} a^{XX}_\beta 
 + 
h'_X(t+\bar  D^{XX})
\sum_\beta a^{XX}_\beta(\beta-\bar  D^{XX})
 -\pi_X\sum_\beta a^{XX}_\beta(\beta-\bar  D^{XX})^2
 \\
& =
h_X(t+\bar  D^{XX})A^{XX} + h'_X(t+\bar  D^{XX})
\cdot0 - \mathrm{const.}
\end{align*}
We transform the other sum in
the expression \eqref{c} analogously, using
 the {average jump size} $\bar  D^{XY}$,
 and we rewrite~\eqref{c} as follows:
\begin{align*}
Q_X &=
M\cdot h_X(t)-
h_X(t+\bar  D^{XX})A^{XX}-
h_Y(t+\bar  D^{XY})A^{XY}
+ \mathrm{const}\\
&=
M\cdot \pi_X(p-t^2)-
\pi_X\bigl(p-(t+\bar  D^{XX})^2\bigr)A^{XX}-
\pi_Y\bigl(p-(t+\bar  D^{XY}+\delta)^2\bigr)A^{XY}
+ \mathrm{const}\\
&=
(p-t^2)(\pi_XM-
\pi_XA^{XX}
-\pi_YA^{XY})
\\&\qquad
+2t(\pi_X \bar  D^{XX}A^{XX}+
\pi_Y \bar  D^{XY}A^{XY}
+\pi_Y \delta A^{XY})
+ \mathrm{const}
\end{align*}
The coefficient of $(p-t^2)$ is zero because $(\pi_X,\pi_Y)$ is an eigenvector,
and the coefficient of $t$ is zero by the definition of $\delta$.
Thus, the expression $Q_X$ has a constant value, as claimed.

For the expression \eqref{f}, the calculation is slightly different.
\begin{align*}
Q_Y &=
M\cdot h_Y(t)-
h_X(t+\bar  D^{XY})A^{XY}-
h_Y(t+\bar  D^{YY})A^{YY}
+ \mathrm{const}\\
&=
M\cdot \pi_Y\bigl(p-(t+\delta)^2\bigr)-
\pi_X\bigl(p-(t+\bar  D^{XY})^2)\bigr)A^{XY}-
\pi_Y\bigl(p-(t+\bar  D^{YY}+\delta)^2\bigr)A^{YY}
+ \mathrm{const}\\
&=
(p-t^2)(\pi_YM-
\pi_XA^{XX}
-\pi_YA^{XY})
\\
&\qquad
+2t(-\pi_Y \delta M+
\pi_X \bar  D^{XY}A^{XY}+
\pi_Y \bar  D^{YY}A^{YY}
+\pi_Y \delta A^{YY})
+ \mathrm{const}
\end{align*}
The coefficients of $(p-t^2)$ and $t$ vanish for the same reasons as
above. This concludes the proof of the lemma.
\end{proof}







The vectors $\hat x$ and $\hat y$ have negative values. 
To get our desired vectors $\bar x$ and $\bar y$, we will clip these
values to~0.
We determine the parameters $p$ and $s$ in such a way that
the resulting vectors $\bar x$ and $\bar y$ start with a big jump from
0 to a positive value, big enough to accommodate the
``perturbation'' resulting from modifying the negative values to 0.
Let $\eps>0$ be given, and
let $K := \max\{Q_X,Q_Y\}$ be the maximum of $Q_X$ and $Q_Y$.
($K$ is positive, but the ensuing argument does not depend on this fact.)
We look at the sorted set of values
$$
\{\,i^2\mid i\in \mathbb{Z}\,\}
\cup
\{\,(i-\delta)^2\mid i\in \mathbb{Z}\,\}
$$
and find $p$ as a positive value in this set such that the gap to the largest value which is
smaller
than $p$ is at least $K/(\eps\min\{\pi_1,\pi_2\})$.
Since the functions are quadratic, there must be larger and larger
gaps as the numbers get bigger, and therefore such a value $p$ exists.
For the functions $h_X
(i) = \pi_X(p-i^2)$ and $h_Y(i) = \pi_Y(p-(i+\delta)^2)$
in~\thetag{\ref{functions-h}--\ref{functions-h2}}, this implies
that the smallest positive value in their range is at least
 $K/\eps$.
Now we shift the functions horizontally
such that positive values occur only at positive arguments,
by choosing $s\ge \sqrt p\pm\delta$.
Finally, we clip the negative values and define
$$
\bar x_i=\max\{h_X(i-s),0\},\
\bar y_i=\max\{h_Y(i-s),0\}
,
$$
for all $i\in \mathbb{Z}$.  This will set $\bar x_i= \bar y_i=0$ for
$i<0$, in accordance with the interpretation that is given in
\eqref{eq:coupled-xy} when these values appear on the right-hand side.

We will show that
\begin{equation}
  \label{bigger}
 \phi(\bar x,\bar y) \ge (M-\eps) \cdot(\bar x,\bar y),
\end{equation}
thus establishing condition \eqref{super} and proving the lower bound
of the theorem with the help of Lemma~\ref{sub-eigenvector}.

In concrete terms, our desired relation
\eqref{bigger} looks as follows:
\begin{align}
  \label{concrete-C}
  (M-\eps) \cdot\bar x_i &\le
  \sum_{\beta =-r}^r a^{XX}_{\beta } \bar x_{i+\beta }
+
  \sum_{\beta =-r}^r a^{XY}_{\beta } \bar y_{i+\beta }
\\
  \label{concrete-F}
  (M-\eps) \cdot\bar y_i &\le
  \sum_{\beta =-r}^r a^{YX}_{\beta } \bar x_{i+\beta }
+
  \sum_{\beta =-r}^r a^{YY}_{\beta } \bar y_{i+\beta }
\end{align}
We concentrate on the first inequality \eqref{concrete-C}.
When $\bar x_i$ 
is 0, the inequality is
trivially fulfilled.
Thus, we can restrict ourselves to the
case when $\bar x_i>0$, and hence
$\bar x_i=\hat x_i$.
 If we set $\eps=0$ and replace
$(\bar x,\bar y)$ by
$(\hat x,\hat y)$ everywhere,
the difference between the two
sides of \eqref{concrete-C} is the quantity $Q_X$ in
 Lemma~\ref{constant}, and hence it is
 bounded by $K$.
Going back from
$(\hat x,\hat y)$ to
$(\bar x,\bar y)$ cannot make the right-hand side smaller.
Thus we are done if we prove that the ``slack term''
term $\eps \cdot\bar x_i$ is at least $K$. This
is true by construction, since the non-zero values of $\bar x_i$ are
at least $K/\eps$.
The other inequality \eqref{concrete-F} follows similarly.


This concludes the proof of the lower bound.
\end{proof}

The theorem can be extended to more than two coupled recursive
sequences.
Then we need a separate parameter $\delta$ for each function in
\thetag{\ref{functions-h}--\ref{functions-h2}}
These parameters must be determined from a system of equations,
and the no-drift condition ensures that this system has a solution.

The technical condition of
Theorems~\ref{arbitrary-row-sums} and \ref{equal-row-sums}
 that certain coefficients are non-negative have the purpose to exclude
 periodicity and can be replaced by weaker conditions.


\subsection{Asymptotic growth constants}
\label{result}

We apply
Theorem~\ref{arbitrary-row-sums}
to the recursion
describing the $r$-chain with corners.
It is straightforward to compute the $2\times2$
 {condensed coefficient matrix}  \eqref{transition}
with a computer
by accumulating all terms derived
in Section~\ref{sec:corners_recursion}, and to compute
its dominant eigenvalue.
Since $n=rk+1$, the growth constant $T_r$ in terms of $n$ is $r$-th root of
this eigenvector. We observe the same phenomenon as for chains without
corners, see the right-most column of Table~\ref{tab:lambda}:
The values increase to some maximum, and then the taper off and
converge to $3$ as $r$ increases further.
The first two entries in the table reproduce the results for the
double-chain (the condensed coefficient matrix is
$\bigl(
\begin{smallmatrix}
1&1 \\ 2&2
\end{smallmatrix}
\bigr)$
which gives
$T_1=3$) and the double zigzag chain
(the condensed coefficient matrix is
$\bigl(
\begin{smallmatrix}
3&3 \\ 7&6
\end{smallmatrix}
\bigr)$
which gives
$T_2\approx  3.0532$).
We observe that the maximum is achieved for
the 8-chain with corners ($r=8$).

To establish this bound rigorously as a lower bound, we have to check the
conditions of
Theorem~\ref{arbitrary-row-sums}.
It is easy to check that the
coefficients $a^{CC}_\beta,a^{CF}_\beta,a^{FC}_\beta,a^{FF}_\beta$ are
indeed positive for
$\beta=-1,0,1$. 
The {condensed coefficient matrix}  \eqref{transition} is
\begin{equation}
\label{condensed-CF}
  \begin{pmatrix}
\bar A^{CC}
&
 \bar A^{CF} \\
 \bar A^{FC} &
 \bar A^{FF}
  \end{pmatrix}
=
\begin{pmatrix}
 2885&2619\\
 6022&  5504
\end{pmatrix}.
\end{equation}
Its dominant eigenvalue is
$ M=(8389+\sqrt{69945633}\,)/2\approx 8376.175$,
with corresponding left (unnormalized) eigenvector
 $(\rho_X,\rho_Y)= (6022,M-2885)$ and right eigenvector
 $(\pi_X,\pi_Y)^\top = (2619,M-2885)^\top$.
The matrix of total group-to-group jump sizes is
\begin{equation*}
  \begin{pmatrix}
 D^{CC}
&
 D^{CF} \\
 D^{FC} &
 D^{FF}
  \end{pmatrix}
=
\begin{pmatrix}
-2619&0\\
-2619&  2619
\end{pmatrix}.
\end{equation*}
Weighting these numbers with the eigenvectors
and summing them up~\eqref{total-jump} yields that
the {weighted total jump size} $D$ is zero.
The conditions of Theorem~\ref{arbitrary-row-sums} are thus fulfilled.

An intuitive explanation of the equality $D=0$ might be as follows.
The recursion between the two vectors $C^k$ and $F^k$ is not
symmetric, as witnessed, for example, by the non-symmetric condensed matrix
\eqref{condensed-CF}. This asymmetry comes from the arbitrary
decision to cut the construction to the \emph{right} of each corner
point.
However,
on the whole, this irregularity should not cause a systematic
``drift'' in
the recursion, which would favor a tendency towards larger or smaller
numbers $i$ of unfinished runners crossing the cut.
Thus, it is not surprising that $D=0$.
We expect that $D=0$ should hold for all $r$, but we have only checked
it numerically for small values of $r$, and we have established
it rigorously only for the concrete case $r=8$.

By Theorem~\ref{arbitrary-row-sums},
 the sequence $F^k_0$ grows at most like $M^k$ and
 at least like $(M-\eps)^k$, for any
$\eps>0$.
Since $n=8k+1$, the growth constant in terms of $n$ is
$T_8=\sqrt[8]M
\approx 3.093005695
$.
\begin{corollary}
  The $8$-chains with corners have
$O(T_8^n)$ and
$\Omega ((T_8-\eps)^n)$
down-free matchings, for every $\eps>0$.
\qed
\end{corollary}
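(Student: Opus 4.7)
The plan is to read the corollary off almost immediately from the machinery already in place. By definition $F^k_0$ counts the down-free $\rho$-matchings of $\mathrm{CH}(8,k)$ that have no runners and in which the right corner $V_k$ is also unmatched, which is exactly the number of down-free matchings of the $8$-chain with corners and $k$ arcs. The coupled recursion for $(C^k_i, F^k_i)$ derived in Section~\ref{sec:corners_recursion}, specialized to $r = 8$, fits the template required by Theorem~\ref{arbitrary-row-sums}, so the plan is to invoke that theorem and then translate the asymptotics from $k$ to $n = 8k + 1$.

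First I would formally check that the hypotheses of Theorem~\ref{arbitrary-row-sums} hold. The coefficient matrices derived from \thetag{\ref{eq:c1}--\ref{eq:c3}} and \thetag{\ref{eq:f1}--\ref{eq:f3}} are non-negative because each entry is an explicit sum of products of $Z_\alpha, W_\alpha, I_\alpha, U_\alpha$, and each of these is a product of positive binomial and central-binomial factors. Positivity of $a^{CC}_\beta, a^{CF}_\beta, a^{FC}_\beta, a^{FF}_\beta$ for $\beta \in \{-1, 0, 1\}$ follows by inspection (all the contributing cases are realizable for $r = 8$). The condensed coefficient matrix \eqref{condensed-CF} and its group-to-group jump matrix have been computed explicitly; the dominant eigenvalue is $M = (8389 + \sqrt{69945633}\,)/2$, with the stated left and right eigenvectors.

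The main obstacle is the drift condition $D = 0$, which is the only place where something genuinely has to be checked rather than read off. The plan is to weight the entries of the jump matrix $(D^{XY})$ by the components of the eigenvectors $(\rho_X, \rho_Y)$ and $(\pi_X, \pi_Y)$ according to~\eqref{total-jump}, and to verify by direct substitution of the numerical values that the resulting scalar vanishes. With this in hand, Theorem~\ref{arbitrary-row-sums} yields $F^k_0 = O(M^k)$ and $F^k_0 = \Omega((M - \delta)^k)$ for every $\delta > 0$.

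Finally I would pass from $k$ to $n$. Since $\mathrm{CH}(8,k)$ has $n = 8k + 1$ points and $T_8 = M^{1/8}$, we have $M^k = T_8^{8k} = T_8^{n-1}$, so the upper bound is $F^k_0 = O(T_8^{n-1}) = O(T_8^n)$. For the lower bound, given $\eps > 0$ we pick $\delta > 0$ small enough that $(M - \delta)^{1/8} \ge T_8 - \eps$; then $(M - \delta)^k \ge (T_8 - \eps)^{8k} = (T_8 - \eps)^{n-1}$, which is $\Omega((T_8 - \eps)^n)$ after absorbing the constant $(T_8 - \eps)^{-1}$. This completes the argument; no computation beyond the verification of $D = 0$ and the algebraic identity $n = 8k + 1$ is required.
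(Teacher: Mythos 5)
Your proposal is correct and follows essentially the same route as the paper: check the hypotheses of Theorem~\ref{arbitrary-row-sums} for the coupled $(C^k_i,F^k_i)$ recursion at $r=8$ (non-negativity, positivity of the coefficients for $\beta=-1,0,1$, the condensed matrix \eqref{condensed-CF} with dominant eigenvalue $M$, and the vanishing weighted drift $D=0$ computed from the left and right eigenvectors), conclude $F^k_0=O(M^k)$ and $\Omega((M-\delta)^k)$, and translate to $n$ via $n=8k+1$ and $T_8=M^{1/8}$. One small slip: $F^k_0$ requires that $V_k$ carries no \emph{runner}, not that $V_k$ is unmatched---with zero runners this condition is automatic, so $F^k_0$ is indeed exactly the number of down-free matchings of $\mathrm{CH}(8,k)$, as you then correctly use.
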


This implies Theorem~\ref{thm:main_corners}
with the help of Theorem~\ref{thm:double}.

Numerical data
suggest the more precise estimate
$F^k_0=M^k/k^{3/2}(c_0+c_1/k+O(1/k^2))$ with $c_0\approx 0.1321$ and
$c_1\approx-0.102$.
This has been computed by
Moritz Firsching (personal communication)
by interpolation
from the elements
$F^{785}_0,F^{786}_0,F^{787}_0,\dots,F^{1000}_0$,
assuming that the sequence has the asymptotic form
$F^k_0=C^k/k^\alpha(c_0+c_¹/k+c_2/k^2+\cdots)$.
This method has predicted more than 300 correct decimal digits of
$C=M$,
and it gives $\alpha=3/2$ also to a precision of
 more than 300 digits.
By comparison,
for
the sequence  $a_k$
of down-free matching numbers of the zigzag-chain
(Section~\ref{zigzag-chains}),
for which the explicit generating function and hence the form of the asymptotic
growth is known,
 the same method gave estimates that
 were accurate also to more than 300 digits, both
regarding the growth constant
$1/\mu = (\sqrt{93}+ 9)/2$
and the power
 $\alpha=3/2$ of the polynomial factor.

The asymptotic growth of the form
$F^k_0=c_0M^k/k^{3/2}(1+o(1))$ is not unexpected; it is
in accordance with the behaviour of $r$-chains without runners, which
has been derived in the proof of
Proposition~\ref{thm:rate_wocorners}
(Section~\ref{sec:recursion-without}) by
 lattice path method~\cite[Theorem 3]{ban}.

\section{Concluding remarks}
\subsection{Table of results for $\mathsf{pm}$, $\mathsf{dfm}$ and $\mathsf{am}$}
In Table~\ref{tab:conclusion} we summarize asymptotic bounds
on different structures
for three kinds of matchings considered in this paper --
$\mathsf{pm}$, $\mathsf{dfm}$ and $\mathsf{am}$.
Some of them do not follow from results proven or mentioned in this paper,
and we explain them below.
First we want to point out some observations that can be seen in the table.

Obviously, $\mathsf{pm} (X_n) \leq \mathsf{dfm} (X_n) \leq \mathsf{am} (X_n)$,
but is $\mathsf{dfm} (X_n)$ more likely to behave similarly to $\mathsf{pm} (X_n)$ or to $\mathsf{am} (X_n)$?
Table~\ref{tab:conclusion} shows that different possibilities exist.
For a downward chain $\mathrm{SC}_n$,
every matching is down-free and thus $\mathsf{dfm} (\mathrm{SC}_n) = \mathsf{am} (X_n)$,
but for an upward chain $\mathsf{dfm}$ is equal, up to a polynomial factor, to the lower bound.
For $\mathrm{SZZC}_n$, the three growth constants are all different,
but the intermediate basis for $\mathsf{dfm}$ is closer to the upper
bound.
However for $r$-chains without corners, as $r$ grows,
the growth rate for $\mathsf{pm}$ and $\mathsf{dfm}$ tends to the same constant, $3$,
from below and from above respectively;
whereas that for $\mathsf{am}$ tends to $4$.

\begin{table}[h!]
\renewcommand{\arraystretch}{1.15}
\begin{center}
\begin{tabular}{|c||c|c|c|}
\hline
$X_n$ & $\mathsf{pm} (X_n)$ & $\mathsf{dfm} (X_n)$ & $\mathsf{am} (X_n)$  \\ \hline\hline
$\mathrm{SC}_n$ upside down & $C_{n/2}=\Theta^*(2^n)$ & $\binom{n}{\lfloor n/2 \rfloor}=\Theta^*(2^n)$ & $M_n=\Theta^*(3^n)$ \\ \hline
$\mathrm{SC}_n$
  & $C_{n/2}=\Theta^*(2^n)$ & $M_n=\Theta^*(3^n)$ & $M_n=\Theta^*(3^n)$ \\ \hline
$\mathrm{SZZC}_n$ 
& $\Theta^*(2.1974^n)$ & $\Theta^*(3.0532^n)$ & $\Theta^*(3.1022^n)$ \\ \hline
$\mathrm{CH}^*(11,n/11)$&
$\Theta^*(2.5517^n)$
& $\Theta^*(3.0840^n)$&
$\Theta^*(3.4614^n)$\\\hline
$\mathrm{CH}^*(r,n/r)$, $r \to \infty$ &
$\Theta^*(\alpha^n)$, $\alpha \nearrow 3$ &
$\Theta^*(\beta^n)$, $\beta \searrow 3$ &
$\Theta^*(\gamma^n)$, $\gamma \nearrow 4$  \\ \hline
$\mathrm{CH}(8,(n-1)/8)$&& $\Theta^*(3.0930^n)$&\\\hline
$\mathrm{CH}(r,(n-1)/r)$, $r \to \infty$ &
&
$\Theta^*(\delta^n)$, $\delta \searrow 3$ ? & \\ \hline
$\mathrm{DC}_n$ & $\Theta^*(3^n)$ & $?$ & $\Theta^*(4^n)$ \\ \hline
\end{tabular}
\end{center}
\caption{$\mathsf{pm}$, $\mathsf{dfm}$, $\mathsf{am}$ for several structures.}
\label{tab:conclusion}
\end{table}

Now we describe the entries of the table.
The first two lines are classical results, except for
the formula
$\mathsf{dfm} 
 = \binom{n}{\lfloor n/2 \rfloor}$ for an upward
chain, 
which has been proved 
in Proposition~\ref{thm:binom_gf}.

The estimate
$\mathsf{pm} (\mathrm{SZZC}_n)=\Theta^*(2.1974^n)$ from~\cite{aich1}
was mentioned in Section~\ref{sec:dzzc_pm}.
Actually, it was the fact that \textsf{pm} increases from SC to SZZC which
initially prompted us to try whether the old record of the double
structure DC could be beaten by the corresponding double structure
DZZC.
The formula $\mathsf{dfm}(\mathrm{SZZC}_n)=\Theta^*(3.0532^n)$ is
the main result of Section~\ref{SZZC}.
The estimate
$\mathsf{am} (\mathrm{SZZC}_n)=\Theta^*(3.1022^n)$ can be
derived in a similar way,
by adding
an appropriate term to the recursion~\eqref{eq:a}
for $a_k$:
the only difference is that when $P_1$ is matched to $P_3$,
the point $P_2$ can be 
free.
The singularity closest to $0$ of the resulting generating functions occurs now in
$(\sqrt{105}-9)/12$, one of the roots of
$1-9x-6x^2$.
Thus, in this case the base is $\sqrt{
\raise3pt\hbox{\vphantom X}
\raise-1pt\hbox{\vphantom X}
\smash
{12/
(\sqrt{105}-9)}} \approx
3.1022$.

 For $r$-chains without corners, $\mathrm{CH}^*(r, k)$,
the growth rate for $\mathsf{dfm}$ has been determined
in
Section~\ref{sec:r_chains_wo_analysis}, and, as was discussed
in Section~\ref{sec:recursion-without}, it
converges to 3 from above as $r\to\infty$.
The other entries in the line for
 $\mathrm{CH}^*$ can be obtained by modifying the analysis of
Section~\ref{sec:r_chains_wo_analysis};
we only need to replace appropriately
 in the formula for
$z_i^1$ in
Proposition~\ref{thm:binom_gf}
 the factor $\binom{r-i}{\lfloor
  (r-i)/2 \rfloor}$, representing the number of down-free matching
on an arc of $r-i$ points.
For
 $\mathsf{pm}$, we have to replace it by the Catalan number
 $C_{(r-i)/2}$ when $r-i$ is
even and by $0$ when $r-i$ is odd;
for $\mathsf{am}$, we replace it by the Motzkin number $M_{r-i}$.
The row sums of the recursion matrix 
can be obtained by plugging these modified expressions for
$z_i^1$ into~\eqref{eq:lambda-r}.
 For $\mathsf{pm}$,
the resulting sequence of row sums
is the sequence A189912
from~\cite{oeis},
and for $\mathsf{am}$,
it is the sequence A077587 (we omit the proofs).
From the asymptotic behavior of these sequences it follows that
their $r$-th roots, which are
the growth constants, converge to $3$ and $4$ from below.

The growth rate for \textsf{dfm} for $r$-chains with corners, $\mathrm{CH}(r, k)$, was treated in
Section~\ref{sec:with}.
Empirically, they seem to be better than $r$-chains without corners.
The monotone convergence to 3 from above is not proved.
It seems plausible that the difference between
$r$-chains with corners and $r$-chains without corners should become
negligible as $r\to\infty$, and therefore the growth constant should
converge to the same constant~3. That the convergence
should be monotonically decreasing
 is
only based on the empirical observation from
Table~\ref{tab:lambda}. We have not extended the analysis to
\textsf{pm}
and \textsf{am}, although this would be feasible with some effort. We
expect that the results would be the same as for $r$-chains without corners.

The formula
$\mathsf{pm} (\mathrm{DC}_n)=\Theta^*(3^n)$ is the classical result of
Garc\'ia, Noy, and Tejel~\cite{garcia}, in accordance with
$\mathsf{dfm} (\mathrm{SC}_n)=\Theta^*(3^n)$ from the first line.
The estimate
$\mathsf{am} (\mathrm{DC}_n)=\Theta^*(4^n)$ is due to Sharir and Welzl~\cite{sw},
and it is currently the best lower bound
on the maximum number of $\mathsf{am}$.
The growth of
$\mathsf{dfm} (\mathrm{DC}_n)$ remains unknown, but it is
$\Omega^*(3^n)$ and $O^*(4^n)$.

\medskip

We see no reason to think
that our best construction $\mathrm{CH}(8,k)$ is optimal in the
sense that it has the maximal possible $\mathsf{dfm}$
and/or
that the corresponding double construction has the maximal possible $\mathsf{pm}$.
Sets with asymptotically higher bounds 
 may very well be more
complicated --- both in terms of their description and their analysis.
An obvious continuation from single chains to $r$-chains would be to
insert a third level of downward arcs between the vertices of
$r$-chains, possibly continuing towards a fractal-like pattern. We
have not attempted to analyze these structures.

\subsection{Summary and Outlook}

We have found new constructions of point sets with a larger number of perfect
matchings than previously known.
More importantly, we show that, like for triangulations,
the true bound for perfect matchings is not given by
the double chain.
For the analysis of these sets, the notion of down-free matchings was
crucial. It allowed us to concentrate on one half of a double-construction.

We have also shown that methods from analytic combinatorics are useful for
counting problems for geometric plane graphs. However, the 
results from analytic combinatorics
{that we are aware of}
cannot be readily applied for
$r$-chains with corners. In this case, the analysis leads to coupled
recursions involving two sets of variables.
For
these recursions, we had to develop our
own methods. These somewhat pedestrian methods give the growth rate
only up to an arbitrarily small error $\eps$. We hope that that the
methods of analytic combinatorics will be further developed to
encompass such cases as well.

\paragraph*{Acknowledgements.}

Research on this paper was initiated during the
 \href{http://www.eurogiga-compose.eu/}{EuroGIGA} Final Conference
in February 2014 in Berlin.
We thank Oswin Aichholzer and Moritz Firsching for helpful discussions
and computations that supported our conjectures.
We thank Guillaume Chapuy for providing the reference to the work of
Banderier and Flajolet~\cite{ban}, and we thank Heuna Kim and Nevena
Pali\'c for suggestions to improve the presentation.

\end{document}